\newcommand{\arxiv}[1]{\href{http://arxiv.org/abs/#1}{arXiv:#1}}
\newcommand*{\mailto}[1]{\href{mailto:#1}{\nolinkurl{#1}}}
\newtheorem{theorem}{Theorem}[section]
\newtheorem{RHproblem}{RH problem}
\newtheorem*{modelRHP}{Model RH problem}
\newtheorem{lemma}[theorem]{Lemma}
\newtheorem{corollary}[theorem]{Corollary}
\theoremstyle{definition}
\newtheorem{remark}[theorem]{Remark}
\newcommand{\R}{\mathbb{R}}
\newcommand{\Z}{\mathbb{Z}}
\newcommand{\C}{\mathbb{C}}
\newcommand{\T}{\mathbb{T}}
\newcommand{\M}{\mathbb{M}}
\newcommand{\X}{\mathbb{X}}
\newcommand{\nn}{\nonumber}
\newcommand{\be}{\begin{equation}}
\newcommand{\ee}{\end{equation}}
\newcommand{\beq}{\begin{equation}}
\newcommand{\eeq}{\end{equation}}
\newcommand{\bea}{\begin{eqnarray}}
\newcommand{\eea}{\end{eqnarray}}
\newcommand{\ol}{\overline}
\newcommand{\pa}{\partial}
\newcommand{\ti}{\tilde}
\newcommand{\id}{\mathbb{I}}
\newcommand{\I}{\mathrm{i}}
\newcommand{\E}{\mathrm{e}}
\newcommand{\clos}{\mathop{\mathrm{clos}}}
\newcommand{\re}{\mathop{\mathrm{Re}}}
\newcommand{\im}{\mathop{\mathrm{Im}}}
\DeclareMathOperator{\Mod}{mod}
\DeclareMathOperator{\res}{Res}
\newcommand{\noprint}[1]{}
\newcommand{\dlmf}[1]{%
\cite[%
  \def\nextitem{\def\nextitem{, }}%
  \@for \el:=#1\do{\nextitem\href{http://dlmf.nist.gov/\el}{(\el)}}%
]{dlmf}%
}
\newcommand{\si}{\sigma}
\newcommand{\la}{\lambda}
\numberwithin{equation}{section}
\newcommand{\rI}{\begin{pmatrix}  1 & 1 \end{pmatrix}}
\begin{document}

\title[Toda shock waves in the modulation region]{Long-time asymptotics for Toda shock waves in the modulation region}

\author[I. Egorova]{Iryna Egorova}
\address{B. Verkin Institute for Low Temperature Physics and Engineering\\ 47, Nauky ave\\ 61103 Kharkiv\\ Ukraine}
\email{\href{mailto:iraegorova@gmail.com}{iraegorova@gmail.com}}

\author[J. Michor]{Johanna Michor}
\address{Faculty of Mathematics\\ University of Vienna\\
Oskar-Morgenstern-Platz 1\\ 1090 Wien\\ Austria}
\email{\href{mailto:Johanna.Michor@univie.ac.at}{Johanna.Michor@univie.ac.at}}
\urladdr{\href{http://www.mat.univie.ac.at/~jmichor/}{http://www.mat.univie.ac.at/\string~jmichor/}}

\author[A. Pryimak]{Anton Pryimak}
\address{B. Verkin Institute for Low Temperature Physics and Engineering\\ 47, Nauky ave\\ 61103 Kharkiv\\ Ukraine}
\email{\href{mailto:pryimakaa@gmail.com}{pryimakaa@gmail.com}}

\author[G. Teschl]{Gerald Teschl}
\address{Faculty of Mathematics\\ University of Vienna\\
Oskar-Morgenstern-Platz 1\\ 1090 Wien\\ Austria\\ and Erwin Schr\"odinger International
Institute for Mathematics and Physics\\ Boltzmanngasse 9\\ 1090 Wien\\ Austria}
\email{\href{mailto:Gerald.Teschl@univie.ac.at}{Gerald.Teschl@univie.ac.at}}
\urladdr{\href{http://www.mat.univie.ac.at/~gerald/}{http://www.mat.univie.ac.at/\string~gerald/}}

\keywords{Toda equation, Riemann--Hilbert problem, steplike, shock}
\subjclass[2020]{Primary 37K40, 35Q53; Secondary 37K45, 35Q15}
\thanks{Research supported by the Austrian Science Fund (FWF) under Grant No.\ P31651.}

\begin{abstract}
We show that Toda shock waves are asymptotically close to a modulated 
finite gap solution in the region separating the soliton and the elliptic wave regions.
We previously derived  formulas for the leading terms of the asymptotic 
expansion of these shock waves  in all principal regions and conjectured that in the modulation region the 
next term is of order $O(t^{-1})$. In the present paper we prove this fact and investigate 
how resonances and eigenvalues influence the leading asymptotic behaviour.
Our main contribution is the solution of the local parametrix Riemann--Hilbert problems and a rigorous 
justification of the analysis. In particular, this involves the construction of a proper singular matrix model solution.
\end{abstract}

\maketitle

\section{Introduction}
A Toda shock wave is a solution of the initial value problem for
the Toda lattice (\cite{tjac, toda})
\begin{align} \label{tl}
	\begin{split}
\dot b(n,t) &= 2(a(n,t)^2 -a(n-1,t)^2),\\
\dot a(n,t) &= a(n,t) (b(n+1,t) -b(n,t)),
\end{split} \quad (n,t) \in \Z \times \R_+,
\end{align}
with a steplike initial profile
\begin{align} \label{ini1}
\begin{split}	
& a(n,0)\to a, \quad b(n,0) \to b, \quad \mbox{as $n \to -\infty$}, \\
& a(n,0)\to \frac{1}{2} \quad b(n,0) \to 0, \quad \mbox{as $n \to +\infty$},
\end{split}
\end{align}
where $a>0$, $b\in \R$ satisfy the condition
\be \label{main}
 b + 2 a < - 1.
\ee
Originally the Toda shock wave  was associated with symmetric initial data (\cite{vdo})
\be\label{deift}
a(n,0)=a(-n, 0)\to\frac{1}{2}, \quad b(-n,0)=-b(n,0)\to \pm b, \quad n\to \pm\infty,\quad b>1.
\ee 
Such a  model is closely related to the motion of driving particles in a container filled with gas ahead of  a piston compressing the content of the container.
From the viewpoint of spectral theory, this model corresponds to two non-intersecting  spectral intervals of equal length, which are associated with left and right background Jacobi operators with constant coefficients; the left background spectrum lies to the left. It is therefore natural to extend the notion of a shock wave to background spectra of different lengths. By scaling and shifting the spectral parameter, one can always assume that the right spectrum coincides with the interval $[-1, 1]$. 

We are interested in the long-time behavior of the solution of the Cauchy problem \eqref{tl}--\eqref{main}, also referred to as {\it Toda shock problem}.
This problem was first investigated for initial data \eqref{deift}  on a physical level of 
rigor by Bloch and Kodama (\cite{bk, bk2}) using the Whitham approach. Venakides, Deift and Oba showed in \cite{vdo} (see also \cite{km0}) based on the  Lax--Levermore approach that in a middle region of the half plane $(n,t)\in \Z\times \R_+$, the solution to \eqref{tl}, \eqref{deift} is asymptotically close to a periodic solution of period two with spectrum $[-1-b, 1-b]\cup [-1 + b, 1+b]$. The classical inverse scattering transform was used to analyze the soliton regions (\cite{toda}) and a transition region behind the leading wave front, where the train of asymptotic solitons was evaluated ({\cite{bek, BE}). The nonlinear steepest descent (NSD) method applied to the Toda shock problem 
yields the most interesting results in the regime $n\to\infty$, $t\to \infty$ with the ratio $n/t$ close to a constant. 
In \cite{emt14} three of us showed that for the solution of \eqref{tl}--\eqref{main} there are five principal regions 
in the $(n,t)$ half plane with different qualitative behavior: the left and right soliton regions,
the left and right modulation regions and the elliptic region or middle region first discussed in \cite{vdo} (see \cite{m15} for an overview).

\subsection{The main asymptotic regions}
The continuous spectrum of the underlying Jacobi operator (the Lax operator for the Toda lattice)
\be \label{ht}
\aligned
H(t)y(n)&=a(n-1,t)y(n-1) + b(n,t) y(n) + a(n,t) y(n+1)\\
&=\la y(n), \quad \la\in\C,
\endaligned
\ee
consists of two intervals $[b-2a, b+2a]$ and $[-1, 1]$ which are the spectra  
of the left and right constant background operators,
\[
\aligned
H_{\ell}y(n) & = a\,y(n-1) + a\,y(n+1) + b\,y(n), \\
H_r y(n) & =\frac{1}{2} y(n-1) +\frac 1 2 y(n+1),
\endaligned \quad  n\in\mathbb Z.
\]
Two parameters $z$ and $\zeta$ are associated with the right and left background; they are connected with the spectral parameter $\la$ by the Joukowsky transform
\be\label{spec5}
\la=\frac{1}{2}\left(z + z^{-1}\right)=b + a\left(\zeta + \zeta^{-1}\right),\quad |z|\leq 1,
\quad |\zeta|\leq 1.
\ee 
In the NSD approach, the behavior of the solution essentially depends on the location of the stationary phase points, that is, the nodal points of the level lines where the real part of the phase function vanishes. In our case both the right  phase function
\be\label{phase4}\Phi(z,\xi)=\frac{z - z^{-1}}{2} + \xi\log z,\quad \xi:=\frac{n}{t},
\ee 
and the left phase function
\be\label{phil}
\Phi_\ell(z, \xi)=a(\zeta^{-1} - \zeta) - \xi\log\zeta,
\ee 
 take part in this characterization. Two soliton regions corresponding to the domains of 
$n$ and $t$ for which $\frac{n}{t}>\xi_{cr}$ or $\frac{n}{t}<\xi_{cr,1}$ are naturally identified, where the solution to \eqref{tl}--\eqref{main} is asymptotically close as $t \to\infty$ to the respective constant background solution
plus a finite number of solitons generated by the discrete spectrum (if any). The right leading wave front
\be\label{ksikr}
\xi_{cr}=\frac{\sqrt{(2a-b)^2 -1}}{\log (2a-b + \sqrt{(2a-b)^2-1})}
\ee 
corresponds to the case when the level line $\re \Phi(z,\xi)=0$ crosses the real line at the point $z(b-2a)$ (the left endpoint of the left spectrum). The left wave front
\be\label{cr1} \xi_{cr,1}=\frac{\sqrt{(1-b)^2-4a^2}}{\log(2a) - \log \big(1-b + \sqrt{(1-b)^2-4a^2}\big)}
\ee
is the value where the stationary phase point of the left phase $\Phi_\ell$ coincides with the right endpoint of the right spectrum, $z=1$.  The region
$\xi_{cr,1}<\frac{n}{t}<\xi_{cr}$ consists of three sectors  with different type of quasi-periodic behavior of the solution. These sectors are divided by rays corresponding to the critical values $\xi_{cr, 1}^\prime$ and $\xi_{cr}^\prime$ of the parameter $\xi$   such that
$\xi_{cr,1}<\xi_{cr,1}^\prime<\xi_{cr}^\prime<\xi_{cr}$. In the modulation regions $\xi_{cr,1}<\frac{n}{t}<\xi_{cr,1}^\prime$
and  $\xi_{cr}^\prime<\frac{n}{t}<\xi_{cr}$, the main terms of the expansion of the solution (with respect to large $t$)
are modulated elliptic waves (\cite{emt14}). In the middle region $\xi_{cr,1}^\prime<\frac{n}{t}<\xi_{cr}^\prime$, the solution is asymptotically close to a finite gap (two band) solution of the Toda equation if the discrete spectrum is absent 
in the gap $(b+2a, -1)$. 

\subsection{Modulated elliptic waves}
A finite gap solution of the Toda equation is completely characterized by the geometry of its continuous spectrum and by the initial Dirichlet divisor on the hyperelliptic Riemann surface associated with the spectrum. In the shock problem we deal with spectra consisting of two bands and one initial Dirichlet eigenvalue in the gap between the bands. The sign necessary to lift this eigenvalue to the Riemann surface is the sign of the respective half-axis, where the corresponding eigenvector is supported.

Let us first discuss the region $\xi\in [\xi_{cr}^\prime, \xi_{cr})$. For any such $\xi$ consider a  point $\gamma(\xi)\in (b-2a, b+2a]$ which moves monotonically and continuously with respect to $\xi$ covering the interval $(b-2a, b+2a]$, with $\gamma(\xi_{cr}^\prime)=b+2a$. Associated with the set 
\be\label{sixi}\si(\xi):=[b-2a, \gamma(\xi)]\cup [-1, 1]
\ee 
is the two-sheeted Riemann surface $\mathbb M(\xi)$. The upper sheet of $\mathbb M(\xi)$ is treated as the complex plane of the spectral parameter $\la$ with cuts along $\si(\xi)$. Let $\Omega(\la,\xi)$ and $\omega(\la,\xi)$ be the normalized Abel integrals of the second and the third kind  on the upper sheet of $\mathbb M(\xi)$, with zero $\mathfrak a$-periods along the gap $(\gamma(\xi), -1)$. The linear combination $g(\la,\xi)=\Omega(\la,\xi) + \xi \omega(\la,\xi)$ is then another Abel integral with zero $\mathfrak a$-period. The nominator of the function $\frac{\pa g(\la, \xi)}{\pa\la}$ has two real zeros $\nu(\xi)$ and $\mu(\xi)$ with at least one zero in the gap, say $\mu(\xi)\in (\gamma(\xi), -1)$.   Moreover, for $\la\to\infty$ the function $g(\la,\xi)$ has the same asymptotic behavior as the phase function $\Phi(z(\la), \xi)$ up to a constant term.  These properties hold for any choice of $\gamma(\xi)$. The peculiarity of our 
choice for $\gamma(\xi)$  when $\xi\in [\xi_{cr}^\prime, \xi_{cr})$ is that we require the second zero $\nu(\xi)$ of $g(\la, \xi)$ to match with $\gamma(\xi)$, that is, 
\be\label{gfunc}
\frac{\pa}{\pa \la}\left(\Omega(\la,\xi) + \xi\, \omega(\la,\xi)\right)=\frac{(\la -\mu(\xi))\sqrt{\la - \gamma(\xi)}}{\sqrt{(\la - b + 2a)(\la^2 -1)}}.
\ee
Such a point $\gamma(\xi)$ is unique for every $\xi$, and $\gamma(\xi)$ satisfies the same continuity and monotonicity properties as described above;  it defines the moving edge of the Whitham zone (cf.\ \cite{bk2}).
From this construction it follows that
\[
\xi_{cr}^\prime = - 2 a - \frac{\int_{b+ 2 a}^{-1}\la\
Q(\la)d\la}{\int_{b+ 2 a}^{-1}
Q(\la)d\la},\qquad Q(\la)=\sqrt{\frac{\la - b-2a}{(\la - b +2a)(\la^2 - 1)}}
\]
and $\si(\xi_{cr}^\prime)=[b-2a, b+2a]\cup[-1,1]$.
In contrast to the phase function $\Phi(z, \xi)$, the properties of $g(\la, \xi)$ allow us to apply the lens construction for the RHP approach to all contours whenever needed. For this reason we replace $\Phi(z, \xi)$ by $g(\la, \xi)$, which plays the role of the $g$-function (\cite{dvz}) in the NSD method. Given $\gamma(\xi)$,  let 
\be\label{shum}
\big\{\hat a\big(n,t,\xi\big),  \hat b\big(n,t,\xi\big)\big\}
\ee 
be the finite gap solution for the Toda lattice associated with the spectrum \eqref{sixi} and with an initial Dirichlet eigenvalue defined via the initial scattering data for \eqref{ini1}, \eqref{main} by the Jacobi inversion problem. This Dirichlet eigenvalue was computed in \cite[Equ.\ (5.25)]{emt14}.
The functions $\big\{\hat a\big(n,t,\tfrac{n}{t}\big),  \hat b\big(n,t,\tfrac{n}{t}\big)\big\}$ are then well defined in the region
\be\label{mathcalI}
\left\{(n,t)\in \Z\times \R_+: \tfrac{n}{t}\in [\xi_{cr}^\prime, \xi_{cr}-\varepsilon]\right\},
\ee
where $\varepsilon$ is arbitrary small.
In analogy to the KdV shock case, we call them modulated elliptic waves. They are  the main terms of the asymptotic expansion for the Toda shock wave with respect to large $t$ in the region \eqref{mathcalI}.

The middle region $\frac{n}{t}\in (\xi_{cr,1}^\prime, \,\xi_{cr}^\prime)$, where
\[
\xi_{cr,1}^\prime= b+ 1 - \frac{\int_{b+ 2 a}^{-1}\la \
Q_1(\la)d\la}{\int_{b+ 2 a}^{-1}
Q_1(\la)d\la},\qquad Q_1(\la)=\sqrt{\frac{\la +1}{((\la - b)^2 -4a^2)(\la - 1)}},
\]
 is  associated with the gap $(b+2a,-1)$, although we cannot claim that the stationary phase point of $\Phi(z(\la), \xi)$ for such $\xi$ is located
in this gap. A suitable $g$-function here is simply
$\Omega(\la) + \xi\, \omega(\la)$, where $\Omega(\la)$ and $\omega(\la)$ are the Abel integrals as defined above
associated with the spectrum 
\[
\si(\xi_{cr}^\prime)=\si(\xi_{cr,1}^\prime)=[b-2a, b+2a]\cup [-1, 1].
\]
The level line  $\re g(\la, \xi)=0$ in this case intersects the real axis at a  point $\la_0(\xi)$ inside the gap, which moves continuously along the gap when $\xi$ moves along 
$(\xi_{cr,1}^\prime, \,\xi_{cr}^\prime)$.  The main asymptotic term for the solution of \eqref{tl}--\eqref{main} is  the classical two band solution of the Toda lattice $\{\hat a(n,t), \hat b(n,t)\}$ with the initial Dirichlet eigenvalue depending on $\xi$ if the discrete spectrum inside the gap is nonempty. The phase summand in the 
theta function representation for this two band solution (cf.\ \cite[Equ.\ (9.48)]{tjac}) contains information on the initial 
scattering data for \eqref{tl}--\eqref{main}, and undergoes a shift when $\la_0(\xi)$ hits a point of the discrete spectrum. This agrees with the effect of adding a single eigenvalue as can be done using the double commutation  method
(cf.\ \cite{GzT} and  \cite[Lem.\ 11.26]{tjac}). 
For $\frac{n}{t}=\xi_{cr}^\prime$, the solution \eqref{shum} coincides with the two band solution $\{\hat a(n,t), \hat b(n,t)\}$ above. The same is valid for the second boundary of the middle region $\xi=\xi_{cr,1}^\prime$, because the construction of the $g$-function in $[\xi_{cr,1}^\prime, \xi_{cr,1})$ is the same as for the modulated elliptic waves above. It is associated with $[b-2a, b+2a]\cup [\gamma(\xi), 1]$ where $\gamma(\xi)\in[-1, 1).$ 

\subsection{Main result}
In \cite{emt14} we derived the precise formula for the modulated finite-gap solution \eqref{shum} using the NSD approach for vector RHPs and more restrictive initial data: we assumed that there are 
no resonances on the edges of the spectrum of $H(t)$ and that the discrete spectrum consists of a single point 
 in the spectral gap. We did not justify  the asymptotic expansion for the solution of \eqref{tl}--\eqref{main} and only conjectured that the next term is of order $O(t^{-1})$. The  aim of the present paper is to prove this fact by solving  local parametrix problems and finishing the conclusive analysis. We will implement a rigorous asymptotic analysis in the region \be\label{mathcalIres}
\mathcal D:=\{(n,t)\in \Z\times \R_+: \tfrac{n}{t}\in \mathcal I_{\varepsilon}:=[\xi_{cr}^\prime+\varepsilon, \xi_{cr}-\varepsilon]\}
\ee
to prove

\begin{theorem} \label{thor7}
For $(n, t)\in \mathcal D$, $n, t\to\infty$ uniformly with respect to $\frac{n}{t}\in\mathcal I_\varepsilon$, 
the Toda shock wave $\{a(n,t), b(n,t)\}$ given by \eqref{tl}--\eqref{main}, \eqref{decay} has 
the following asymptotic behavior:
\begin{align*}
a(n,t)^2 + a(n-1, t)^2 &= \hat a\big(n,t,\tfrac{n}{t}\big)^2 +\hat a\big(n-1,t,\tfrac{n}{t}\big)^2 + O(t^{-1}),\\
b(n,t) &=\hat b\big(n,t,\tfrac n t \big) +O(t^{-1}),
\end{align*}
where $\{\hat a(n,t, \xi), \hat b(n,t,\xi)\}$ is the finite gap solution of the Toda lattice associated with the spectrum 
$[b-2a, \gamma(\xi)]\cup [-1, 1]$ and the initial divisor $(\la(0,0), \pm)$ which is the only zero of the function
$\theta(2 A(z) - \frac{1}{2} - \frac{\Delta}{2\pi} \mid 2\tau)$ (see \eqref{defchi}, \eqref{Bla},  \eqref{Delta}, \eqref{defVell}, \eqref{Thetta}) on the Riemann surface $\mathbb M(\xi)$ with projection on the gap $[\gamma(\xi), -1]$.
\end{theorem}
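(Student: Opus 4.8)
The plan is to run the Deift--Zhou nonlinear steepest descent analysis for the vector (or, where possible, matrix) Riemann--Hilbert problem associated to the Toda shock wave, now pushed to the level of rigor that was missing in \cite{emt14}. First I would recall the inverse scattering setup: the scattering data for the steplike profile \eqref{ini1}--\eqref{main} produce a jump matrix on the two bands $[b-2a,b+2a]\cup[-1,1]$ plus poles at the discrete eigenvalues and singularities at any resonances on the band edges. Fixing $\xi=\tfrac{n}{t}\in\mathcal I_\varepsilon$, I would introduce the $g$-function $g(\la,\xi)=\Omega(\la,\xi)+\xi\,\omega(\la,\xi)$ normalized by \eqref{gfunc}, with $\gamma(\xi)$ chosen so that the second branch point $\nu(\xi)$ coincides with $\gamma(\xi)$; the key analytic input is that $\re g$ has the correct sign structure (the level line $\re g=0$ hits the real axis only at the saddle $\mu(\xi)$ inside the gap, and the moving edge $\gamma(\xi)$ stays strictly inside $(b-2a,b+2a)$ for $\xi\in\mathcal I_\varepsilon$ thanks to the $\varepsilon$ cutoffs). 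Conjugating by $\E^{tg\sigma_3}$ and performing the lens opening along the bands reduces the problem, away from small fixed-radius disks around the four branch points $b-2a,\gamma(\xi),-1,1$ and around the saddle $\mu(\xi)$ and any eigenvalue/resonance, to a model problem whose solution is exactly the finite-gap (Baker--Akhiezer/theta-function) matrix tied to the spectrum $[b-2a,\gamma(\xi)]\cup[-1,1]$ and the initial Dirichlet divisor computed in \cite[Equ.~(5.25)]{emt14}.

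Next I would solve the \emph{local parametrices}. Around each of the four branch points the jump has the generic square-root structure, so the standard Airy parametrix applies and furnishes a matcher agreeing with the global model on the boundary circle up to $O(t^{-1})$. Around the saddle point $\mu(\xi)$ in the gap the phase $g$ has a nondegenerate critical point, so a parabolic-cylinder (or, in the Toda case, a simpler rational/Gaussian) parametrix gives the same $O(t^{-1})$ matching; this is the only place where a genuinely new piece of analysis is needed, because $\mu(\xi)$ may collide, as $\xi$ ranges over $\mathcal I_\varepsilon$, with an eigenvalue of $H(t)$ or approach a band edge, and one must check uniformity. When $\mu(\xi)$ is bounded away from the eigenvalues and edges the estimate is uniform by compactness; at the (finitely many) values of $\xi$ where a collision occurs the remaining poles must be absorbed into the local model, which is precisely where the \emph{singular matrix model solution} advertised in the abstract comes in --- the model solution must be allowed to have a pole at $\mu(\xi)$ matching the Dirichlet divisor $(\la(0,0),\pm)$, i.e.\ the zero of $\theta(2A(z)-\tfrac12-\tfrac{\Delta}{2\pi}\mid 2\tau)$ over the gap. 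Constructing this singular model (showing existence, uniqueness, and that it is the finite-gap matrix with the stated divisor) and verifying it has unit determinant and the right asymptotics at $\infty$ is the technical heart.

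Then I would assemble the global parametrix: define $M^{(glob)}$ to be the finite-gap matrix away from the disks and the appropriate local parametrix inside each disk, and set $R(\la)=M(\la)\big(M^{(glob)}(\la)\big)^{-1}$. By construction $R$ has no jump across the bands or the lenses (the model matches them exactly) and only a small jump $I+O(t^{-1})$ on the finitely many circles $\partial$(disk), together with exponentially small jumps on the remaining contour pieces where $\re g<0$ strictly. A small-norm argument (the jump is $I+O(t^{-1})$ in $L^1\cap L^2\cap L^\infty$ of the residual contour, uniformly for $\xi\in\mathcal I_\varepsilon$ by the compactness of $\mathcal I_\varepsilon$ and the uniform separation of all relevant points) then gives $R(\la)=I+O(t^{-1})$, with the correction analytic and uniformly bounded near $\la=\infty$. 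Finally, unravelling the chain of transformations --- in particular reading off $a(n,t)^2+a(n-1,t)^2$ and $b(n,t)$ from the $1/\la$-coefficients of the (matrix) RH solution exactly as in the construction of $\{\hat a,\hat b\}$ --- yields
\[
a(n,t)^2+a(n-1,t)^2=\hat a\big(n,t,\tfrac nt\big)^2+\hat a\big(n-1,t,\tfrac nt\big)^2+O(t^{-1}),\qquad
b(n,t)=\hat b\big(n,t,\tfrac nt\big)+O(t^{-1}),
\]
uniformly in $\xi\in\mathcal I_\varepsilon$. The main obstacle, as indicated, is not the Airy parametrices or the small-norm step but the construction and analysis of the \emph{singular} matrix model solution at the saddle point --- ensuring it exists, is unique, carries the correct Dirichlet divisor $(\la(0,0),\pm)$ on the gap, and matches the global finite-gap solution uniformly as the saddle $\mu(\xi)$ sweeps through the gap and possibly meets eigenvalues or resonances.
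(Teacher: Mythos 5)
Your overall NSD skeleton (scattering data, $g$-function with the moving edge $\gamma(\xi)$, lens opening, model problem, local parametrices, small-norm argument) matches the paper's strategy, but the two places you identify as the technical heart are misplaced, and as stated the scheme would fail. First, no saddle-point (parabolic-cylinder) parametrix is needed in the modulation region: after the $g$-function conjugation and the lens step the jump on the gap portion $[\mathfrak r,-1]$ (in the $z$-variable) is exactly the constant diagonal model jump $\E^{(2\I tB-\I\Delta)\sigma_3}$, and the off-diagonal contributions on $[y,\mathfrak r]$ and on the lens boundaries decay exponentially; the only points where the error jump is not uniformly small are the images $y,y^{-1}$ of the moving Whitham edge $\gamma(\xi)$, and there the standard Airy parametrix suffices (nothing is needed at $b-2a$ or at $\pm1$, where the jump is already exactly the model jump $\pm\I\sigma_1$ resp.\ absent). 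Second, and more seriously, the \emph{singular matrix model solution} is not a local object carrying a pole at the saddle/Dirichlet divisor point: it is the global matrix solution $M^{mod}$ of the model jump problem, built from the theta-function vector solution together with a second, $n$-shifted solution, and its poles sit at $z=\pm1$, i.e.\ at the edges of the \emph{right background} spectrum. This is forced because for the $(n,t)$ with $2tB-\Delta\equiv\pi\ (\mathrm{mod}\ 2\pi)$ the vector model solution vanishes at $-1$ and no bounded invertible matrix model solution exists; the paper's resolution (following Minakov's idea for KdV) is to admit these poles and then prove that the error \emph{vector} $\nu=m^{(3)}[M^{mod}]^{-1}$ has removable singularities at $\pm1$ (and at $q^{\pm1}$), which is the content of its key lemma.

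This also undermines your final step: you propose a matrix small-norm argument $R=M\,(M^{(glob)})^{-1}$ and to read the coefficients off the matrix RH solution, but the paper stresses that the matrix statement of the original RHP is ill-posed for exactly those exceptional $(n,t)$, so a uniform-in-$(n,t)$ matrix small-norm scheme is not available. Instead one must work with the vector error $\nu$, impose the symmetry $\nu(z^{-1})=\nu(z)\sigma_1$ via a special matrix Cauchy kernel whose entries vanish at $z=0$ and $z=\infty$, and recover the normalization only a posteriori (the constant $\tau=\nu_2(0)=\nu_1(\infty)$ is shown to equal $1+O(t^{-1})$ from $m_1(0)m_2(0)=1$); the asymptotics for $b$ and $a^2$-sums are then extracted from the expansion of the product $m_1(z)m_2(z)$ at $z=0$ via the Green's-function identity, not from $1/\la$-coefficients of a matrix solution. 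Your worry about the saddle colliding with eigenvalues is not where the difficulty lies: eigenvalues are removed at the outset by conjugation onto small circles with exponentially small jumps, and resonances enter only through the local behaviour at the band edges $q,q_1$, handled by the choice of the function $\mathcal V$ and the extra deformation near $q_1$.
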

For the remaining two regions the asymptotic analysis can be done similarly, see Sec.~\ref{sec:disc}.
However, we essentially improve the estimate on the error term in the middle region 
$\frac n t \in (\xi_{cr,1}^\prime, \xi_{cr}^\prime)$ in \cite{EM}, where we also describe the influence of resonances and
the discrete spectrum in the gap on the asymptotic.

\subsection{Remarks on the method of proof}
\begin{itemize}
\item
As in \cite{emt14} we deal with vector statements of RHPs. They are more natural in the Toda case than matrix statements, because the matrix statements are ill-posed for certain values of $n$ and $t$ in the class of invertible 
matrices with $L^2$-integrable singularities. This fact for Toda can be established similarly as for the KdV shock wave (\cite{EPT}). 
The vector statement requires additional symmetries to be posed on the contours, jump matrices 
and on the solution itself to guarantee uniqueness of the solution.  
\item
In \cite{emt14} the RHP was stated in  terms of the spectral variable $\la$, that is, on the two-sheeted Riemann surface with sheets glued along the cuts $[b-2a, b+2a]\cup [-1, 1]$. In the present paper we use the standard approach via the Joukowsky map $z(\la)$ in \eqref{spec5}: 
 the upper sheet of the Riemann surface is identified with the inner part of the circle $|z|<1$ without
the cut $[z(b-2a), z(b+2a)]$, and the lower sheet with $|z|>1$ without $[z^{-1}(b-2a), z^{-1}(b+2a)]$.
We formulate the initial RHP and reformulate all transformations leading to the model RHP in terms of $z$, taking into account the discrete spectrum and resonances, which produce  singularities in the jump matrix and  require a more sophisticated analysis and additional proofs of the uniqueness results. 
We also solve the vector model problem independently, and derive the asymptotics using a new, 
more convenient formula \eqref{asympnew}.
\item
To prove the asymptotics within the framework of the NSD, the traditional  approach first requires to solve the matrix analog of the model RHP,  then to find matrix solutions of the local parametrix RHPs, and finally to derive the singular integral equation for the error vector and to estimate its norm. 
But with this approach one fails to obtain uniform estimates in $n$ and $t$ for both KdV and Toda due to the singular behavior of the matrix model solution (\cite{EPT}). An alternative approach was proposed in \cite{p} for KdV: instead of constructing a matrix model solution, it evaluates the smallness of the difference between initial and model vector solutions as solutions of the associated singular integral equations with slightly different kernels. 
But this approach does not seem to work for Toda, because we have less control on the behavior of
the vector solutions $m(\la)=(m_1(\la), m_2(\la))$ of the initial and model RHPs at infinity. Indeed, for KdV one knows that $m(\la)\to (1, 1)$ as $\la\to \infty$, but for Toda we only know that $m_1(\la)m_2(\la)\to 1$, which is not sufficient to apply the technique of \cite{p}.

In \cite{min1, min2}  a  singular matrix model solution is proposed for the  KdV shock case, which has  a pole at 
$\la=0$, but the respective error vector does not have pole-like singularities. We use this idea to construct the matrix model solution for Toda shock.  It has simple poles at the edges of the right background spectrum, but the error vector does not (see Theorem \ref{properr}).

\item
To show that the expansion with respect to $z$ of the product of components of the initial RHP solution is asymptotically close to the expansion for the model RHP solution, we have to prove that the error vector  $\nu(z)$ (cf.\ \eqref{defnui}, \eqref{nusa}) is asymptotically close to the vector $(1,1)$ as $z\to 0$ up to a term $O(t^{-1})$. To achieve this, we carry out all conjugations and deformations related to NSD by strictly respecting the symmetry\footnote{Here $\si_1$ is the first Pauli matrix.} $m(z)=m(z^{-1})\sigma_1$ and normalization $m_1(0)m_2(0)=1$ for all subsequent RHPs. To preserve  
$\nu(z^{-1})=\nu(z)\si_1$ for the error vector, in the respective singular integral equation we have to use a special matrix Cauchy kernel with entries that have zeros at $z=0$ and $z=\infty$ (cf.\ \cite{KTb}, Equ.\ (B.8)). 
\end{itemize}

\subsection{Visualisation}
In Fig.~\ref{fig:shock} the numerically computed solution generated 
by pure step initial data $a(n,0)=\frac{1}{2}$, 
$b(n,0)=0$ as $n\geq 0$ and $a(n,0)=1$, $b(n,0)=-4$ as $n<0$ is plotted at a frozen time $t=799$ for $5000$ plotpoints around the origin.
\begin{figure}[ht]
\centering
\includegraphics[width=12.6cm]{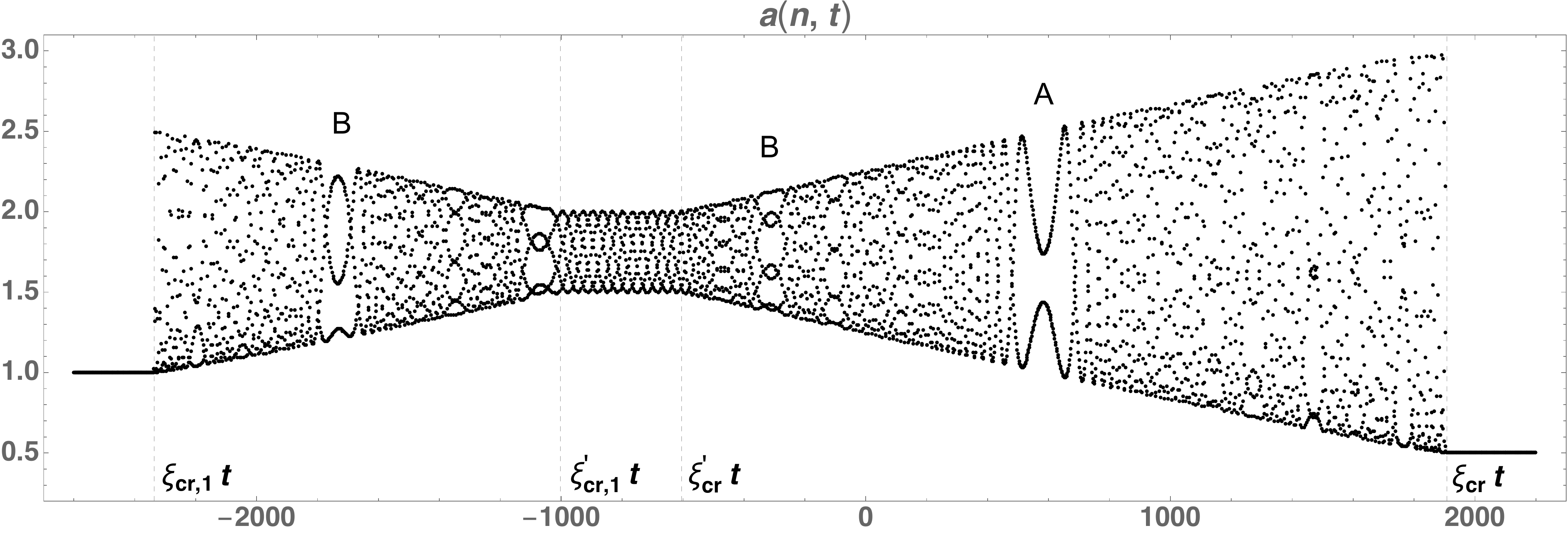}
\hfill
\includegraphics[width=12.6cm]{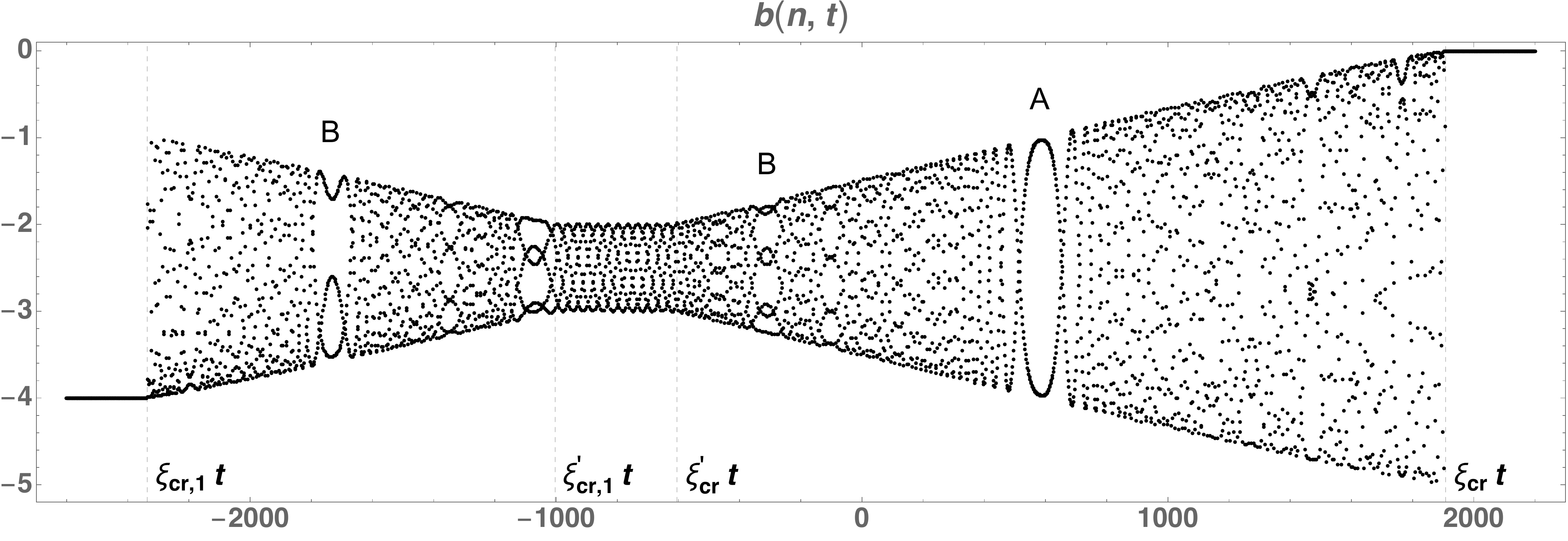}
\caption{{\small Solutions $a(n,t)$, $b(n,t)$ at $t=799$ for $\sigma(H_\ell)=[-6, -2]$ and  $\sigma(H_r)=[-1,1]$.
The critical values (times $t$) are plotted as vertical lines,
$\xi_{cr}t=1907.65$, $\xi_{cr}^\prime t =-604.39$, $\xi_{cr,1}^\prime t =-1002.66$, and $\xi_{cr,1} t =-2336.92$.}} \label{fig:shock}
\end{figure}
We can clearly distinguish the left and right regions corresponding to the modulated elliptic waves
and the middle region, where the quasi-periodic finite gap solution is associated with the full spectrum 
$[-6, -2]\cup [-1,1]$. Another phenomenon is nicely visualized in areas as for example $A$, $B$:
Recall that quasi-periodicity or pure periodicity of the finite gap solution with spectrum $[b-2a, b+2a]\cup [-1,1]$ 
depends on the ratio $r$ of the frequencies of its quasi-momentum $\omega(\la)$,
\[r=\frac{\omega(1)-\omega(-1)}{\omega(b+2a)-\omega(b-2a)}.
\]
If $r=\frac{p}{q}\in \mathbb Q$, where $\frac{p}{q}$ is an irreducible fraction, then the finite gap solution is periodic with period $p+q$ (for $a=\frac{1}{2}$ we get a solution of period~$2$). 
If $\omega(\la,\xi)$ is the quasi-momentum associated with $[b-2a, \gamma(\xi)]\cup [-1, 1]$ or $[b-2a, b+2a]\cup [\gamma(\xi), 1]$ and 
\[r(\xi)=\frac{\omega(1,\xi) - \omega(-1,\xi)}{\omega(\gamma(\xi), \xi) - \omega(b-2a, \xi)}\quad \mbox{or} \quad r(\xi)=\frac{\omega(1,\xi) - \omega(\gamma(\xi),\xi)}{\omega(b+2a, \xi) - \omega(b-2a, \xi)}
\]
is rational, then for such a point $\xi$  the solution is periodic; the period might be large as in area $B$. Area $A$ 
corresponds to a modulated wave near $\xi_0$ such that $\gamma(\xi_0)=-4$, where the solution is of period $2$.

\section{Statement of the Riemann--Hilbert problem}\label{s:RHP}

In this section we cover some basic facts of the inverse scattering transform and fix notation. For a detailed account of
scattering theory for Jacobi operators with steplike backgrounds see \cite{emtstp2, emt3, emt5},
with zero background see \cite[Chapter 10]{tjac}.

Under the assumption that the coefficients of the initial data \eqref{ini1} tend to the background constants sufficiently fast\footnote{For example, with finite first moments of perturbation.}, the spectrum of the Jacobi operator $H(t)$ consists of an absolutely continuous part of multiplicity one,
\[
\sigma_{ac}(H) = [b - 2 a, b+2a] \cup [-1,1],
\]
plus a finite simple pure point part,
$$
\{\lambda_j : j=1,\dots, N\} \subset \R \setminus \sigma_{ac}(H).
$$
To simplify further considerations we assume that the initial data \eqref{ini1} decay to their
backgrounds exponentially fast
\be \label{decay}
 \sum_{n = 1}^{\infty} \E^{\rho n} \big( |a(-n,0) - a| + |b(-n,0)-b|
+ |a(n,0) - \tfrac{1}{2}| + |b(n,0)|\big) < \infty,
\ee
where $\rho>0$ is a positive number.
The operator $H(t)$ is self-adjoint and the  diagonal elements of its Green's function $G(\la, n, m, t)$ (that is, the kernel of the resolvent operator $(H(t)-\la \id)^{-1}$) have the following expansion as $\la\to\infty$ (\cite[Sec.\ 6.1]{tjac})
\be\label{gerald1}
G(\la, n,n,t)=-\frac{1}{\la}\left( 1+\frac{b(n,t)}{\la} +\frac{a(n,t)^2 + a(n-1, t)^2 + b(n,t)^2}{\la^2} + O(\la^{-3})\right).
\ee

As mentioned in the introduction, instead of the spectral parameter $\la$ we use its Joukowsky transformation,
\[
z(\la)=\la-\sqrt{\la^2-1},
\]
which maps the two sides of the cut along the interval $[-1,1]$ to the unit circle $\mathbb T=\{z: |z|=1\}$.
The map $z\mapsto \la$ is one-to-one between the closed domains $\clos(\mathcal Q)$ and  $\clos(\C \setminus \sigma_{ac}(H(t)))$, where\footnote{
We define the closure by adding the upper and lower sides of the cuts as distinct points to the boundary.} 
\[ 
\mathcal Q :=\{z: |z|<1\}\setminus [q_1, q],
\]
The points $q_1=z(b+2a)$ and $q=z(b-2a)$
correspond to the edges of $\sigma(H_\ell)$ and $z=-1$ and $z=1$
correspond to the edges of $\sigma(H_r)$. The eigenvalues $\la_j$ are mapped to $z_j \in \left((-1, 0)\cup (0,1)\right) \setminus [q_1, q]$, for $j=1,\dots,N$; we denote them by 
\[
\si_d=\{z_j,  j=1,...,N\}.
\]
In addition to $z$ we also use the Joukowsky transformation $\zeta=\zeta(\la)$ associated with the left background and given by \eqref{spec5}.

Recall that the Jacobi equation \eqref{ht} has two Jost solutions $\psi(z,n,t)$, $\psi_\ell(z,n,t)$ for each $z \in \mathcal Q$
with asymptotic behavior
\[
\lim_{n\to \infty} z^{-n}\psi(z,n,t) =1,\quad |z|\leq 1; \qquad
\lim_{n\to -\infty} \zeta^{n}\psi_\ell(z,n,t) =1, \quad |\zeta|\leq 1.
\]
As functions of $z$ they have  slightly different properties on $\mathcal Q$. Indeed, since $\zeta^n \psi_\ell(z,n,t)$ is in fact an analytic function of $\zeta$ as $|\zeta|<1$,
this function has complex conjugated values on the sides of the cut along $[q_1, q]$, which we denote as $[q_1, q]\pm\I0$. It has equal real values at $z, z^{-1}\in\mathbb T$. The function $\psi(z,n,t)$ has complex conjugated values at conjugated points of $\mathbb T$, but \[\psi(z-\I0,n,t)=\psi(z+\I0,n,t)\in\R, \ \ \mbox{for}\ \ z\in [q_1, q].\]
Recall that $\psi_\ell(z,n,0)$ admits a representation via the transformation operator
\[ \psi_\ell(z,n,0)=\sum_{-\infty}^n K(n,m)\zeta^{-m}, \quad |\zeta|\leq 1.\]
Under  condition \eqref{decay} in the domain $1\leq |\zeta|< \E^\rho$ there exists an analytic function which is an extension of 
$\ol{\psi_\ell}$,
\be\label{uh} 
\breve\psi_\ell(z,n):=\sum_{-\infty}^n K(n,m)\zeta^{m}, \ \ \breve\psi_\ell(z\pm\I 0,n)=\overline{\psi_\ell(z\pm \I 0,n,0)},\quad z\in [q_1, q].
\ee 
The Jost solutions of \eqref{ht} are connected by the scattering relation
\be\label{pst}
T(z,t)\psi_\ell(z,n,t)=\ol {\psi(z,n,t)} + R(z,t)\psi(z,n,t), \quad |z|=1,
\ee
where $R(z,t)$ and  $T(z,t)$ are the right reflection and transmission coefficients. Their time evolution is given by
\[
R(z, t) = R(z)e^{(z-z^{-1})t}, \quad z \in \T, \qquad  |T(z, t)|^2 = |T(z)|^2 e^{(z-z^{-1})t}, \quad z \in [q_1, q],
\]
where $R(z)=R(z,0)$, $T(z)=T(z,0)$.
The right norming constants 
\[
\gamma_j(t)=\left(\sum_{n\in\Z}\psi^2(z_j,n,t)\right)^{-2}
\]
corresponding to $z_j\in\si_d$ evolve as 
$\gamma_j(t)=\gamma_j e^{(z_j-z^{-1}_j)t}$, $\gamma_j=\gamma_j(0)>0$. Let 
$$
W(z,t)=a(n-1,t)(\psi_\ell(z,n-1,t)\psi(z,n,t) - \psi_\ell(z,n,t)\psi(z,n-1,t))
$$ 
be the Wronskian of the Jost solutions and define $W(z):=W(z,0)$.

{\bf Resonant points}. {\it The point $\tilde q \in \{ -1, 1, q, q_1\}$ is called a resonant point if $W(\tilde q)=0$. If $W(\tilde q)\neq 0$, then $\tilde q$ is non-resonant. Note that $W(\ti q, t) =0$ iff $W(\ti q) =0$, that is, the property of being resonant (or not) is preserved with $t$.}
 
Under a much weaker decaying condition than \eqref{decay}, namely a finite first moment of perturbation
\be \label{decay1}
 \sum_{n = 1}^{\infty} n \big( |a(-n,0) - a| + |b(-n,0)-b|
+ |a(n,0) - \tfrac{1}{2}| + |b(n,0)|\big) < \infty,
\ee  
the set of the associated right initial  scattering data
\be\label{scatt}
\{R(z), z\in\mathbb T; \chi(z), z\in [q_1, q]; (z_j, \gamma_j), z_j\in\si_d \},
\ee
where
\be\label{defchi}
\chi(z)=-2a\frac{\zeta(z-\I 0)-\zeta^{-1}(z-\I 0)}{z-z^{-1}} |T(z)|^2, \quad z\in [q_1, q],
\ee
defines the solution of the Cauchy problem \eqref{tl}--\eqref{main} uniquely. For each $t$ this solution also has finite first moments of perturbations (\cite{emt3}). The scattering data \eqref{scatt} satisfy the following properties (\cite{dkkz, emtstp2}):
\begin{itemize}
 \item The function $R(z)$ is continuous on $\T$ and $R(z^{-1})=\ol{R(z)}=R^{-1}(z)$ for $z\in\T$. If $z=-1$ is non-resonant,  then $R(-1)=-1$, and if $z=-1$ is resonant, then $R(-1)=1$.
\item The function $T(z)$ can be restored uniquely for $z\in\mathcal Q$ from the data \eqref{scatt};  it is meromorphic with simple poles at $z_j$. 
\item The function $\chi(z)$ is continuous for $z\in (q_1, q)$ and vanishes at $\ti q \in \{q,q_1\}$ with
\be\label{chinon}\chi(z)=C(z-\ti q)^{1/2},\quad z\to\ti q \in \{q,q_1\},
\ee if
$\ti q$  is a non-resonant point. If $\ti q$ is a resonant point, then \be\label{chirez}\chi(z)= C(z-\ti q)^{-1/2} (1 + o(1)), \quad z \to \ti q \in \{q,q_1\}.
\ee
\end{itemize}

To apply the nonlinear steepest descent approach in the most general situation which assumes resonances, we 
choose the number $\rho>0$  in \eqref{decay} small such that 
\be
\label{nu} \rho>-\log|q|
\ee
in order to have the inclusion  $[q_1, q]\subset \{z:\,\E^{-\rho}<|z|<1\}$.
Under condition \eqref{decay} the scattering data have additional properties:
\begin{itemize}
\item
The function $R(z)$ admits an analytic continuation to $\{z:\E^{-\rho}<|z|<1\}\setminus [q_1, q]$ with simple poles at the points of the 
discrete spectrum located in this domain.
\item The function $\chi(z)$ has an analytic continuation $X(z)$ in a vicinity of $[q_1, q]$ with 
\[\chi(z)=\I|\chi(z)|=X(z-\I 0), \quad z\in [q_1, q],
\]
where
\be\label{Chi} X(z)= -\frac{a (\zeta - \zeta^{-1})(z-z^{-1})}{2 W(z)\, W(\breve\psi_\ell, \psi)(z)}.
\ee 
Here $\breve\psi_\ell(z,n)$ is defined by \eqref{uh} and $W(\breve\psi_\ell, \psi)(z)$ is the Wronskian of $\breve\psi_\ell(z,n,0)$ and $\psi(z,n,0)$. 
\end{itemize}

Treating the values $n$ and $t$ as parameters, we define a vector-valued function $m(z)=(m_1(z,n,t), m_2(z,n,t))$ on $\mathcal Q$ by
\be \label{defm}
m(z,n,t) =
\begin{pmatrix} T(z,t) \psi_{\ell}(z,n,t) z^n,  & \psi(z,n,t)  z^{-n} \end{pmatrix}.
\ee
The first component $m_1(z)$ is a meromorphic function in $\mathcal Q$ with poles at $z_j$. It has continuous limits as $z$ approaches the boundary of $\mathcal Q$ except (possibly) at $q$ and $q_1$, where a square root singularity may appear in the case of resonance.   The  second component of this vector is a holomorphic function in $\mathcal Q$ with continuous limits to the boundary.  Both functions have finite positive limits as $z\to 0$  (cf.\ \cite{emt14}). For our purpose it will be sufficient to control the product of the components.  
\begin{lemma}\label{asypm} For $z\to 0$, 
 \begin{align} \label{asympnew} \nonumber
& m_1(z,n,t) m_2(z,n,t) = 1 + 2 z b(n,t) \\
& \qquad + 4 z^2 \Big( a(n-1,t)^2 + a(n,t)^2 + b(n,t)^2 - \frac 12 \Big) + O(z^3).
\end{align}
\end{lemma}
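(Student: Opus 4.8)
The plan is to identify the product $m_1 m_2$ with a multiple of the diagonal Green's function $G(\la,n,n,t)$ and then read off the expansion directly from \eqref{gerald1}. From the definition \eqref{defm} the factors $z^{\pm n}$ cancel, so
\[
m_1(z,n,t)\, m_2(z,n,t) = T(z,t)\,\psi_\ell(z,n,t)\,\psi(z,n,t).
\]

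Next I would invoke two standard facts from the spectral and scattering theory of Jacobi operators (cf.\ \cite[Ch.~2]{tjac}, \cite{emtstp2}). First, the transmission coefficient and the Wronskian of the Jost solutions are linked by
\[
T(z,t)\,W(z,t) = \frac{z - z^{-1}}{2},
\]
which follows by taking the Wronskian of the scattering relation \eqref{pst} against $\psi$ (the $R$-term drops since $W(\psi,\psi)=0$) and evaluating the constant $W(\overline{\psi},\psi)=\tfrac12(z-z^{-1})$ from the asymptotics of $\psi$ as $n\to+\infty$ together with $a(n,t)\to\tfrac12$; the normalization is pinned down on the free background $a\equiv\tfrac12$, $b\equiv0$, where $\psi(z,n)=z^n$, $\psi_\ell(z,n)=z^{-n}$, $T\equiv1$. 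Second, the diagonal Green's function has the Wronskian representation
\[
G(\la,n,n,t) = \frac{\psi_\ell(z,n,t)\,\psi(z,n,t)}{W(z,t)}, \qquad \la = \tfrac12(z+z^{-1}),
\]
valid in a neighbourhood of $z=0$. Combining the three displays gives the key identity
\[
m_1(z,n,t)\, m_2(z,n,t) = \frac{z - z^{-1}}{2}\, G(\la,n,n,t) = -\sqrt{\la^2-1}\;G(\la,n,n,t),
\]
where the last equality uses $\la^2-1=\tfrac14(z-z^{-1})^2$ with the branch of the root fixed by $z(\la)=\la-\sqrt{\la^2-1}$, so that $\tfrac12(z-z^{-1})=-\sqrt{\la^2-1}$ as $z\to0$.

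It then remains to expand as $z\to0$, equivalently $\la\to\infty$. Substituting \eqref{gerald1} and using $\sqrt{\la^2-1}/\la=\sqrt{1-\la^{-2}}=1-\tfrac12\la^{-2}+O(\la^{-4})$, one obtains
\[
m_1 m_2 = 1 + \frac{b(n,t)}{\la} + \frac{a(n,t)^2 + a(n-1,t)^2 + b(n,t)^2 - \tfrac12}{\la^2} + O(\la^{-3}),
\]
and inserting $\la^{-1}=\tfrac{2z}{1+z^2}=2z+O(z^3)$ and $\la^{-2}=\tfrac{4z^2}{(1+z^2)^2}=4z^2+O(z^4)$ reproduces exactly \eqref{asympnew}. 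That the left-hand side is genuinely analytic at $z=0$, so that the remainder $O(z^3)$ is meaningful, is already contained in the properties of $m_1$ and $m_2$ recalled before the lemma. I do not expect any real obstacle here: once the identity $m_1 m_2 = \tfrac12(z-z^{-1})\,G(\la,n,n,t)$ is in place the rest is a one-line computation, and the only delicate points are the sign/branch in $\tfrac12(z-z^{-1})=-\sqrt{\la^2-1}$ and quoting the Wronskian identity with the normalization matching the paper's convention for $W(z,t)$ — both settled by the free-background check above.
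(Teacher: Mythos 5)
Your proposal is correct and follows essentially the same route as the paper: both identify $m_1m_2=T\psi_\ell\psi=\tfrac{z-z^{-1}}{2}\,G(\la(z),n,n,t)$ via the Wronskian representation of the diagonal Green's function and the relation $T(z,t)W(z,t)=\tfrac{z-z^{-1}}{2}$, and then expand using \eqref{gerald1} together with $\tfrac{1}{\la}=\tfrac{2z}{1+z^2}$ and $\tfrac{z-z^{-1}}{2}=-\sqrt{\la^2-1}$. The extra details you supply (deriving $TW=\tfrac{z-z^{-1}}{2}$ from the scattering relation and checking the branch of the square root) are consistent with the paper, which simply quotes these facts.
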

\begin{proof} The Jost solutions $\psi$ and $\psi_\ell$ can be considered as the Weil solution of $H(t)$, and therefore the Green's function \eqref{gerald1}
considered as a function of $z$ can be represented as
\[G(\la(z),n,n,t)=\frac{\psi(z,n,t)\psi_\ell(z,n,t)}{W(z,t)}.
\]
Recall that \[T(z,t)=\frac{z-z^{-1}}{2 W(z,t)},\] that is,
\be\label{imp98}
m_1(z,n,t)m_2(z,n,t)=\frac{z-z^{-1}}{2}\,G(\la(z), n, n, t).
\ee
Taking into account that
$
\frac{1}{\la}= \frac{2 z}{1+z^2}$ and $\frac{z- z^{-1}}{2}=-\sqrt{\la^2 - 1}$,
we obtain \eqref{asympnew}.
\end{proof}
Let $\mathcal Q^*:=\{z:|z|>1\}\setminus [q^{-1}, q_1^{-1}]$ be the image of the domain $\mathcal Q$  under the map
$z \mapsto z^{-1}$.  We extend  $m$ to $\mathcal Q^*$ by $m(z^{-1}) = m(z) \si_1$,
where $\si_1=\left(\begin{smallmatrix} 0  & 1 \\
1 & 0 \end{smallmatrix} \right)$ is the first Pauli matrix. With this extension,  the second component $m_2(z)$ is a meromorphic function on $\mathcal Q^*$
with poles at $z_j^{-1}$, $z_j\in\si_d$, and $m_1(z)$ is holomorphic.
Being defined now on $\C\setminus \Sigma$, 
where \be\label{Gammac}\Sigma=\T\cup [q,q_1]\cup [q_1^{-1}, q^{-1}],\ee the function $m(z)$ can have jumps along $\Sigma$.
For convenience, from here on we encode the orientation of the contours in $\R$ as follows: assume  $-\infty\leq c<d\leq\infty$, then we write $[d,c]$ for the
interval $[c,d]$ with orientation right-to-left. In particular, the contours $[q, q_1]$, and $[q_1^{-1}, q^{-1}]$ in \eqref{Gammac} are  oriented right-to-left, and  the unit circle $\T$ is oriented counterclockwise.

Throughout this paper, plus $(+)$ and minus $(-)$ sides of a contour correspond to the left and right
sides by orientation, that is, the $+$ side of an oriented contour lies to the left as
one traverses the contour in the direction of its orientation. And
$m_\pm(z)$ denote the boundary values of $m(z)$ as $z$ tends to the contour from the $\pm$ side.
Using this notation implicitly assumes that the limit exists (in the sense that $m(z)$ extends to a
continuous function on the boundary except  probably at a finite number of points). In this paper, all contours are symmetric with respect to the map
$z\mapsto z^{-1}$, i.e.\ they contain with each point $z$ also $z^{-1}$. The symmetric part of the contour will be denoted by the same letter, the image of  a contour $\mathcal L\subset \{z: |z|<1\}$ is denoted by $\mathcal L^*$. 
Given the orientation on $\mathcal L$, the orientation on  the starred contour $\mathcal L^*$   can be chosen in two ways. For the convenience of tracking this orientation we use the following formal notation. If the points $z^{-1}$ and $z$ simultaneously move in the positive direction of $\mathcal L$ and $\mathcal L^*$, we encode this as $\mathcal L^* \,\uparrow\,\uparrow \,\mathcal L$. If $z^{-1}$ moves in the negative direction while $z$ moves in the positive direction, 
we use the notation $\mathcal L^* \,\downarrow\,\uparrow\, \mathcal L$. In particular, $[q_1^{-1}, q^{-1}]\,\downarrow\,\uparrow [q, q_1]$.   The following symmetry should be preserved
for the jump matrix of any vector RHP and for its solution.

{\bf Symmetry condition}. {\it  Let $\hat\Sigma$ be a symmetric oriented contour and $\mathcal L\cup \mathcal L^*\subset (\C\setminus\mathbb T)$ be any symmetric part. The jump matrix $v(z)$ of the
vector problem $m_+(z)=m_-(z)v(z)$, $z\in \Sigma$, satisfies 
\[\aligned
v(z)&=\sigma_1 (v(z^{-1}))^{-1} \sigma_1,\quad z\in\mathcal L\cup \mathcal L^*, \ \mbox{for}\ \mathcal L^* \,\downarrow\,\uparrow\, \mathcal L,\\
v(z)&=\sigma_1 v(z^{-1}) \sigma_1,\qquad z\in\mathcal L\cup \mathcal L^*, \ \mbox{for}\ \mathcal L^* \,\uparrow\,\uparrow\, \mathcal L.
\endaligned\]
If $\mathbb T\subset \Sigma$ then $v(z)=\sigma_1 (v(z^{-1}))^{-1} \sigma_1,$ $ z\in \mathbb T.$
Moreover,}
\be\label{symto}
m(z)=m(z^{-1})\sigma_1, \quad z\in\C\setminus\hat\Sigma.
\ee
To preserve the symmetry condition we will always choose symmetric deformations of the contours. 
Moreover, we will only use conjugations by diagonal matrices which respect the symmetry condition, 
as outlined in the next lemma.
\begin{lemma}[Conjugation, \cite{KTa}] \label{lem:conjug}
Let $m$ be the solution on $\C$ of the RH problem  $m_+(z)=m_-(z) v(z)$, $z \in \hat\Sigma$,
which satisfies the symmetry condition.
 Let $d: \C\setminus  \tilde\Sigma\to\C$ be a sectionally analytic function with jump on a symmetric contour 
 $\tilde\Sigma\subset \hat\Sigma$. Set
\be\label{mD}
\ti{m}(z) = m(z) \begin{pmatrix} d(z)^{-1} & 0 \\ 0 & d(z) \end{pmatrix}= m(z) [d(z)]^{-\sigma_3}, \quad 
\sigma_3= \begin{pmatrix} 1 & 0 \\ 0 & -1 \end{pmatrix}.
\ee
If $d$ satisfies   \[d(z^{-1}) = d(z)^{-1},\quad z \in \C\setminus \Sigma,\]
then \eqref{mD} respects the symmetry condition.
The jump matrix of $\ti{m}_+=\ti{m}_- \ti{v}$ is given by
\[
\ti{v} =
\left\{
\begin{array}{ll}
  \begin{pmatrix} v_{11} & v_{12} d^{2} \\ v_{21} d^{-2}  & v_{22} \end{pmatrix}, &
  \quad z \in \hat\Sigma \setminus  \Sigma, \\[3mm]
  \begin{pmatrix} \frac{d_-}{d_+} v_{11} & v_{12} d_+ d_- \\
  v_{21} d_+^{-1} d_-^{-1}  & \frac{d_+}{d_-} v_{22} \end{pmatrix}, &
  \quad z\in \Sigma.
\end{array}\right.
\]
\end{lemma}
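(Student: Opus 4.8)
The plan is to treat this as a direct algebraic verification, splitting the claim into two parts: (i) that $\ti m(z)=m(z)[d(z)]^{-\sigma_3}$ again satisfies $\ti m(z)=\ti m(z^{-1})\sigma_1$, and (ii) the explicit formula for the transformed jump $\ti v$. The only two ingredients are the elementary identity $\sigma_1[d]^{\pm\sigma_3}\sigma_1=[d]^{\mp\sigma_3}$ (conjugation by $\sigma_1$ swaps the diagonal entries of $[d]^{-\sigma_3}=\mathrm{diag}(d^{-1},d)$) and the standard conjugation relation $\ti v=D_-^{-1}vD_+$ for $D=[d]^{-\sigma_3}$.

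For the symmetry I would start from $\ti m(z^{-1})\sigma_1=m(z^{-1})[d(z^{-1})]^{-\sigma_3}\sigma_1$ and insert the symmetry of $m$ in the form $m(z^{-1})=m(z)\sigma_1$, obtaining $\ti m(z^{-1})\sigma_1=m(z)\,\sigma_1[d(z^{-1})]^{-\sigma_3}\sigma_1=m(z)\,[d(z^{-1})]^{\sigma_3}$. The hypothesis $d(z^{-1})=d(z)^{-1}$ then turns $[d(z^{-1})]^{\sigma_3}$ into $[d(z)]^{-\sigma_3}$, so $\ti m(z^{-1})\sigma_1=m(z)[d(z)]^{-\sigma_3}=\ti m(z)$. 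This holds for every $z\in\C\setminus\hat\Sigma$ precisely because the jumps of $d$ lie on $\tilde\Sigma\subset\hat\Sigma$, so both $d(z)$ and $d(z^{-1})$ are single-valued and the relation $d(z^{-1})=d(z)^{-1}$ is unambiguous there.

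For the jump I would use $\ti m=mD$ together with $m_+=m_-v$ on $\hat\Sigma$: then $\ti m_+=m_+D_+=m_-vD_+=\ti m_-\,D_-^{-1}vD_+$, so $\ti v=D_-^{-1}vD_+=[d_-]^{\sigma_3}\,v\,[d_+]^{-\sigma_3}$. Multiplying out the diagonal factors sends $v_{11}\mapsto\tfrac{d_-}{d_+}v_{11}$, $v_{22}\mapsto\tfrac{d_+}{d_-}v_{22}$, $v_{12}\mapsto d_+d_-\,v_{12}$, $v_{21}\mapsto(d_+d_-)^{-1}v_{21}$, which is exactly the second case $z\in\Sigma$. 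On the part of $\hat\Sigma$ where $d$ is continuous (off $\tilde\Sigma$) one has $d_+=d_-=d$, and the formula collapses to $v_{12}\mapsto d^{2}v_{12}$, $v_{21}\mapsto d^{-2}v_{21}$ with the diagonal unchanged, giving the first case.

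Finally I would note that the transformed jump $\ti v$ also inherits the required symmetry relations on symmetric contour parts; this is either a repetition of the same $\sigma_1$-conjugation computation applied to $\ti v=[d_-]^{\sigma_3}v[d_+]^{-\sigma_3}$, using $d(z^{-1})=d(z)^{-1}$ on the symmetric boundary values and the symmetry of $v$, or it follows formally once $\ti m$ is known to satisfy $\ti m(z)=\ti m(z^{-1})\sigma_1$ and to solve $\ti m_+=\ti m_-\ti v$. There is no real obstacle here; the only delicate point is purely bookkeeping, namely matching the boundary values $d_\pm$ at $z$ with those at $z^{-1}$ correctly in the two orientation conventions $\mathcal L^*\,\uparrow\,\uparrow\,\mathcal L$ and $\mathcal L^*\,\downarrow\,\uparrow\,\mathcal L$, which is routine once the $z\mapsto z^{-1}$ behaviour of $d_\pm$ is fixed by the hypothesis.
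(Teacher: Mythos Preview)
Your argument is correct; the verification of both the symmetry $\ti m(z)=\ti m(z^{-1})\sigma_1$ via $\sigma_1[d]^{-\sigma_3}\sigma_1=[d]^{\sigma_3}$ together with $d(z^{-1})=d(z)^{-1}$, and the jump formula $\ti v=[d_-]^{\sigma_3}v[d_+]^{-\sigma_3}$, is the standard one-line computation. Note, however, that the paper does not supply its own proof of this lemma at all: it is stated with a citation to \cite{KTa} and used without further comment, so there is nothing to compare your approach against beyond observing that your direct algebraic verification is exactly what the cited reference does.
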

The symmetry constraints described above will allow us to shorten notations and computations on starred parts of the contours. Indeed, if we know that $d(z)$ has a jump on $\mathcal L\cup \mathcal L^*$, with $d_+(z)=d_-(z) s(z)$ on $\mathcal L$, then the property $d(z^{-1})=d^{-1}(z)$ used in a vicinity of $\mathcal L$ gives a complete information about the jump on $\mathcal L^*$. The same is true for the jump matrices. 

\noprint{In particular,  the contour \eqref{Gammac} is symmetric with respect to $z\to z^{-1}$. To satisfy the symmetry condition for the extended function \eqref{defm}  we continue  \eqref{defchi} to $[q_1^{-1}, q^{-1}]$ as an odd function, $\chi(z)=-\chi(z^{-1})$.  This implies that we extend 
$X(z)$ given by \eqref{Chi} as an even function in a vicinity $\mathcal O_\rho$ of  $[q, q_1]\cup [q_1^{-1}, q^{-1}]$ (which is 
possible by the condition on $\rho$ in \eqref{nu}),
\be\label{Chi1}
X(z^{-1})=X(z),\quad z\in \mathbb C\setminus(\Sigma\cup\si_d\cup\si_d^*)\cap\mathcal O_\rho.
\ee
}
In $\C\setminus (-\infty, 0]$ introduce the  {\it phase function} 
\[
\Phi(z):=\Phi(z,\xi)=\frac{1}{2} \big(z - z^{-1}\big) + \xi\log z, \quad \xi:=\frac{n}{t},
\]
which is odd with respect to $z\to z^{-1}$, that is, $\Phi(z^{-1})=-\Phi(z)$. Note that $\E^{2 t\Phi(z)}=z^{2n}\E^{t(z - z^{-1})}$ is well defined in $\C\setminus\{0\}.$
The vector function  \eqref{defm} extended to $\mathcal Q^*$ by symmetry \eqref{symto}  solves  the following RHP (cf.\ \cite{dkkz, KTb,  EPT}):

\begin{RHproblem}[Initial meromorphic RHP statement] \label{RH1} 
Find a vector-valued function $m: \C \setminus \Sigma \to \C^{1\times2}$ which is  meromorphic in $\mathcal Q\cup \mathcal Q^*$ and continuous up to $\Sigma$ except at possibly the points $q, q_1, q^{-1},q_1^{-1}$. It has simple poles at $z_j^{\pm1}$, $j=1, \dots, N$, and satisfies:
\begin{itemize}
	\item  the jump condition $m_{+}(z)=m_{-}(z) v(z)$,  where
\[ 
v(z)=\left\{
\begin{array}{ll}
\begin{pmatrix}
0 & - \ol{R(z)} \E^{- 2 t \Phi(z)} \\
R(z) \E^{2 t \Phi(z)} & 1
\end{pmatrix}, & \quad z \in \T,\\
\begin{pmatrix}
1 & 0 \\
\chi(z) \E^{2t\Phi(z)} & 1
\end{pmatrix}, & \quad z \in [q, q_1],\\ [3mm]
\sigma_1 (v(z^{-1}))^{-1}\si_1, & \quad z \in  [q_1^{-1}, q^{-1}];
\end{array}\right.
\]
\item the residue conditions
\be \nn
\aligned
\res_{z=z_j} m(z) &= \lim_{z\to z_j} m(z)
\begin{pmatrix} 0 & 0\\ - z_j \gamma_j \E^{2 t\Phi(z_j)}  & 0 \end{pmatrix}, \quad j=1, \dots, N,\\
\res_{z=z_j^{-1}} m(z) &= \lim_{z\to z_j^{-1}} m(z)
\begin{pmatrix} 0 & z_j^{-1} \gamma_j \E^{2 t\Phi(z_j)} \\ 0 & 0 \end{pmatrix}, \quad j=1, \dots, N;
\endaligned
\ee
\item
the symmetry condition $m(z^{-1}) = m(z) \si_1$.
\item
the normalization condition $m_1(0) \cdot m_2 (0)= 1$ and $m_1(0) > 0$.
\item the resonant/non-resonant condition:
 If  $\chi(z)$ satisfies \eqref{chinon} at $\ti q $ then $m(z)$ has  finite limits  $m(\ti q^{\pm 1})\in \R^{1\times 2}$  as $z\to \ti q^{\pm 1}$, $\ti q\in \{q, q_1\}$. If \eqref{chirez} is fulfilled then\be\label{rezon}\aligned m(z)& = \left(\frac{C_1}{(z-\ti q)^{1/2}}, \, C_2\right)(1 + o(1)),\ \ C_1 C_2\neq 0, \ \mbox{or}\ \\  m(z)&=(C_1, C_2(z-\ti q))(1 +o(1)), \quad z\to \ti q, \quad C_1 C_2\neq 0.\endaligned\ee 
Respectively, at $\ti q^{-1}$ the analog of \eqref{rezon} holds by symmetry \eqref{symto}.
\end{itemize}
\end{RHproblem}

\begin{lemma} \label{th:RH1}
Suppose that the initial data of the Cauchy problem \eqref{tl}--\eqref{main} satisfy \eqref{decay1} and
let \eqref{scatt}  be the associated initial right scattering data.  Then the vector function $m(z)=m(z,n,t)$
defined by \eqref{defm}, \eqref{symto} is the unique solution of RHP~\ref{RH1}.
\end{lemma}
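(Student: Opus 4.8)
The plan is to separate the claim into an \emph{existence} part — that $m$ of \eqref{defm}, extended by \eqref{symto}, satisfies every item of RHP~\ref{RH1} — and a \emph{uniqueness} part, which is where the actual work lies.

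\textbf{Existence.} Each bullet of RHP~\ref{RH1} is a direct translation of scattering theory. The jump on $\T$ is the scattering relation \eqref{pst} rewritten with the time evolution of $R(z,t)$ and the $z^{\pm n}\E^{\pm t(z-z^{-1})/2}$ factors absorbed into $\E^{2t\Phi(z)}$; the lower-triangular jump on $[q,q_1]$ reflects that $T\psi_\ell$ jumps to its complex conjugate across that cut while $\psi$ stays real there, producing exactly the entry $\chi(z)\E^{2t\Phi(z)}$ of \eqref{defchi}; the jump on $[q_1^{-1},q^{-1}]$ is then dictated by $m(z^{-1})=m(z)\si_1$. The residue conditions at $z_j^{\pm1}$ come from the simple poles of $T(z)$ at the eigenvalues together with the proportionality $\psi_\ell(z_j,\cdot,t)=c_j(t)\psi(z_j,\cdot,t)$ and the definition of $\gamma_j(t)$, exactly as in \cite{emt3,emt14}. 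The normalization $m_1(0)m_2(0)=1$ is the $z^0$–coefficient of Lemma~\ref{asypm} (equivalently \eqref{imp98} at $z=0$), and $m_1(0)>0$ since $T$ and both Jost components have positive limits at $z=0$. Finally, the resonant/non-resonant alternative \eqref{rezon} is read off from \eqref{chinon}–\eqref{chirez} and \eqref{Chi}: a non-resonant $\ti q$ has $W(\ti q)\neq0$ and keeps $m$ bounded, while at a resonant $\ti q$ either $m_1$ inherits the $(z-\ti q)^{-1/2}$ singularity of $\chi$, or, after the appropriate renormalization of the Jost function at the band edge, $m_2$ vanishes linearly; the statement at $\ti q^{-1}$ is the symmetric image.

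\textbf{Uniqueness.} Let $m,\hat m$ be two solutions and set $D(z):=m_1(z)\hat m_2(z)-m_2(z)\hat m_1(z)$, the determinant of the matrix with rows $m$ and $\hat m$. Since both rows have the same jump $m_+=m_-v$ with $\det v\equiv1$ on all of $\Sigma$ (each of the three jump matrices of RHP~\ref{RH1} is unimodular; on $\T$ this uses $\ol{R(z)}=R(z)^{-1}$), $D$ has no jump across $\Sigma$. At $z_j^{\pm1}$ the residue conditions are left multiplications by rank-one nilpotent matrices, so the polar parts of the two rows are proportional and the poles of $D$ cancel; at a resonant $\ti q^{\pm1}$ the worst possible singularity of $D$ is $(z-\ti q)^{-1/2}$, an isolated integrable singularity of an otherwise holomorphic function, hence removable — here one also needs that $m$ and $\hat m$ realize the \emph{same} alternative in \eqref{rezon}, which I would prove by contradiction (mixing branches forces $D$ simultaneously to vanish and to blow up at $\ti q$). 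Thus $D$ extends to an entire function; the symmetry $m(z^{-1})=m(z)\si_1$ gives $D(z^{-1})=-D(z)$, and $D$ has finite limits at $0$ and, by symmetry, at $\infty$, so $D$ is bounded, hence constant by Liouville, hence $\equiv0$. Consequently $\hat m=\phi\,m$ for a scalar meromorphic $\phi$ (well defined since $m(0)\neq0$). The same reasoning applied to $\phi$ — $\phi_+m_+=\hat m_+=\hat m_-v=\phi_-m_+$ gives $\phi_+=\phi_-$; the poles at $z_j^{\pm1}$ cancel in $\hat m_i/m_i$; $\phi$ stays bounded at the resonant points; and $\phi(z^{-1})=\hat m_1(z^{-1})/m_1(z^{-1})=\hat m_2(z)/m_2(z)=\phi(z)$ — shows $\phi$ is entire and bounded, so constant, and the normalization $m_1(0)m_2(0)=\hat m_1(0)\hat m_2(0)=1$ with $m_1(0),\hat m_1(0)>0$ forces $\phi=1$, i.e.\ $\hat m=m$.

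\textbf{Main obstacle.} I expect the delicate point to be the bookkeeping at the band edges $q,q_1$ (and, secondarily, at $\pm1$): in the resonant case one must rule out spurious poles or non-integrable singularities of $D$ and $\phi$ there, which forces one to pin down exactly which branch of \eqref{rezon} a given solution satisfies and to show two solutions cannot differ in this respect. This is the Toda counterpart of the uniqueness arguments of \cite{KTb,EPT}; the square-root (rather than integer order) nature of the edge singularities, together with the fact that it is the vector — not the matrix — problem that is well posed, is precisely what makes the naive determinant argument require this extra care.
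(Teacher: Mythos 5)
The paper offers essentially no proof of this lemma of its own: the existence part is the standard translation of the scattering relations (the paper points to \cite{dkkz, KTb, EPT} when stating RHP~\ref{RH1}), and for uniqueness it only remarks that the argument is ``completely analogous to the KdV shock case (\cite{EPT})''. Your sketch follows exactly that route, and its first half is sound: $\det v\equiv 1$ on all of $\Sigma$ (on $\T$ because $\ol{R}=R^{-1}$), the residue conditions are nilpotent so the polar parts of $D=m_1\hat m_2-m_2\hat m_1$ cancel, the edge behaviour is at worst $O((z-\ti q)^{-1/2})$ and hence removable, and oddness of $D$ under $z\mapsto z^{-1}$ forces the Liouville constant to vanish. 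One side remark is inaccurate but harmless: mixing the two alternatives of \eqref{rezon} does not make $D$ ``simultaneously vanish and blow up''; any combination of branches still produces only a removable singularity, so no separate argument that both solutions realize the same branch is needed at this stage.

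The genuinely incomplete step is the passage from $D\equiv 0$ to $\hat m=\phi\,m$ with $\phi\equiv 1$. The parenthetical ``well defined since $m(0)\neq 0$'' does not make $\phi$ meromorphic on $\C\setminus\Sigma$: a priori $\phi=\hat m_1/m_1=\hat m_2/m_2$ is controlled only away from zeros of the components. At a zero of one component where the other is nonzero, $D\equiv 0$ transfers the zero to $\hat m$ and $\phi$ stays regular; but at a common zero of $m_1$ and $m_2$ — which the conditions of RHP~\ref{RH1} do not exclude away from $\si_d$ and the edges — $\phi$ may have a pole, and your Liouville argument for $\phi$ breaks down exactly there. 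This is the point that needs an extra idea: either run the quotient against the explicit solution \eqref{defm} and use that its components can vanish simultaneously only where the Wronskian $W(z,t)$ vanishes, i.e.\ at eigenvalues or resonant edges, which are already handled by the residue and \eqref{rezon} conditions; or exchange the roles of $m$ and $\hat m$ and use the symmetry $\phi(z^{-1})=\phi(z)$ together with the normalization to rule poles out. Relatedly, the identity $\phi_+=\phi_-$ on $\Sigma$ uses that $m_+$ does not vanish identically as a vector on any subarc, which deserves a word. These repairs are standard and are what the cited KdV argument in \cite{EPT} actually carries out, but as written your quotient step has a gap.
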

The proof of uniqueness is completely analogous to the KdV shock case (\cite{EPT}).

  \noprint{
Note that vector function \eqref{defm} has the following asymptotic behavior near point $z=0$ (see \cite{emt14}):
\be \label{asm1}
\aligned
m_1(z,n,t) & = \prod_{j=n}^\infty 2a(j,t) \Big(1 + 2 z \sum_{m=n}^\infty b(m,t)\Big) + O(z^2),\\
m_2(z,n,t)&=\prod_{j=n}^\infty \big(2a(j,t)\big)^{-1}\Big(1 - 2z \sum_{m=n+1}^\infty b(m,t)\Big)
+  O(z^2).
\endaligned
\ee
The uniqueness lemma guarantees that solving RH-1 and expanding the solution at $z=0$ we can get asymptotics for the solution of \eqref{tl}--\eqref{main}. In fact, for recovering $\{b(n,t)\}$ we use the more convenient formula
\begin{lemma}\label{asypm} For $z\to 0$ the following asymptotics holds:
 \begin{align} \nonumber
& m_1(z,n,t) m_2(z,n,t) = 1 + 2 z b(n,t) \\
& \qquad + 4 z^2 \Big( a(n-1,t)^2 + a(n,t)^2 + b(n,t)^2 - \frac 12 \Big) + O(z^3).
\end{align}
\end{lemma}
\begin{proof} First of all we observe that 
 \eqref{spec5} is equivalent to the equality
 \be\label{equ}2az\zeta^{-1} = 1 - 2bz - 2a\zeta z + z^2,\ee from which it follows
\be \label{zetaz}
\zeta = \zeta(z) = 2 a z + 4 a b z^2 + O(z^3), \quad z \rightarrow 0.
\ee
Recall now ( \cite[Equ.\ (10.28)]{tjac}) that

\[
T(z) = \frac{z^{-1} - z}{2 a(n) \big(\psi(z,n)\psi_\ell(z,n+1) - \psi(z,n+1)\psi_\ell(z,n) \big)}\to T(0)\neq 0,
\]
therefore
\[
m_1(z,n) m_2(z,n) = T(z) \psi_{\ell}(z,n) \psi(z,n)
= \frac{1-z^2}
{2z a(n) \Big(\frac{\psi_\ell(z,n+1)} {\psi_{\ell}(z,n)} - \frac{\psi(z,n+1)} {\psi(z,n)} \Big)}.
\]
We have therefore to expand the function 
\[
P(z) := 2z a(n) \Big(\frac{\psi_\ell(z,n+1)} {\psi_{\ell}(z,n)} - \frac{\psi(z,n+1)} {\psi(z,n)}\Big)
\]
up to $O(z^3)$ as $z\rightarrow 0$. It is evident from \eqref{asm1} that
\[
2z a(n) \frac{\psi(z,n+1)} {\psi(z,n)} = 4 a(n)^2 z^2 + O(z^3).
\]
By use of the transformation operators we get then
\[
P(z) = \frac{2 a z \zeta^{-1} + 2a K_{1}^-(n+1) z + 2a K_{2}^-(n+1) z \zeta}{1 + K_{1}^-(n) \zeta + K_{2}^-(n)  \zeta^2} - 4 a(n)^2 z^2 + O(z^3).
\]
Thus,
\[
P(z) = 1 - 2 b(n) z + z^2 \big(1 -  4 a(n)^2 - 4 a(n-1)^2 \big) + O(z^3).
\]
Therefore
\begin{align*}
& m_1(z,n) m_2(z,n) = (1-z^2) G(z)^{-1} \\
&\quad = 1 + 2 b(n) z + 4 z^2 \Big(a(n-1)^2 + a(n)^2 + b(n)^2 - \frac 12 \Big) + O(z^3).
\end{align*}
\end{proof}
}

The behavior of solutions of such RHPs is determined mostly by the behavior
of the real part of the phase function $\Phi(z, \xi)$  which depends on the value of the parameter $\xi=\tfrac{n}{t}$. 
The signature table of $\re \Phi(z, \xi)$  for the region  \eqref{mathcalIres} is depicted in
Fig.~\ref{fig:Phi}.
\begin{figure}[ht] 
\begin{tikzpicture}[scale=0.8]

\fill[black!5] (4.5,-2) circle (2.8cm);
\path [fill=white] [out=90, in=110] (3,-2) to (4.5, -2) [out=-90, in=250] (3,-2) to (4.5, -2);
\draw[thick] (4.5,-2) circle (2.8cm);
\path [fill=black!5] (-3.5,0.5) .. controls (-0.5,-1.5) and (-0.5,-2.5) .. (-3.5,-4.5) -- (-3.5,0.5); 

\draw [help lines,->] (-3.5,-2) -- (8.5,-2) coordinate (xaxis);
 \draw [help lines,->] (4.5,-5) -- (4.5,1) coordinate (yaxis);

\draw[thick] (2.5,-2) -- (3.8,-2);
\draw[thick] (-2.6,-2) -- (-0.5,-2);

\draw[thin] (4.2,-2) circle (0.10cm);
\filldraw (4.2,-2) circle (0.5pt); 
\draw (-3,-2) ellipse (0.12cm and 0.1cm); 
\filldraw (-3,-2) circle (0.5pt) node[above]{\small{${\T_j^*}$}};

\draw[thin] (2.1,-2) circle (0.1cm);
\filldraw (2.1,-2) circle (0.5pt) node[above]{\small{${\T_k}$}};
\draw (1,-2) ellipse (0.12cm and 0.1cm); 
\filldraw (1,-2) circle (0.5pt) node[above]{\small{${\T_k^*}$}};

\draw[gray, thick, densely dotted] (-3.5,0.5) .. controls (-0.5,-1.5) and (-0.5,-2.5) .. (-3.5,-4.5);
\draw[out=90, in=110, gray, thick, densely dotted] (3,-2) to (4.5, -2);
\draw[out=-90, in=250, gray, thick, densely dotted] (3,-2) to (4.5, -2);

\node at (-2.6,-1.1) {$\re \Phi < 0$};
\node at (3.5,-0.5) {$\re \Phi < 0$};
\node at (0,-0.5) {$\re \Phi > 0$};
\node at (6.7,0.2) {$\T$};
\filldraw (3.8,-2) circle (1pt) node[below]{\small{$q$}};
\filldraw (2.5,-2) circle (1pt) node[below]{\small{$q_1$}};
\filldraw[darkgray] (3,-2) circle (1pt) node[below]{\small{$z_0$}};
\filldraw (4.5,-2) circle (1pt) node[above right]{${0}$};
\filldraw (-0.5,-2) circle (1pt) node[below]{\small{$q_1^{-1}$}};
\filldraw (-2.6,-2) circle (1pt) node[below]{\small{$q^{-1}$}};
\filldraw[darkgray] (-1.25,-2) circle (1pt) node[above right]{\small{$z_0^{-1}$}};

\draw [->, thick] (3.4, -2) -- (3.3,-2);
\draw [->, thick] (-1.8, -2) -- (-1.9,-2);
\draw [->, thick] (4.5, 0.8) -- (4.49,0.8);
\draw [->, thin] (4.18,-1.9) -- (4.17,-1.9);
\draw [->, thin] (-3.03,-1.9) -- (-3.04,-1.9);
\draw [->, thin] (2.08,-1.9) -- (2.07,-1.9);
\draw [->, thin] (0.98,-1.9) -- (0.97,-1.9);

\end{tikzpicture}
\caption{Signature table for $\re \Phi(z, \xi)$ for $\xi \in (\xi^\prime_{cr},\xi_{cr})$.} \label{fig:Phi}
\end{figure}
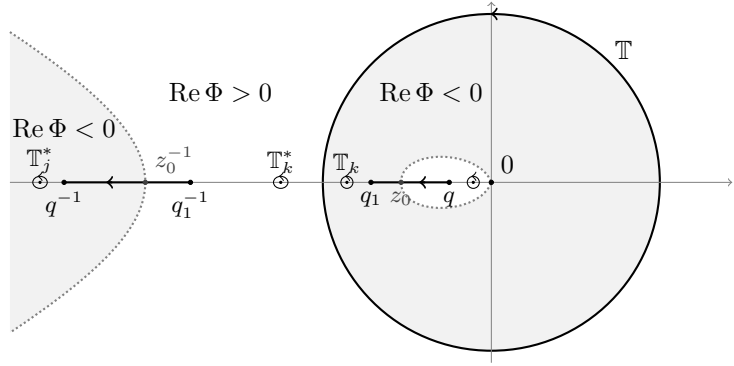
One part of the eigenvalues in $\mathcal Q$ lies in the set $\re\Phi(z)>0$ (namely $z_j\in (0,q)$),
while the remaining eigenvalues belong to the domain $\re\Phi(z)<0$ ($z_k\in (-1, q_1)\cup (0,1)$).
As outlined in \cite{dkkz, KTa}, one can redefine\footnote{An associated transformation on $\mathcal Q^*$   follows immediately from the symmetry condition.} $m(z)$ on $\mathcal Q$ by conjugating it with an invertible bounded matrix-function such that the residue conditions at $z_j \in \si_d$ are replaced by jump conditions along non-intersecting small circles around points of $\si_d$. The respective jump matrices will be exponentially close as $t\to\infty$ to the unit matrix for all further transformations of RHP~\ref{RH1}. By this transformation, the main contour $\Sigma$ is not changed and the structure of the jump matrices there remains qualitatively the same as in RHP~\ref{RH1} with respect to
decay/oscillation in $t$ and symmetry. 

Indeed, let $\epsilon>0$ be sufficiently small such that the circles $\mathbb T_j=\{ z : |z - z_j|=\epsilon \}$, $z_j\in \si_d$,
 do not intersect, do not contain the origin, and lie away from $\T \cup [q, q_1]$ (the precise value of $\epsilon$ will be chosen later).
Denote their images under the map $z\mapsto z^{-1}$ by $\mathbb T_j^*$. We orient  $\T_j$ and $\T_j^*$ counterclockwise, that is, $\T_j^*\,\uparrow\,\uparrow\, \T_j$.
Note that the curves $\mathbb T_j^*$
are not circles, but they surround $z_j^{-1}$ with minimal distance from the curve to $z_j^{-1}$ given by $\frac{\epsilon}{z_j(z_j - \epsilon)}$.
Introduce the Blaschke product
\be\label{Bla}
\Pi(z)= \prod_{z_j \in (q, 0)} |z_j| \frac{z - z_j^{-1}}{z-z_j},
\ee
and note that $\Pi(z^{-1})=\Pi^{-1}(z)$, $\Pi(0)>0$.  
Set $\mathbb D_\epsilon^j=\{z: \, |z-z_j|< \epsilon, z_j\in \si_d\}$ and
\[
A(z)= \begin{cases}
\begin{pmatrix}1& \frac{z-z_j}
{z_j \gamma_j  \E^{2t\Phi(z_j)} }\\ 0 &1\end{pmatrix}, & \quad  z\in \mathbb D_\epsilon^j, \quad z_j\in (q, 0),\\
\begin{pmatrix} 1 & 0 \\
\frac{z_j \gamma_j \E^{2t\Phi(z_j)} }{z-z_j} & 1\end{pmatrix}, & \quad z\in \mathbb D_\epsilon^j, \quad z_j\in (-1, q_1)\cup(0,1).
\end{cases}
\]
Let $m(z)$ be the solution of RHP~\ref{RH1} and define  $m^{\mathrm{ini}}(z)=m^{\mathrm{ini}}(z,n,t)$ as
\be\label{trans7}m^{\mathrm{ini}}(z) = \begin{cases} m(z) A(z)[\Pi(z)]^{-\sigma_3},&\quad  z\in \mathbb D_\epsilon^j, \quad z_j\in\si_d\\
m(z)[\Pi(z)]^{-\sigma_3},&\quad z\in\mathcal Q\setminus\bigcup_{z_j\in\si_d} \overline{\mathbb D_\epsilon^j}, \\
m^{\mathrm{ini}}(z^{-1})\si_1, &\quad z\in \mathcal Q^*.\end{cases}
\ee
This vector function is the unique solution of the following RHP (cf.\ \cite{KTb}):
\begin{RHproblem}[Holomorphic RHP for $\xi_{cr}^\prime\leq \xi \leq \xi_{cr}$]  \label{RH2} 

 Find a holomorphic vector function away from $\Sigma\cup \bigcup_{j=1}^N (\T_j\cup\T_j^*)$ that satisfies
	\begin{itemize}
		\item the jump condition $m_{+}^{\mathrm{ini}}(z)=m_{-}^{\mathrm{ini}}(z) v^{\mathrm{ini}}(z)$, where
\[
	v^{\mathrm{ini}}(z) = \left\{
	\begin{array}{lr}
\begin{pmatrix}	0 & - \frac{\Pi^2(z)\ol{R(z)}}{\E^{2 t \Phi(z)}} \\
	\frac{R(z) \E^{2 t \Phi(z)}}{\Pi^{2}(z)} & 1
	\end{pmatrix}, & z \in \T,\\[3mm]
\begin{pmatrix}
	1 & 0 \\
	\Pi^{-2}(z)\chi(z) \E^{2t\Phi(z)} & 1
	\end{pmatrix}, & z \in [q, q_1],\\[3mm]
\begin{pmatrix}1 & \frac{(z-z_j)\Pi^2(z)}{z_j \gamma_j  \E^{2t\Phi(z_j)} }\\
	0 &1\end{pmatrix},  &  z\in\T_j, \ z_j\in (q,0),	\\[3mm]
	\begin{pmatrix}1 & 0\\
\frac{z_j\gamma_j  \E^{2t\Phi(z_j)} }{(z-z_j)\Pi^2(z)} &1\end{pmatrix}, &  z\in\T_j, \  z_j \in \si_d\setminus(q, 0), \\
\si_1 v^{\mathrm{ini}}(z^{-1})\si_1, & z \in \bigcup_{j=1}^N \T_j^*,\\[1mm]
\si_1 (v^{\mathrm{ini}}(z^{-1}))^{-1}\si_1 & z\in [q_1^{-1}, q^{-1}];
	\end{array}\right.
 \]
\item $m^{\mathrm{ini}}(z^{-1}) = m^{\mathrm{ini}}(z) \si_1$;
\item $ m_1^{\mathrm{ini}}(0) \cdot m_2^{\mathrm{ini}}(0) = 1$,
	 $m_1^{\mathrm{ini}}(0) > 0$.
	\item The resonant/non-resonant condition of RHP~\ref{RH1} holds for $m^{\mathrm{ini}}(z)$ too.
	\end{itemize}
\end{RHproblem}

To summarize, for all values of $\xi\in [\xi_{cr}^\prime, \xi_{cr}]$ we  performed a one-to-one transformation and replaced the meromorphic RHP by the holomorphic RHP,
$$
[m(z,n,t); \mbox{RH problem~\ref{RH1}}] \longmapsto [m^{\mathrm{ini}}(z,n,t); \mbox{RH problem~\ref{RH2}}].
$$
In the next section we list some results established in \cite{emt14}. We represent them in terms of the variable $z$ 
and modify them to take the resonances and the additional discrete spectrum into account.

\section{Reduction to the model RH problem}\label{sec:tomod} 

Let $\xi \in [\xi_{cr}^\prime, \xi_{cr})$. Before we describe the transformations applicable in \eqref{mathcalIres} to obtain the model problem for this region, let us recall the $g$-function mechanism. For shock waves, the $g$-function
 proved its efficiency for several completely integrable equations (cf.\ \cite{KotlyaMinak, egkt}). In our case, the $z$-analog of the $g$-function constructed in \cite{emt14} looks as follows. Set
 \be\label{defQ}Q(z) = \sqrt{\frac{z-y}{z-q}\frac{z-y^{-1}}{z-q^{-1}}}, \qquad z\in\C\setminus \big([q, y]\cup [y^{-1}, q^{-1}]\big),
 \ee
 where $y$ is a point\footnote{The point $\gamma(\xi)$ in the introduction is connected with  $y=y(\xi)$ by $\gamma=\frac{y + y^{-1}}{2}$. In the present paper we use the notation $\la_y$ instead of $\gamma$ (see Remark \ref{rem6}). } which can be computed 
 implicitly from the condition 
 \be\label{defy}\int_{-1}^y P(s)\,Q(s)\frac{ds}{s}=0,
 \ee
with
\be\label{defPe}
P(s):=s+s^{-1} +2\xi +\frac{1}{2}\big(y + y^{-1} - q - q^{-1}\big).
\ee
 As is shown in \cite{emt14},  equation \eqref{defy} has the unique solution $y=y(\xi) \in (q_1, q)$ for any $\xi \in [\xi_{cr}^\prime, \xi_{cr})$. The function $y(\xi)$ is continuous and monotonous, with  $y(\xi^\prime_{cr})=q_1$ and $y(\xi_{cr})=q$.
 Moreover, $y(\xi)$ is differentiable with respect to $\xi \in (\xi_{cr}^\prime, \xi_{cr})$ (cf.\ \cite[App.]{emt14}).
For any $y$, the function \eqref{defQ}  satisfies $Q^2(z^{-1})=Q^2(z)$
which implies evenness,  $Q(z^{-1})=Q(z)$, because $Q(1^{-1})=Q(1)$. With the chosen orientation on $[q, y]\cup [y^{-1}, q^{-1}]$ we denote
 $Q_+(z)=Q(z-\I 0)$. From the evenness of $Q$ outside of $[q, y]\cup [y^{-1}, q^{-1}]$ we obtain oddness of $Q_+$, $Q_+(s)=-Q_+(s^{-1})$ for $s\in[q, y]\cup [y^{-1}, q^{-1}]$. Note that we choose the square root in \eqref{defQ} such that $Q(z)>0$ for $z\in (q, +\infty)$.
Introduce the $g$-function by 
\be\label{g}
g(z)=g(z,\xi) = \frac{1}{2} \int_{1}^z P(s)Q(s)\frac{ds}{s},\quad z\in \C\setminus (-\infty, 1).
\ee

\begin{lemma}\label{propg} The function $g(z)$ satisfies the following properties:
\begin{enumerate} 
[label=\rm{(\alph*)}]
\item $g(z)$ is  single valued on $\C \setminus [q^{-1}, q]$, moreover,
\be\label{oddnessg} g(z^{-1}) = - g(z) \ \ \mbox{for $z\in\C\setminus [q^{-1}, q]$};\ee
\item $\re g(z) = 0$ for $z \in [q,y] \cup [y^{-1}, q^{-1}]\cup \{z: |z|=1\}$; 
\item $g(q)=g(q^{-1})=0$;
\item $g_-(z)=-g_+(z)$ for $z\in [q,y] \cup [y^{-1}, q^{-1}]$; 
\item $\Phi(z)- g(z) =  K(\xi) +O(z)$ as $z \to 0$, where\ $K(\xi)\in \R$;
\item $g_+(z)-g_-(z)= 2\I B$ for $z \in [y, y^{-1}]$, where
\be \label{iB}
B:=-\I\int_q^y P(s)Q_+(s)\frac{ds}{s} \in  \R_+.
\ee
In particular, $g_\pm(y)=g_\pm(y^{-1})=\pm\I B$. 
\end{enumerate}
\end{lemma}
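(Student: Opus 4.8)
The plan is to verify the seven properties (a)--(g) directly from the explicit formula \eqref{g}, exploiting the evenness of $Q$ and the structure of $P$. First I would establish single-valuedness (a): the integrand $P(s)Q(s)\frac{ds}{s}$ is meromorphic on $\C\setminus([q,y]\cup[y^{-1},q^{-1}])$ with a possible residue only at $s=0$ (where $Q(0)$ is finite and nonzero and $P(0)=2\xi+\tfrac12(y+y^{-1}-q-q^{-1})$, a real constant) and at $s=\infty$. One checks that the residue at $s=0$ equals $P(0)Q(0)$ and the residue at $s=\infty$ is $-P(0)Q(0)$ after matching branches, while the periods around the cuts $[q,y]$ and $[y^{-1},q^{-1}]$ are purely imaginary; condition \eqref{defy} is precisely what kills the period that would otherwise obstruct single-valuedness across the gap $[q^{-1},q]$. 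Hence $g$ is well-defined on $\C\setminus[q^{-1},q]$. The oddness \eqref{oddnessg} follows by the substitution $s\mapsto s^{-1}$ in the integral: under this substitution $\frac{ds}{s}\mapsto-\frac{ds}{s}$, $Q(s)\mapsto Q(s)$ (evenness), and $P(s)\mapsto P(s)$ since $s+s^{-1}$ is invariant, so $g(z^{-1})=-g(z)$ once one notes the base point $z=1$ is fixed.

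Next, for (b) and (d): on the cuts $[q,y]\cup[y^{-1},q^{-1}]$ the function $Q$ changes sign across the cut, $Q_+=-Q_-$, so $g_+=-g_-$ there, which is (d); combined with continuity of $\re g$ (the jump is purely in the sign) this forces $\re g_+=\re g_-=0$ on these cuts once we know $g$ is real on a connecting arc — and indeed $g(q)=0$ by (c) (the base point and $q$ are joined by a path on which one integrates a real integrand, since for $s\in(q,+\infty)$ we chose $Q(s)>0$ and $P(s)$ is real there... more carefully, one integrates from $1$ to $q$ staying on the real axis outside the cut where the integrand may be purely imaginary, giving $g(q)\in\I\R$; then using oddness $g(q^{-1})=-g(q)$ and the fact that $q^{-1}$ is the reflection, together with reality considerations, pins $g(q)=0$). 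On the unit circle $\re g=0$ follows from oddness \eqref{oddnessg}: for $|z|=1$ we have $z^{-1}=\bar z$, so $g(\bar z)=-g(z)$; combined with the Schwarz reflection symmetry $\overline{g(z)}=g(\bar z)$ (which holds because the integrand has real coefficients and the contour can be chosen symmetric), we get $\overline{g(z)}=-g(z)$, i.e.\ $\re g(z)=0$ on $\T$. For (c), $g(q)=0$ as just sketched, and $g(q^{-1})=-g(q)=0$ by oddness.

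For (e), the large-$z$... rather small-$z$ expansion: near $z=0$ one expands $\Phi(z)=\tfrac12(z-z^{-1})+\xi\log z$ and $g(z)=\tfrac12\int_1^z P(s)Q(s)\frac{ds}{s}$. Splitting off the singular part of the integrand at $s=0$, write $P(s)Q(s)/s = \frac{P(0)Q(0)}{s} + \text{(analytic at }0)$; the residue term integrates to $P(0)Q(0)\log z$, and one must check $P(0)Q(0)=2\xi$ — this should follow from the normalization built into $P$ (the constant $\tfrac12(y+y^{-1}-q-q^{-1})$ and the value $Q(0)=\sqrt{yq^{-1}\cdot y^{-1}q^{-1}}=\dots$ collapsing appropriately), matching the $\xi\log z$ term of $\Phi$; similarly the $\tfrac12 z^{-1}$ pole of $\Phi$ must be cancelled, forcing a matching condition on the $z^{-1}$-coefficient of $g$, which again is guaranteed by the definition of $P$. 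The leftover is $\Phi(z)-g(z)=K(\xi)+O(z)$ with $K(\xi)$ the (real, by the reflection symmetry) constant of integration. Finally (f): the jump of $g$ across the band $[y,y^{-1}]$ (meaning the segment of the real axis between $y$ and $q^{-1}$, through which one passes... precisely the portion where $g$ picks up the period) equals the period integral $2\I B$ with $B$ as in \eqref{iB}; that $B$ is real is because $Q_+$ is purely imaginary on $[q,y]$ (it is the boundary value of a square root of a quantity that is negative there) times the real $P(s)/s$, and positivity $B\in\R_+$ follows from a sign analysis of $P$ and $Q_+$ on $(q,y)$ — I expect this to use that $P$ does not vanish on $(q,y)$, or controlled sign changes, which is the delicate point.

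The main obstacle I anticipate is \textbf{(f)}, specifically the positivity $B>0$: unlike the purely algebraic identities in (a), (c), (d), this requires genuine control of the sign of $P(s)Q_+(s)/s$ on the interval $(q,y)$, and $P$ is a nontrivial function of $\xi$ and $y(\xi)$. One likely route is to differentiate $B$ with respect to $\xi$ (using differentiability of $y(\xi)$ from \cite[App.]{emt14}) and track monotonicity, checking the sign at the endpoint $\xi=\xi_{cr}$ where $y=q$ and $B\to 0$; alternatively one invokes the interpretation of $B$ as (a multiple of) an Abel period / harmonic-measure quantity that is manifestly positive. A secondary subtlety is making the branch bookkeeping in (a) and (e) fully rigorous — precisely which branch of $Q$ is used on each side of each cut, and confirming that the residues at $0$ and $\infty$ are the only obstructions to single-valuedness, with \eqref{defy} exactly neutralizing the gap period; this is routine but error-prone, so I would set up a single consistent choice (namely $Q(z)>0$ for $z\in(q,+\infty)$, as stated) and propagate it carefully through all seven items.
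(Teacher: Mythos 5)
Your general plan (direct verification from \eqref{g}, oddness via the substitution $s\mapsto s^{-1}$, parity/reality arguments, period accounting) is the same route the paper takes for (a)--(d) (the paper disposes of (e)--(f) by citing the $\lambda$-plane computations of \cite{emt14}), but as written your argument has concrete errors. First, $P(0)$ is not finite: by \eqref{defPe}, $P(s)=s+s^{-1}+2\xi+c$ with $c=\tfrac12(y+y^{-1}-q-q^{-1})$, so the integrand $P(s)Q(s)/s$ has a \emph{double} pole at $s=0$, and $\res_{s=0}P(s)Q(s)/s=(2\xi+c)Q(0)+Q'(0)=2\xi$, since $Q(0)=1$ and $Q'(0)=-c$; it is not ``$P(0)Q(0)$''. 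This wrecks the residue bookkeeping in your (a) (that residue is precisely the source of the $\xi\log z$ term which item (e) asserts cancels against $\Phi$), and in your (e) the matching condition you propose, ``$P(0)Q(0)=2\xi$'', would force $c=0$, which is false in general: the cancellation only comes out once the $s^{-1}$ part of $P$ and the $Q'(0)$ term are both included. Moreover, the role you assign to \eqref{defy} is misplaced: the loop integral of $\tfrac12PQ\,ds/s$ around the cut $[q,y]$ does \emph{not} vanish (it is exactly what produces the constant jump $2\I B$ of item (f)); what \eqref{defy} kills is the contribution of the gap $[y,y^{-1}]$ (equivalently $g_\pm(y)=g_\pm(y^{-1})$), and single-valuedness across $(-\infty,q^{-1})$ is then obtained, as in the paper, by first proving $g_\pm(q^{-1})=g(q)=0$.

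Second, your proof of (c) -- on which you make (d) and (b)-on-the-bands depend -- does not work: on $(q,0)\cup(0,1)$ the integrand is \emph{real} (there $Q^2>0$, $P\in\R$, $ds/s\in\R$), not purely imaginary, so integrating from $1$ to $q$ along the real axis does not give $g(q)\in\I\R$; that path also passes the pole at $s=0$, which you never address, and the closing ``reality considerations'' do not pin $g(q)=0$. The paper's mechanism uses two ingredients absent from your (c): condition \eqref{defy}, which (via $s\mapsto s^{-1}$) makes the real contribution of the middle segment $[y,y^{-1}]$ vanish, and the antisymmetry $Q_+(s)=-Q_+(s^{-1})$, which makes the purely imaginary band contributions over $[q,y]$ and $[y^{-1},q^{-1}]$ cancel; this yields $g_\pm(q^{-1})=g(q)$, and oddness $g(q^{-1})=-g(q)$ then forces $g(q)=g_\pm(q^{-1})=0$. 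Note also that (d) does not follow from $Q_+=-Q_-$ alone: on $[q,y]$ one has $g_++g_-\equiv 2g(q)$, so (c) is needed first. Your Schwarz-reflection argument for $\re g=0$ on $\T$ is a fine variant of the paper's direct computation, and flagging the positivity $B\in\R_+$ as the delicate point is fair (the paper itself imports it from \cite{emt14}), but as it stands the proposal does not establish (a), (c), (d), (e), nor (b) on the bands.
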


\begin{proof}
{\rm (a)--(c)} Since $Q(z^{-1})=Q(z)$ and $P(z^{-1})= P(z)$ for  $z\in\C\setminus [q^{-1}, q]$, then choosing a contour from $1$ to $z$ which does not have common points with the interval $[q^{-1}, q]$, we obtain \eqref{oddnessg} by the simple change of variables $s\to s^{-1}$.
\noprint{\[
g(z^{-1}) = \frac{1}{2} \int_{1}^{z^{-1}}P(s)Q(s)\frac{ds}{s}=- \frac{1}{2} \int_{1}^{z}P(s^{-1})Q(s^{-1})\frac{ds}{s}
=-g(z).
\]}
Condition \eqref{defy} implies $g_\pm(y)=g_\pm(y^{-1})$. 
Since the integrand $P(s)Q_\pm(s)s^{-1}$ is purely imaginary for $s \in [q,y]\cup [y^{-1}, q^{-1}]$, we have 
\be\label{forma}\re g(q)=\re g_\pm(y)=\re g_\pm(y^{-1})=\re g_\pm(q^{-1}).
\ee 
Oddness of $Q_+(s)$ also implies  $\im g(q)=\im g_\pm(q^{-1})=0$.  Together with \eqref{forma} and the oddness of $g$ this yields 
$g_+(q^{-1})=g_-(q^{-1})=g(q)=0$. Moreover, for $|s|=1$ we have $Q^2(\ol s)=Q^2(s)\in\R_+$, and therefore $\im Q(s)P(s)=0$. 
Since $\frac{ds}{s}\in\I \R$, this implies $\re g(z)=0$ for $|z|=1$, items {\rm (b)--(c)} are thus proved. They imply that  $g(z)$ does not have a jump along 
$(-\infty, q^{-1})$, and this shows {\rm (a)}. Note that these properties improve \cite[Lemma 3.2]{emt14} (see, e.g.\ \cite[Equ.\ (3.25)]{emt14}).
The above considerations imply an additional property,
 \[
 \re g_{\pm}(-1)=0.
 \]
Items {\rm (e)--(f)} are $z$-analogs of \cite[Equ.\ (3.21) and (3.24)]{emt14}, and can be obtained by a simple change of variable $(\la, +)\mapsto z$. The constant $B=B(\xi)$ in {\rm (f)} is the same as \cite[Equ.\ (3.21)]{emt14}.
\end{proof}
\begin{remark} \label{rem6}
The point $y(\xi)$ defines the edge $\la_y =\frac 1 2(y + y^{-1})$ of the  Whitham zone for the Toda shock case. The point $y(\xi)$  coincides with the stationary phase point $z_0(\xi)$ for $\Phi(z,\xi)$ at $\xi=\xi_{cr}$, that is, $z_0(\xi_{cr})=q$.  One can see that $y(\xi_{cr}^\prime)\neq z_0(\xi_{cr}^\prime)$. However, as it was shown  in \cite{emt14}, there are proper $g$-functions in the whole diapason $\xi\in (\xi_{cr}, \xi_{cr, 1})$, and the respective Whitham point $y_1(\xi)$ for $\xi\to \xi_{cr,1}$ will end at  $z_{0,\ell}(\xi_{cr, 1})=1$, where $z_{0,\ell}(\xi)$ is the stationary phase point for the left phase function $\Phi_\ell(z,\xi)$ in \eqref{phil} connected with the left initial scattering data.
\end{remark} 
The signature table for the real part of $g$ is depicted in Fig. \ref{fig:signg}. The points $y$ and $y^{-1}$ are nodal points for the curves $\re g(z)=0$.
\begin{figure}[ht] 
\begin{tikzpicture}

\clip (-3.5,-3.7) rectangle (8,-0.3);

\fill[blue!5] (4.5,-2) circle (2.8cm);
\path [fill=white]  [out=120, in=110] (2.8,-2) to (4.5,-2) [out=240, in=250] (2.8,-2) to (4.5,-2);
\draw[thick] (4.5,-2) circle (2.8cm);
\path [fill=blue!5] [out=60, in=340] (-1,-2) to (-3.5,0.5) -- (-3.5,-4.5) [out=300, in=20] (-1,-2) to (-3.5,-4.5);

\draw[thick] (2.5,-2) -- (3.8,-2);
\draw[thick] (-2.6,-2) -- (-0.5,-2);

\draw[thin] (4.2,-2) circle (0.10cm);
\filldraw (4.2,-2) circle (0.5pt); 
\draw (-3,-2) ellipse (0.12cm and 0.1cm); 
\filldraw (-3,-2) circle (0.5pt) node[above]{\small{${\T_j^*}$}};

\draw[thin] (2.1,-2) circle (0.1cm);
\filldraw (2.1,-2) circle (0.5pt) node[above]{\small{${\T_k}$}};
\draw (1,-2) ellipse (0.12cm and 0.1cm); 
\filldraw (1,-2) circle (0.5pt) node[above]{\small{${\T_k^*}$}};

\draw[gray, thick, densely dotted] (-3.5,0.5) .. controls (-0.5,-1.5) and (-0.5,-2.5) .. (-3.5,-4.5);
\draw[out=90, in=110, gray, thick, densely dotted] (3,-2) to (4.5, -2);
\draw[out=-90, in=250, gray, thick, densely dotted] (3,-2) to (4.5, -2);

\draw[out=60, in=340, blue, thick, densely dashdotted] (-1,-2) to (-3.5,0.5);
\draw[out=300, in=20, blue, thick, densely dashdotted] (-1,-2) to (-3.5,-4.5);
\draw[out=120, in=110, blue, thick, densely dashdotted] (2.8,-2) to (4.5,-2);
\draw[out=240, in=250, blue, thick, densely dashdotted] (2.8,-2) to (4.5,-2);

\node at (-2.6,-1) {$\re g < 0$};
\node at (3.5,-1) {$\re g < 0$};
\node at (0.5,-1) {$\re g > 0$};
\node at (6.7,0.2) {$\T$};
\filldraw (3.8,-2) circle (1pt) node[below]{\small{$q$}};
\filldraw (2.5,-2) circle (1pt) node[below]{\small{$q_1$}};
\filldraw[darkgray] (3,-2) circle (1pt) node[below]{\small{$z_0$}};
\filldraw (4.5,-2) circle (1pt) node[above right]{${0}$};
\filldraw (-0.5,-2) circle (1pt) node[below]{\small{$q_1^{-1}$}};
\filldraw (-2.6,-2) circle (1pt) node[below]{\small{$q^{-1}$}};
\filldraw[blue] (-1,-2) circle (1pt) node[above right, blue]{\small{$y^{-1}$}};
\filldraw[darkgray] (-1.25,-2) circle (1pt) node[below]{\small{$z_0^{-1}$}};
\filldraw[blue] (2.8,-2) circle (1pt) node[above left, blue]{\small{$y$}};

\draw [->, thick] (3.5, -2) -- (3.4,-2);
\draw [->, thick] (-1.8, -2) -- (-1.9,-2);
\draw [->, thick] (7.3, -1.98) -- (7.3,-1.96);
\draw [->, thin] (4.18,-1.9) -- (4.17,-1.9);
\draw [->, thin] (-3.03,-1.9) -- (-3.04,-1.9);
\draw [->, thin] (2.08,-1.9) -- (2.07,-1.9);
\draw [->, thin] (0.98,-1.9) -- (0.97,-1.9);

\end{tikzpicture}
\caption{Signature table of $\re g(z, \xi)$ for $\xi \in (\xi_{cr}^\prime, \xi_{cr})$.} \label{fig:signg}
\end{figure}
This signature table allows us to choose the radius $\epsilon$ of the circles $\T_j$ so small that
for $ \Phi_j(z):=\Phi(z_j) - \Phi(z) + g(z)$
we will have 
\be \label{est88} \mathrm {sign}\re \Phi_j(z)= \mathrm {sign} \re \Phi(z_j) \ \ \mbox{for all}\ \ z_j\in \si_d\ \ \mbox{and}\ \ z\in \T_j.
\ee
The radius $\epsilon$ should also satisfy
 \[
 8\epsilon< \min\{ \min_{j\neq k} |z_j - z_k|; \min_j|z_j+1|; \min_j |z_j - 1|; \min_j |z_j - q|; \min_j |z_j - q_1|\}.
 \]
 Moreover, since we  intend to justify the asymptotics uniformly in the regions \eqref{mathcalI} or \eqref{mathcalIres}
  for arbitrary small but fixed positive $\varepsilon$, we also assume that
 \begin{align} 
 \label{est91} 4\epsilon &< |y(\xi_{cr} - \varepsilon) - q|, \\
\label{est94}4\epsilon&< |y(\xi_{cr}^\prime + \varepsilon) - q_1|.
 \end{align}
With such a value of $\epsilon$ chosen, we next perform three transformations which lead to a model problem. The transformations are 
analogous to those in \cite{emt14}, modified by additional deformations to wipe out the non-$L^2$ singularities of the jump 
matrix in case of resonances at $q$ or $q_1$. 

{\it Step 1:} On $\T$ one can factorize $v^{\mathrm{ini}}$ using Schur complements
	\[
	v^{\mathrm{ini}}=\begin{pmatrix}
	1 & - \Pi^2\ol{R}\E^{-2t \Phi} \\
	0&1
	\end{pmatrix}
	\begin{pmatrix} 1&0\\
	\Pi^{-2}R\E^{2t\Phi} & 1
	\end{pmatrix}.
	\]
Let  $\xi\in\mathcal I_\varepsilon$, where $\mathcal I_\varepsilon$ is defined by \eqref{mathcalIres}.  Let $y=y(\xi)$ and let $\mathfrak r$, $q_1 <  \mathfrak r < y$, be a point in a small vicinity of $y$ as depicted in Fig.~\ref{fig:step1} with 
\be\label{est99}\frac{\epsilon}{2}\leq|\mathfrak r - y|\leq \epsilon.
\ee  
Introduce a closed contour $\mathcal C_\mathfrak r$ oriented counterclockwise, which starts at $\mathfrak r$ and encloses the interval $[q_1, \mathfrak r]$ passing through the point $q_1-\epsilon$. Denote the domain inside this contour 
  by $\Omega_\mathfrak r$ (with $[q_1, \mathfrak r]$ excluded). 
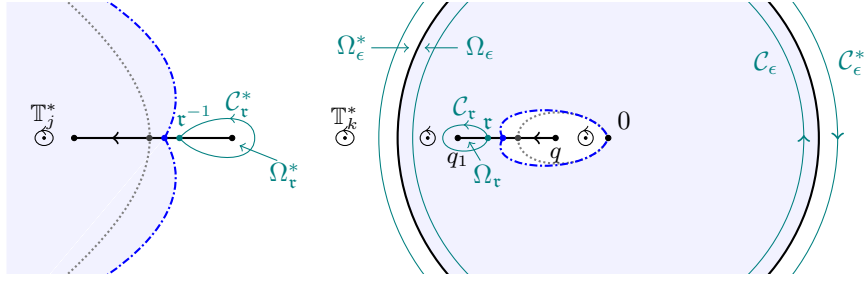
\begin{figure}[ht] 
\begin{tikzpicture}
\clip (-3.5,-3.8) rectangle (8,-0.2);

\fill[blue!5] (4.5,-2) circle (2.8cm);
\path [fill=white]  [out=120, in=110] (3.1,-2) to (4.5,-2) [out=240, in=250] (3.1,-2) to (4.5,-2);
\draw[thick] (4.5,-2) circle (2.8cm);
\path [fill=blue!5] [out=60, in=340] (-1.4,-2) to (-3.5,0.5) -- (-3.5,-4.5) [out=300, in=20] (-1.4,-2) to (-3.5,-4.5);

\draw[thick] (2.5,-2) -- (3.8,-2);
\draw[thick] (-2.6,-2) -- (-0.5,-2);

\draw[thin] (4.2,-2) circle (0.10cm);
\filldraw (4.2,-2) circle (0.5pt);
\draw (-3,-2) ellipse (0.12cm and 0.10cm); 
\filldraw (-3,-2) circle (0.5pt) node[above]{\small{${\T_j^*}$}};

\draw[thin] (2.1,-2) circle (0.10cm);
\filldraw (2.1,-2) circle (0.5pt);
\draw (1,-2) ellipse (0.12cm and 0.10cm); 
\filldraw (1,-2) circle (0.5pt) node[above]{\small{${\T_k^*}$}};

\draw[gray, thick, densely dotted] (-3.5,0.5) .. controls (-0.96,-1.5) and (-0.96,-2.5) .. (-3.5,-4.5);
\draw[out=90, in=110, gray, thick, densely dotted] (3.3,-2) to (4.5, -2);
\draw[out=-90, in=250, gray, thick, densely dotted] (3.3,-2) to (4.5, -2);

\draw[out=60, in=340, blue, thick, densely dashdotted] (-1.4,-2) to (-3.5,0.5);
\draw[out=300, in=20, blue, thick, densely dashdotted] (-1.4,-2) to (-3.5,-4.5);
\draw[out=120, in=110, blue, thick, densely dashdotted] (3.1,-2) to (4.5,-2);
\draw[out=240, in=250, blue, thick, densely dashdotted] (3.1,-2) to (4.5,-2);

\draw[teal] (4.5,-2) circle (2.6cm);
\draw[teal] (4.5,-2) circle (3.05cm);
\draw[out=100, in=90, teal] (2.9,-2) to (2.3,-2);
\draw[out=260, in=270, teal] (2.9,-2) to (2.3,-2);
\draw[out=50, in=90, teal] (-1.2,-2) to (-0.2,-2);
\draw[out=310, in=270, teal] (-1.2,-2) to (-0.2,-2);

\node[teal] at (6.6,-1) {$\mathcal C_\epsilon$};
\node[teal] at (7.75,-1) {$\mathcal C_\epsilon^*$};
\node[teal] at (2.8,-0.8) {$\Omega_\epsilon$};
\draw[->,teal] (2.5,-0.8) -- (2.06, -0.8);
\node[teal] at (1.1,-0.8) {$\Omega_\epsilon^*$};
\draw[->,teal] (1.35, -0.8) -- (1.85, -0.8);
\node[teal] at (2.6,-1.6) {$\mathcal C_{\mathfrak r}$};
\node[teal] at (-0.4,-1.55) {$\mathcal C_{\mathfrak r}^*$};
\node[teal] at (2.9,-2.5) {$\Omega_{\mathfrak r}$};
\draw[->,teal] (2.8,-2.3) -- (2.65, -2.05);
\node[teal] at (0.2,-2.5) {$\Omega_{\mathfrak r}^*$};
\draw[->,teal] (0.1,-2.3) -- (-0.4, -2.1);

\filldraw (3.8,-2) circle (1pt) node[below]{\small{$q$}};
\filldraw (2.5,-2) circle (1pt); 
\node at (2.5,-2.3) {\small{$q_1$}};
\filldraw[darkgray] (3.3,-2) circle (1pt); 
\filldraw (4.5,-2) circle (1pt) node[above right]{${0}$};
\filldraw (-0.5,-2) circle (1pt);
\filldraw (-2.6,-2) circle (1pt); 
\filldraw[blue] (-1.4,-2) circle (1pt); 
\filldraw[darkgray] (-1.6,-2) circle (1pt); 
\filldraw[blue] (3.1,-2) circle (1pt); 
\filldraw[teal] (2.9,-2) circle (1pt) node[above, teal]{\small{$\mathfrak r$}};
\filldraw[teal] (-1.2,-2) circle (1pt); 
\node[teal] at (-1,-1.7) {\small{$\mathfrak r^{-1}$}};

\draw [->, thick] (3.6, -2) -- (3.5,-2);
\draw [->, thick] (-2, -2) -- (-2.1,-2);
\draw [->, thick, teal] (7.1, -1.98) -- (7.1,-1.96);
\draw [->, thick, teal] (7.55, -2) -- (7.55,-2.02);
\draw [->, thin] (4.18,-1.9) -- (4.17,-1.9);
\draw [->, thin] (-3.03,-1.9) -- (-3.04,-1.9);
\draw [->, thin] (2.08,-1.9) -- (2.07,-1.9);
\draw [->, thin] (0.98,-1.9) -- (0.97,-1.9);
\draw [->, thin, teal] (-0.52,-1.74) -- (-0.53,-1.74);
\draw [->, thin, teal] (2.59,-1.825) -- (2.58,-1.825);
\end{tikzpicture}

\caption{Contour deformation of Step 1.} \label{fig:step1}
\end{figure} 
 Let $\Omega_\epsilon$  be an open annulus between the circles $\T$ and $\mathcal C_\epsilon=\{z: |z|=1-\epsilon\}$
oriented counterclockwise, with $\Omega_{\mathfrak r}^*$ and $\Omega_\epsilon^*$ the images of these domains under the map 
$z\mapsto z^{-1}$. According to \eqref{decay} 	the reflection coefficient $R(z)$ can
be continued as a meromorphic function in the domain $\{z:1>|z|>\E^{-\rho}\}$,  which covers the interval $[q_1, y(\xi_{cr}-\varepsilon)]$
by \eqref{nu}. Thus $R(z)$ is a holomorphic function in $\Omega_\epsilon\cup\Omega_{\mathfrak r}$, because  these domains do not contain  points of the discrete
spectrum by our choice of $\epsilon$. 
We extend $R(z)$ to $\Omega_\epsilon^*\cup\Omega_{\mathfrak r}^*$
 by $R(z)= \ol{R(z^{-1})}$. 
Redefine $m^{\mathrm{ini}}$ by
\be\label{step1}
m^{(1)}(z) = m^{\mathrm{ini}}(z)\begin{cases}
\begin{pmatrix} 1&0\\
-\Pi^{-2}(z) R(z) \E^{2t\Phi(z)} & 1
\end{pmatrix}, & z \in \Omega_{\mathfrak r}\cup 
\Omega_\epsilon,\\
\begin{pmatrix} 1& -\Pi^{-2}(z) R(z^{-1}) \E^{-2t\Phi(z)}\\
0 & 1
\end{pmatrix}, & z \in \Omega_{\mathfrak r}^*\cup 
\Omega_\epsilon^*,\\
\id, & \text{else},
\end{cases}
\ee
and orient $\mathcal C_{\mathfrak r}$ and $\mathcal C_{\mathfrak r}^{*}$ counterclockwise.
Then the jump along $\T$ disappears as well as the jump along $[\mathfrak r, q_1]$, since the Pl\"ucker identity
implies that $R_-(z) - R_+(z) + \chi(z)=0$ for $z \in [\mathfrak r, q_1]$ (compare
\cite[Lemma~3.2]{egkt}). Moreover, since the continuation of $R(z)$ is in agreement with the scattering relation \eqref{pst} and \eqref{defm} is the unique solution of RHP~\ref{RH1}, it is straightforward to check that $m^{(1)}(z)$ given by \eqref{trans7}, \eqref{step1} does not have singularities at $q_1$ and $q_1^{-1}$ both in  the resonant and non-resonant case. In summary, $m^{(1)}(z)$ satisfies
\begin{RHproblem}\label{3}
\footnote{From here on we (mostly) state RHPs only in terms of their conditions.}
\noindent	\begin{itemize}
		\item $m_{+}^{(1)}(z,n,t)=m_{-}^{(1)}(z,n,t) v^{(1)}(z,n,t)$, where
\[
	v^{(1)}(z) = \left\{
	\begin{array}{ll}
\begin{pmatrix}
	1 & 0 \\
	\Pi^{-2}(z)\chi(z) \E^{2t\Phi(z)} & 1
	\end{pmatrix}, & \quad z \in [q, \mathfrak r],\\[3mm]
\begin{pmatrix} 1&0\\
\Pi^{-2}(z) R(z) \E^{2t\Phi(z)} & 1
\end{pmatrix}, & \quad z \in \mathcal C_\epsilon\cup\mathcal C_{\mathfrak r},\\[3mm]
\si_1 v^{(1)}(z^{-1})\si_1, & \quad z \in  \mathcal C_{\mathfrak r}^*\cup\mathcal C_\epsilon^*,\\[2mm]
\si_1 (v^{(1)}(z^{-1}))^{-1}\si_1, & \quad z \in [\mathfrak r^{-1}, q^{-1}],\\[2mm]
	v^{\mathrm{ini}}(z), & \quad z \in \bigcup_j(\T_j\cup\T_j^*);
	\end{array}\right.
 \]
\item $m^{(1)}(z^{-1}) = m^{(1)}(z) \si_1$;
\item $m_1^{(1)}(0) \cdot m_2^{(1)}(0) = 1$, $m_1^{(1)}(0) > 0$;
\item The resonant/non-resonant condition of RHP~\ref{RH1} holds for $m^{(1)}(z)$ only at $q$, $q^{-1}$.

\end{itemize}
\end{RHproblem}

{\it Step 2:} The jump matrix on  $[q, \mathfrak r]\cap \{z: \re\Phi(z)>0\}$ contains 
off-diagonal elements which are exponentially increasing in time. One can get rid of this exponential growth by 
replacing the phase function with the $g$-function, which is purely imaginary on $[q,y]$ and has 
negative real part on $[y,\mathfrak r]$. For  $z \in \C$ set
\[
m^{(2)}(z)=m^{(1)}(z)\E^{-t(\Phi(z) - g(z))\sigma_3}= m^{(1)}(z)\begin{pmatrix} \E^{-t(\Phi(z) - g(z))}& 0\\ 0&\E^{t(\Phi(z) - g(z))}\end{pmatrix}.
\] 
Then Lemma~\ref{propg} and \eqref{iB} imply that $m^{(2)}(z)$ is the unique holomorphic solution  of the following problem: 
\begin{RHproblem}\label{4}
\noindent		\begin{itemize}
		\item $m_{+}^{(2)}(z,n,t)=m_{-}^{(2)}(z,n,t) v^{(2)}(z,n,t)$, where
\[
	v^{(2)}(z) = \left\{
	\begin{array}{ll}
\begin{pmatrix}
	\E^{t(g_+-g_-)} & 0 \\
	\Pi^{-2}\chi & \E^{-t(g_+-g_-)}
	\end{pmatrix}, & \quad z \in [q,y],\\[3mm]
	\begin{pmatrix}
		\E^{2\I tB} & 0 \\
		\Pi^{-2}\chi \E^{2t \re g} & \E^{-2\I t B}
		\end{pmatrix}, & \quad z \in [y,\mathfrak r],\\[3mm]
		\begin{pmatrix}
			\E^{2\I tB} & 0 \\
			0 & \E^{-2\I t B}
			\end{pmatrix}, & \quad z \in [\mathfrak r,\mathfrak r^{-1}],\\[3mm]
\si_1 (v^{(2)}(z^{-1}))^{-1}\si_1, & \quad z\in [\mathfrak r^{-1}, q^{-1}],\\[2mm]
\E^{-t(\Phi - g)\sigma_3}v^{(1)}\E^{t(\Phi - g)\sigma_3}, &\quad z\in  \Gamma,
	\end{array}\right.
 \]
and 
\be\label{defK}
\Gamma:=\mathcal C_{\mathfrak r}\cup \mathcal C_{\mathfrak r}^*\cup \mathcal C_{\epsilon}\cup \mathcal C_{\epsilon}^*\cup\bigcup_{j=1}^N\left(\T_j\cup\T_j^*\right);
\ee
\item $m^{(2)}(z^{-1}) = m^{(2)}(z) \si_1$;
\item $m_1^{(2)}(0) \cdot m_2^{(2)}(0) = 1$,  $m_1^{(2)}(0) > 0$.
\item The vector $m^{(2)}(z)$ does not have  singularities at $q_1$ and $q_1^{-1}$.
It has bounded values at $\mathfrak r$ and $\mathfrak r^{-1}$. Its behavior at $q$ and $q^{-1}$ is the same as for $m(z)$ in \eqref{rezon}.
	\end{itemize}
	\end{RHproblem}
\begin{remark} Our choice of $y$, $\mathfrak r$ and $\epsilon$ in \eqref{est88}--\eqref{est91}, \eqref{est99}  guarantees that 
\be\label{estforv2}
v^{(2)}(z)=\id +O(\E^{-c(\varepsilon)t}),\ \ \mbox{for}\ \ z\in\Gamma,\ \ \mbox{as}\ \ t\to\infty,
\ee
uniformly for $\xi\in \mathcal I_\varepsilon$.
\end{remark}

{\it Step 3:} The last step involves the lense mechanism to remove the oscillating terms (with respect to $t$) 
in the jump matrix  on $[q, y]\cup [y^{-1}, q^{-1}]$. 
To this end introduce the function 
\[
\Omega(z,s)=\frac {1}{2 s}\frac{s+z}{s-z}, 
\]
which can be considered as the Cauchy kernel for symmetric contours, because $\Omega(z,s)=\frac{1}{z-s}(1 + o(1))$ as $z\to s$, and  \[\Omega(z,s^{-1})d(s^{-1})=\Omega(z^{-1},s) ds.\] This property implies that for any "good" function $f(s)$ such that $f(s^{-1})=f(s)$ and 
\be\label{intprop}\int_{-1}^q f(s)\frac{ds}{s}=\int_{-1}^q f(s)\Omega(0,s)ds=0,\ee the function
  \be \label{p(z)}
	p(z) =\frac{1}{2\pi \I} \int_{q}^{q^{-1}} \Omega(z,s)f(s)ds, \ee
solves the jump problem
\begin{align}
\label{pi1}	p_+(z) &= p_-(z) + f(z), \quad z \in [q,q^{-1}],\\
\label{pi2}	p(z^{-1}) &= - p(z), \quad z \in \C\setminus [q^{-1}, q]\\
\label{pi3}	p(z)& = O(z), \quad z\to 0.
 \end{align}
By "good" function we mean a function $f\in C^1((q^{-1}, y^{-1})\cup(y^{-1}, y)\cup (y,q))$ which has the following behavior in the node points: \[f(s)=\frac{C(\kappa)}{\sqrt{s-\kappa}} (1 +o(1)), \quad s\to \kappa\in \{q^{-1}, y^{-1}, y, q\}, \quad C(\kappa)\neq 0.\]
Then (cf.\ \cite{Musch}) 
\[p(z)= \frac{C_1(\kappa)}{\sqrt{z-\kappa}}(1 +o(1)), \quad z\to \kappa\in \{q^{-1}, y^{-1}, y, q\}, \quad C_1(\kappa)\neq 0.\]
Set   
\be  \label{w(z)}
	\mathcal S(z) = \sqrt{\frac{(z-q)(z-y)(z-y^{-1})(z-q^{-1})}{z^2}}, \quad z \in  \C \setminus ([q,y] \cup [y^{-1}, q^{-1}]).\ee 
	This function satisfies the following symmetries:
	$\mathcal S(z^{-1})=\mathcal S(z)$ for $z \notin [q,y] \cup [y^{-1}, q^{-1}]$ and $\mathcal S_-(z) = \mathcal S_+(z^{-1})=-\mathcal S_+(z)$ for $z \in [q,y] \cup [y^{-1}, q^{-1}]$. 
	Define \be\label{mathcalF} \mathcal F(z):=\E^{\mathcal S(z) p(z)},\quad z\in \C\setminus[q,q^{-1}].\ee
\begin{lemma} The function $\mathcal F(z)$ is holomorphic in $\C\setminus[q,q^{-1}]$ and satisfies the property $\mathcal F(z^{-1})=\mathcal F^{-1}(z)$. It has bounded limits on the sides of the contour $[q, q^{-1}]$ and solves the jump problem \[\aligned\mathcal F_+(z)\mathcal F_-(z)&=\E^{f(z)\mathcal S_+(z)}, \quad z\in [q,y]\cup [y^{-1}, q^{-1}]; \\ \mathcal F_+(z)&=\mathcal F_-(z)\E^{f(z)\mathcal S(z)},\quad z\in [y,y^{-1}].\endaligned\]
\end{lemma}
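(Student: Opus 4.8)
The plan is to read off every assertion directly from the properties of the two factors $\mathcal S(z)$ and $p(z)$ recorded above, reducing the lemma to a short analysis of the boundary behaviour of the product $\mathcal S p$. First I would settle holomorphicity. The function $\mathcal S$ defined by \eqref{w(z)} is single valued and analytic on $\C\setminus([q,y]\cup[y^{-1},q^{-1}])$ and has merely a simple pole at $z=0$ (indeed $\mathcal S^2$ has a double pole and nonvanishing numerator there, so the origin is not a branch point), while $p$ given by \eqref{p(z)} is analytic on $\C\setminus[q,q^{-1}]$ with $p(0)=0$ and $p(z)=O(z)$ as $z\to 0$ by \eqref{pi3}. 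Hence $\mathcal S(z)p(z)$ is analytic on $\C\setminus[q,q^{-1}]$, the pole of $\mathcal S$ at the origin being cancelled by the zero of $p$, and $\mathcal F=\E^{\mathcal S p}$ is holomorphic and nowhere zero there. From $\mathcal S(z)=O(z)$ and $p(z)=O(z^{-1})$ as $z\to\infty$ (the latter a consequence of \eqref{pi2}) one moreover reads off that $\mathcal S p$, hence $\mathcal F$, stays bounded at infinity, with $\mathcal F(\infty)=\mathcal F(0)^{-1}$.

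The symmetry is then immediate: $\mathcal S(z^{-1})=\mathcal S(z)$ and $p(z^{-1})=-p(z)$ give $\mathcal S(z^{-1})p(z^{-1})=-\mathcal S(z)p(z)$, whence $\mathcal F(z^{-1})=\E^{-\mathcal S(z)p(z)}=\mathcal F(z)^{-1}$. For the boundary values I would argue locally: on any closed subarc of $[q,y]\cup[y^{-1},q^{-1}]\cup[y,y^{-1}]$ bounded away from the four points $q,y,y^{-1},q^{-1}$ the limits $p_\pm$ and $\mathcal S_\pm$ exist and are bounded, so $\mathcal S p$ has continuous bounded limits there; at each of those four points $\kappa$ the $(z-\kappa)^{-1/2}$ singularity of $p$ is exactly offset by the $(z-\kappa)^{1/2}$ vanishing of $\mathcal S$, so $\mathcal S p$, and therefore $\mathcal F$, still has bounded limits from both sides of $[q,q^{-1}]$.

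Finally the jump relations follow by multiplying or dividing these boundary values. On $[q,y]\cup[y^{-1},q^{-1}]$ I would use $\mathcal S_-=-\mathcal S_+$ together with $p_+-p_-=f$ from \eqref{pi1} to get $\mathcal F_+\mathcal F_-=\E^{\mathcal S_+p_++\mathcal S_-p_-}=\E^{\mathcal S_+(p_+-p_-)}=\E^{f\mathcal S_+}$; on the gap $[y,y^{-1}]$, where $\mathcal S$ is analytic and hence $\mathcal S_+=\mathcal S_-=\mathcal S$, the same relation $p_+-p_-=f$ yields $\mathcal F_+/\mathcal F_-=\E^{\mathcal S(p_+-p_-)}=\E^{f\mathcal S}$, i.e.\ $\mathcal F_+=\mathcal F_-\E^{f\mathcal S}$. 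Every step is a direct substitution except for the one genuinely delicate point used in the previous paragraph: the compensation of singularities at $q,y,y^{-1},q^{-1}$. There one has to invoke the Muskhelishvili-type estimate recalled just before the lemma, which converts the "good-function" hypothesis on $f$ into the precise $(z-\kappa)^{-1/2}$ asymptotics of $p$, and to note that the matching square-root zero of $\mathcal S$ has a nonzero leading coefficient, which holds because $q,y,y^{-1},q^{-1}$ are simple branch points of $\mathcal S$. I expect this to be the only place in the proof where more than algebra is needed.
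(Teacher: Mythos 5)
Your argument is correct and is essentially the paper's own proof spelled out in full: the paper disposes of the lemma in one line by citing the jump relations \eqref{pi1}--\eqref{pi3} for $p$, the definition and sign properties of $\mathcal S$, and the Sokhotski--Plemelj (Muskhelishvili) boundary behaviour, which are exactly the ingredients you combine. Your extra observations (cancellation of the simple pole of $\mathcal S$ at $z=0$ against the zero of $p$, and the square-root compensation at $q,y,y^{-1},q^{-1}$) are the right details to make that one-liner explicit, so nothing is missing.
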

\begin{proof} The proof is immediate from \eqref{pi1}--\eqref{w(z)}, the properties of $\mathcal S$ and the 
Sokhotski-Plemelj theorem.
\end{proof}
Let now $f(z)$ be defined as
	\be\label{defef1}
	f(s) : = \begin{cases}
		\frac{ \log(\Pi^{-2}(s)|\chi(s)\mathcal V_+^2(s)|)}{\mathcal S_+(s)}, & s \in [q,y], \\
		\frac{\I (\Delta-\frac{\pi\ell_{\mathcal V}}{2})}{\mathcal S(s)}, & s \in [y, -1],\\
		f(s^{-1}), & s \in [-1, q^{-1}], \\
		\end{cases} 
\ee
where
\be\label{Delta}
\Delta= -  \I \int_{q}^{y} \frac{\log(\Pi^{-2}(s)|\chi(s)\,\mathcal V_+^2(z)|)}{\mathcal S_+(s)}\frac{ds}{s} 
\left(\int_{y}^{-1} \frac{ds}{s\, \mathcal S(s)} \right)^{-1} +\frac{\pi\ell_{\mathcal V}}{2}\in\R,
\ee
and 
\be\label{defVell}
\begin{cases}\mathcal V(z):=\left(\frac{z-q^{-1}}{z-q}\right)^{1/4},\  \ell_{\mathcal V}=1, & \mbox{if}\ \chi(z)\ \mbox{satisfies \eqref{chinon} at}\ q;\\
\mathcal V(z):=\left(\frac{z-q}{z-q^{-1}}\right)^{1/4},\ \ell_{\mathcal V}=-1, & \mbox{if}\ \chi(z)\ \mbox{satisfies \eqref{chirez} at}\ q;
\end{cases}\ee
with $\mathcal V(0)>0$. We observe that $\mathcal V(z^{-1})=\mathcal V^{-1}(z)$ for $z\notin [q,y] \cup [y^{-1}, q^{-1}]$ and   \[\mathcal V_+(z)=\mathcal V_-(z)\E^{\I\frac{\pi\ell_{\mathcal V}}{2}},\qquad \mathcal V_+(z)\mathcal V_-(z)=|\mathcal V_+(z)|^2, \quad z\in [q, q^{-1}].\] 
It is straightforward to see that $f(z)$ satisfies \eqref{intprop}, and is a "good" function. In addition, $f(s)\in\I\R$, therefore if $p(z)$ is its Cauchy type integral \eqref{p(z)}, then
\[\lim_{z\to 0} p(z)\mathcal F(z)>0.\] The above considerations imply for $F(z)$ defined by
\be\label{F(z)}
F(z) = \mathcal F(z)\mathcal V^{-1}(z), \quad z \in \C \setminus [q,q^{-1}],
\ee
the following properties.
\begin{lemma}\label{lemma:F} The function $F(z)$ satisfies
\begin{enumerate} 
[label=\rm{(\alph*)}]
\item $F_+(z) F_-(z) = \Pi^{-2}(z)|\chi(z)|$ for $z \in [q,y]$;
\item $F_+(z) F_-(z) = \Pi^{-2}(z) |\chi(z)|^{-1}$ for $z \in [y^{-1}, q^{-1}]$;	
\item $F_+(z) = F_-(z) e^{\I \Delta}$ for $z \in [y, y^{-1}]$;
\item $F(z^{-1})=F^{-1}(z)$ for $z \in \C \setminus [q,q^{-1}]$;
\item $F(0)> 0$;
\item If $\chi(z)$ satisfies \eqref{chinon} at $q^{\pm 1}$ then $F(z)=C(z-q^{\pm 1})^{\pm 1/4}(1 + o(1))$ as $z\to q^{\pm 1}$.
\item If $\chi(z)$ satisfies \eqref{chirez} at $q^{\pm 1}$ then $F(z)=C(z-q^{\pm 1})^{\mp 1/4}(1 + o(1))$ as $z\to q^{\pm 1}$.
\end{enumerate}
\end{lemma}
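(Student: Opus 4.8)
The plan is to read off all seven items from the preceding lemma on $\mathcal F$ together with the explicit structure of $\mathcal V$ in \eqref{defVell}, of $f$ in \eqref{defef1} and of $\Delta$ in \eqref{Delta}; no new analytic input is needed, only careful bookkeeping of jumps, of the symmetry $z\mapsto z^{-1}$, and of branch points. Throughout I will use the following facts: $\mathcal F$ is holomorphic off $[q,q^{-1}]$ with $\mathcal F(z^{-1})=\mathcal F^{-1}(z)$, $\mathcal F_+\mathcal F_-=\E^{f\mathcal S_+}$ on $[q,y]\cup[y^{-1},q^{-1}]$ and $\mathcal F_+=\mathcal F_-\E^{f\mathcal S}$ on $[y,y^{-1}]$; the factor $\mathcal V$ is holomorphic off $[q,q^{-1}]$, satisfies $\mathcal V(z^{-1})=\mathcal V^{-1}(z)$, $\mathcal V(0)>0$, and on $[q,q^{-1}]$ has the boundary behaviour $\mathcal V_+=\mathcal V_-\E^{\I\pi\ell_{\mathcal V}/2}$, $\mathcal V_+\mathcal V_-=|\mathcal V_+|^2$; and $f$ is purely imaginary on $[q,q^{-1}]$, symmetric, and obeys \eqref{intprop}. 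Since $\mathcal F$ and $\mathcal V$ are holomorphic off $[q,q^{-1}]$, so is $F=\mathcal F\mathcal V^{-1}$, and the assertions are all about its boundary values and endpoint behaviour.

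For (a) I would just substitute: on $[q,y]$, $F_+F_-=\mathcal F_+\mathcal F_-(\mathcal V_+\mathcal V_-)^{-1}=\E^{f\mathcal S_+}|\mathcal V_+|^{-2}$, and by the definition of $f$ on $[q,y]$ one has $\E^{f\mathcal S_+}=\Pi^{-2}|\chi\mathcal V_+^2|=\Pi^{-2}|\chi|\,|\mathcal V_+|^2$, so the $\mathcal V$-factors cancel and $F_+F_-=\Pi^{-2}|\chi|$. Item (b) I would then obtain from (a) and the symmetry (d): for $z\in[y^{-1},q^{-1}]$ the map $z\mapsto z^{-1}$ carries $z$ into $[q,y]$ and reverses the induced orientation, hence $F_\pm(z)=F_\mp(z^{-1})^{-1}$, so $F_+(z)F_-(z)=\big(F_+(z^{-1})F_-(z^{-1})\big)^{-1}=\big(\Pi^{-2}(z^{-1})|\chi(z^{-1})|\big)^{-1}=\Pi^{-2}(z)|\chi(z^{-1})|^{-1}$ using $\Pi(z^{-1})=\Pi^{-1}(z)$ and interpreting $|\chi|$ on the starred interval by $|\chi(z)|:=|\chi(z^{-1})|$.

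For (c), on $[y,y^{-1}]$ the function $\mathcal F$ has jump $\E^{f\mathcal S}$ and by \eqref{defef1}, \eqref{Delta} one has $f\mathcal S=\I(\Delta-\tfrac{\pi\ell_{\mathcal V}}{2})$ there; combining with the jump $\mathcal V_+/\mathcal V_-=\E^{\I\pi\ell_{\mathcal V}/2}$ of $\mathcal V$ gives a constant jump for $F$, and the additive correction $\pm\tfrac{\pi\ell_{\mathcal V}}{2}$ built into $\Delta$ is exactly what is needed to make that constant equal to $\E^{\I\Delta}$ in both the resonant and non-resonant cases. Item (d) is immediate from $\mathcal F(z^{-1})=\mathcal F^{-1}(z)$ and $\mathcal V(z^{-1})=\mathcal V^{-1}(z)$. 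For (e), the symmetry of $f$ together with \eqref{intprop} forces $\int_q^{q^{-1}}f(s)\tfrac{ds}{s}=0$, hence $p(0)=0$, consistent with $p(z)=O(z)$ from \eqref{pi3}; since $\mathcal S$ has a simple pole at $z=0$, the product $\mathcal S(z)p(z)$ has a finite limit as $z\to0$, and that limit is real because $f\in\I\R$ makes $p'(0)$ (and the leading coefficient of $\mathcal S$ at $0$) real; hence $\mathcal F(0)=\E^{\lim\mathcal S p}>0$, and with $\mathcal V(0)>0$ this gives $F(0)>0$.

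Finally, for (f) and (g) I would carry out a local analysis at $q$ and transport it to $q^{-1}$ by (d). The decisive point — and the very reason $\mathcal V$ was introduced into $f$ — is that in \emph{both} cases the product $\chi\mathcal V_+^2$ stays bounded and nonvanishing as $s\to q$: under \eqref{chinon} the factor $\mathcal V_+^2\sim(s-q)^{-1/2}$ cancels the $(s-q)^{1/2}$ of $\chi$, while under \eqref{chirez} the factor $\mathcal V_+^2\sim(s-q)^{1/2}$ cancels the $(s-q)^{-1/2}$. Consequently $f(s)=\log(\Pi^{-2}|\chi\mathcal V_+^2|)/\mathcal S_+(s)$ behaves like $C(s-q)^{-1/2}$ near $q$, i.e.\ $f$ is a ``good'' function there in the sense used above, so its Cauchy integral satisfies $p(z)\sim C_1(z-q)^{-1/2}$; since $\mathcal S(z)\sim C(z-q)^{1/2}$ at $q$, the product $\mathcal S(z)p(z)$ remains bounded, so $\mathcal F$ has a finite nonzero limit at $q$, and the endpoint exponent of $F=\mathcal F\mathcal V^{-1}$ is dictated solely by $\mathcal V^{-1}$: this is $(z-q)^{1/4}$ in the non-resonant choice \eqref{defVell} and $(z-q)^{-1/4}$ in the resonant one. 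Pulling this back through $F(z)=F(z^{-1})^{-1}$ and $z^{-1}-q\sim -q^2(z-q^{-1})$ produces the opposite exponent at $q^{-1}$, which is precisely the pattern $(z-q^{\pm1})^{\pm1/4}$ in (f) and $(z-q^{\pm1})^{\mp1/4}$ in (g). The only genuinely delicate part of the whole argument is keeping the orientation and branch conventions straight — which side of $[q,q^{-1}]$ is the ``$+$'' side for the chosen right-to-left orientation, and how $z\mapsto z^{-1}$ acts on it — since this is what makes the signs in (b) and (c) come out correctly; everything else is routine substitution.
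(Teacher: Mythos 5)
Your route is the same as the paper's: Lemma \ref{lemma:F} is not given a separate proof there either, but is read off from the lemma on $\mathcal F$ together with \eqref{defef1}, \eqref{Delta}, \eqref{defVell}, which is exactly what you do. Items (a), (b), (d), (e), (f), (g) are handled correctly and in the intended way: the cancellation of the $|\mathcal V_+|^2$ factors in (a); the transport of (a) to (b) via $F_\pm(z)=F_\mp(z^{-1})^{-1}$ and $\Pi(z^{-1})=\Pi^{-1}(z)$ (with $|\chi(z)|:=|\chi(z^{-1})|$ on the starred interval); the reality of $\lim_{z\to 0}\mathcal S(z)p(z)$ in (e); and, for (f)--(g), the key observation that $\chi\mathcal V_+^2$ stays bounded and nonvanishing at $q$ in both the resonant and non-resonant cases, so that $f$ is ``good'', $\mathcal S p$ has a finite limit, and the endpoint exponent of $F$ is dictated by $\mathcal V^{-1}$, transported to $q^{-1}$ by (d).

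The gap is in (c), the one item you assert rather than compute, and it is precisely the orientation/branch bookkeeping you yourself single out as delicate. With the conventions actually fixed in the paper (the $+$ boundary value on the right-to-left oriented $[q,q^{-1}]$ is the one from $z-\I 0$, as confirmed by $Q_+(z)=Q(z-\I 0)$ and $X_+(z)=\I|\chi(z)|$, and with it one checks directly that indeed $\mathcal V_+=\mathcal V_-\E^{\I\pi\ell_{\mathcal V}/2}$ for both choices in \eqref{defVell}), the substitution gives on $[y,y^{-1}]$: $\mathcal F_+/\mathcal F_-=\E^{f\mathcal S}=\E^{\I(\Delta-\pi\ell_{\mathcal V}/2)}$ and $F_+/F_-=\E^{\I(\Delta-\pi\ell_{\mathcal V}/2)}\,\E^{-\I\pi\ell_{\mathcal V}/2}=\E^{\I(\Delta-\pi\ell_{\mathcal V})}=-\E^{\I\Delta}$, i.e.\ the two half-integer phases add instead of cancelling. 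The cancellation you invoke only occurs if the correction enters with the opposite relative sign (e.g.\ $f=\I(\Delta+\pi\ell_{\mathcal V}/2)/\mathcal S$ on $[y,-1]$ together with $-\pi\ell_{\mathcal V}/2$ in \eqref{Delta}). A similar literal check affects your use of \eqref{intprop} in (e): with \eqref{Delta} as displayed one finds $\int_{-1}^{q}f(s)\tfrac{ds}{s}=2\int_{y}^{q}\log\bigl(\Pi^{-2}(s)|\chi(s)\mathcal V_+^2(s)|\bigr)\tfrac{ds}{s\,\mathcal S_+(s)}\neq 0$, so again a relative sign must be adjusted before $p(z)=O(z)$ is available. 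These are fixable, and arguably expose sign slips in the displayed formulas rather than in the construction itself; but as written your proof of (c) does not close: the step you declared to be the only delicate one is exactly the one that fails under literal substitution, and a complete argument must carry out this verification (and state which sign convention it uses) rather than assert it.
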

Set
\[
G^F(z) = \begin{pmatrix}
	F^{-1}(z) & -\frac{\Pi^2 (z) F(z)}{X(z)} e^{-2tg(z)} \\
	0 & F(z)
	\end{pmatrix},
\]
where the function $X(z)$ is well defined by \eqref{Chi} in a vicinity of $[q,y]$ and satisfies the property $X_\pm(z) = \pm\I |\chi(z)|$  for $z \in [q,y]$. Recalling that  $g_+(z) = -g_-(z)$ for $z\in [q,y]$, we observe that $v^{(2)}(z)$ can be factorized by
\[
v^{(2)}(z) = 
	G_-^F(z)
\begin{pmatrix}
	0 &\I \\
	\I &0
	\end{pmatrix}G_+^F(z)^{-1}, \quad z \in [q,y].
\]
In the domain of existence of $X(z)$ introduce a subdomain  $\Omega$
 as depicted in Fig.~\ref{fig:Omega} with $\Omega^*=\{z: z^{-1}\in\Omega\}$. 
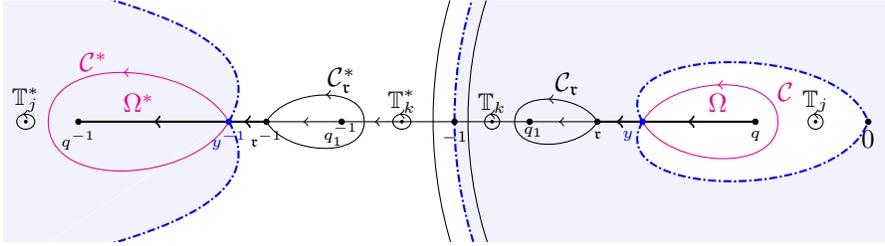
\begin{figure}[ht] 
\begin{tikzpicture}
\clip (-5.2,-1.6) rectangle (7,1.6);
\path [fill=blue!5] [out=60, in=345] (-2,0) to (-5,2) -- (-5,-2) [out=300, in=15] (-2,-0) to (-5,-2);
\path [fill=blue!5] (1.4,2.1) arc (158:202:5.5) -- (6.8, -2.02) -- (6.8, 2.1) -- (1.4, 2.1);
\path [fill=white] [out=120, in=110] (3.5,0) to (6.5,0) [out=240, in=250] (3.5,0) to (6.5,0);
\draw[out=60, in=345, blue, thick, densely dashdotted] (-2,0) to (-5,2);
\draw[out=300, in=15, blue, thick, densely dashdotted] (-2,0) to (-5,-2);
\draw[out=120, in=110, blue, thick, densely dashdotted] (3.5,0) to (6.5,0);
\draw[out=240, in=250, blue, thick, densely dashdotted] (3.5,0) to (6.5,0);
\draw[blue, thick, densely dashdotted] (1.4,2.1) arc (158:202:5.5);
\draw (1.6,2.1) arc (157:202.7:5.3);
\draw (1.1,2.1) arc (159:200.6:5.8);

\draw[out=120, in=90, magenta] (-2,0) to (-4.4,0);
\draw[out=240, in=270, magenta] (-2,0) to (-4.4,0);
\draw[out=60, in=90, magenta] (3.5,0) to (5.3,0);
\draw[out=300, in=270, magenta] (3.5,0) to (5.3,0);

\draw[thick] (2.9,0) -- (5,0);
\draw[thick] (-4,0) -- (-1.5,0);
\draw (-1.5,0) -- (2.9,0);

\draw[out=120, in=90] (2.9,0) to (1.8,0);
\draw[out=240, in=270] (2.9,0) to (1.8,0);
\draw[out=60, in=90] (-1.5,0) to (-0.2,0);
\draw[out=300, in=270] (-1.5,0) to (-0.2,0);

\draw[thin] (1.5,0) circle (0.10cm);
\filldraw (1.5,0) circle (0.5pt) node[above]{\small{${\T_k}$}};
\draw (0.3,0) ellipse (0.12cm and 0.10cm); 
\filldraw (0.3,0) circle (0.5pt) node[above]{\small{${\T_k^*}$}};
\draw[thin] (5.8,0) circle (0.10cm);
\filldraw (5.8,0) circle (0.5pt) node[above]{\small{${\T_j}$}};
\draw (-4.7,0) ellipse (0.12cm and 0.10cm); 
\filldraw (-4.7,0) circle (0.5pt) node[above]{\small{${\T_j^*}$}};

\filldraw (-4,0) circle (1pt) node[below]{\tiny{$q^{-1}$}};
\filldraw[blue] (-2,0) circle (1pt) node[below]{\tiny{$y^{-1}$}};
\filldraw (-1.5,0) circle (1pt) node[below]{\tiny{$\mathfrak r^{-1}$}};
\filldraw (-0.5,0) circle (1pt);
\node at (-0.51, -0.17) {\tiny{$q_1^{-1}$}};
\filldraw (1,0) circle (1pt) node[below]{\tiny{$-1$}};
\filldraw (2,0) circle (1pt); 
\node at (2.05,-0.14) {\tiny{$q_1$}};
\filldraw (2.9,0) circle (1pt) node[below]{\tiny{$\mathfrak r$}};
\filldraw[blue] (3.5,0) circle (1pt) node[below left]{\tiny{$y$}};
\filldraw (5,0) circle (1pt) node[below]{\tiny{$q$}};
\filldraw (6.5,0) circle (1pt) node[below]{${0}$};
\node[magenta] at (4.5,0.25) {$\Omega$};
\node[magenta] at (5.4,0.4) {$\mathcal C$};
\node[magenta] at (-3.2,0.25) {$\Omega^*$};
\node[magenta] at (-3.8,0.8) {$\mathcal C^*$};
\node at (2.5,0.5) {$\mathcal C_{\mathfrak r}$};
\node at (-0.5,0.55) {$\mathcal C_{\mathfrak r}^*$};

\draw [->, thick] (3.25, 0) -- (3.15,0);
\draw [->, thick] (4.2, 0) -- (4.1,0);
\draw [->, thick] (-1.7, 0) -- (-1.8,0);
\draw [->, thick] (-2.9, 0) -- (-3,0);
\draw [->, magenta] (4.51, 0.49) -- (4.5,0.49);
\draw [->, magenta] (-3.39, 0.655) -- (-3.4,0.655);
\draw [->, thin] (1.48,0.1) -- (1.47,0.1);
\draw [->, thin] (0.28,0.1) -- (0.27,0.1);
\draw [->, thin] (5.78,0.1) -- (5.77,0.1);
\draw [->, thin] (-4.73,0.1) -- (-4.74,0.1);
\draw [->, thin] (2.26,0.3) -- (2.25,0.3);
\draw [->, thin] (-0.7,0.36) -- (-0.71,0.36);
\draw [->] (-1, 0) -- (-1.01,0);
\draw [->] (-0.05, 0) -- (-0.06,0);
\draw [->] (2.4, 0) -- (2.39,0);

\end{tikzpicture}
\caption{Contour deformation of Step 3.} \label{fig:Omega}
\end{figure}
These domains and their boundaries $\mathcal C$ and $\mathcal C^*$ should not contain or 
intersect $\mathbb T_j$ and $\mathbb T_j^*$ and should be situated inside the regions $\re g>0$ and $\re g<0$, respectively. We add $\mathcal C$ and $\mathcal C^*$ (both oriented counterclockwise) to the contour $\Gamma$ and denote
	\be\label{defXi}\Xi:=\mathcal C\cup\mathcal C^*\cup\mathcal C_{\mathfrak r}\cup \mathcal C_{\mathfrak r}^*\cup \mathcal C_{\epsilon}\cup \mathcal C_{\epsilon}^*\cup\bigcup_{j=1}^N\left(\T_j\cup\T_j^*\right).\ee
Define $m^{(3)}(z)$ by
\be\label{m3}
m^{(3)}(z)=\begin{cases}
	 m^{(2)}(z)G^F(z), & \quad z \in \Omega,\\
	m^{(3)}(z^{-1})\si_1, & \quad z \in \Omega^*,\\
	m^{(2)}(z)(F(z))^{-\si_3}, &  \quad z \in \C \setminus (\overline{\Omega} \cup \ol{\Omega^*}).
	\end{cases}
	\ee
\begin{theorem} \label{thm:v3}
For $\xi \in \mathcal I_\varepsilon$, RH problem~\ref{RH2} is equivalent to the following RH problem: to find a vector function
holomorphic in $\C\setminus(\Xi\cup [q, q^{-1}])$ which has continuous limits 
on the sides of the contour \ $\Xi\cup[q, q^{-1}]$ except for the points $q, q^{-1}$  and satisfies
\begin{itemize}
		\item the jump condition $m_{+}^{(3)}(z,n,t)=m_{-}^{(3)}(z,n,t) v^{(3)}(z,n,t)$, where
\be\label{vi30} 
v^{(3)}(z)=
\begin{cases}
	\I \si_1, & \quad z \in [q,y],\\
	\begin{pmatrix}
		\E^{2 \I tB- \I \Delta} & 0 \\
		\frac{\chi(z)\E^{2t (g_+(z) + g_-(z))}}{\Pi^2(z)F_+(z)F_-(z)}  & \E^{-2 \I tB + \I \Delta}
		\end{pmatrix}, & \quad z \in [y, \mathfrak r],\\
		\begin{pmatrix}
		\E^{2 \I tB- \I\Delta} & 0 \\
		0 & \E^{-2\I t B + \I\Delta}
		\end{pmatrix}, & \quad z \in [\mathfrak r, -1],
\end{cases}\ee
\be\label{vi3}v^{(3)}(z)=
\begin{cases}		
					\begin{pmatrix}
			1 & \frac{\Pi^2(z)F^2(z)}{X(z)} e^{-2tg(z)} \\
			0 & 1
			\end{pmatrix}, & \quad z \in \mathcal C,\\[3mm]
\si_1 v^{(3)}(z^{-1})\si_1, &\quad z\in \mathcal C^*,\\[3mm]
						\si_1 (v^{(3)}(z^{-1}))^{-1}\si_1, &  \quad z\in [-1, q^{-1}],\\[3mm]
	[F(z)]^{-\sigma_3}v^{(2)}(z)[F(z)]^{\sigma_3}, & \quad z\in \mathcal \Gamma;						
	\end{cases}
\ee
\item the symmetry condition $m^{(3)}(z^{-1}) = m^{(3)}(z) \si_1$;
\item the normalization condition $m_1^{(3)}(0) \cdot m_2^{(3)}(0) = 1$, $m_1^{(3)}(0) > 0$.
\item At the points  $\{q, q^{-1}, y, y^{-1}, \mathfrak r, \mathfrak r^{-1}\}$ of discontinuity of the jump matrix, 
  $m^{(3)}(z)$ has the following behavior: it has at most a fourth root singularity
 \begin{align}\label{sing4}m^{(3)}(z)&=O(z-\kappa)^{-1/4},\quad \mbox{as}\ z\to \kappa\in \{q,q^{-1}\}, \ \mbox{and} \\ \nn 
 m^{(3)}(z)&=O(1), \quad \mbox{as}\ z\to \kappa\in \{\mathfrak r, \mathfrak r^{-1}, y, y^{-1}\}.
 \end{align}
\end{itemize}
Here $B$ and $\Delta$ are defined by \eqref{iB} and \eqref{Delta}, $g(z)$ by \eqref{g}, and $F(z)$  by \eqref{w(z)}, \eqref{p(z)}, \eqref{F(z)}. 
 For large $z$, $m^{(3)}(z)$ and the solution $m(z)$ of the initial RH problem~\ref{RH1} are connected via
\be\label{connection6}m^{(3)}(z) = m(z) \Big[\Pi(z)F(z) \E^{t(\Phi(z) - g(z))}\Big]^{-\si_3}.
\ee
\end{theorem}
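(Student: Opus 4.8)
The plan is to obtain $m^{(3)}$ as the terminal object of the explicit conjugation chain $m\mapsto m^{\mathrm{ini}}\mapsto m^{(1)}\mapsto m^{(2)}\mapsto m^{(3)}$ of Steps~1--3 and to verify, item by item, that it satisfies the requirements of the stated RH problem. Since every conjugating factor --- $A(z)[\Pi(z)]^{-\sigma_3}$ in \eqref{trans7}, the triangular annulus factors of \eqref{step1}, $\E^{-t(\Phi(z)-g(z))\sigma_3}$ of Step~2, and $G^F(z)$ resp.\ $[F(z)]^{-\sigma_3}$ of \eqref{m3} --- is explicit and possesses an explicit inverse off the relevant contour, the claimed \emph{equivalence} with RH problem~\ref{RH2} will be an automatic by-product of these verifications. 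I will take the reductions to RH problems~\ref{3} and~\ref{4} as already accomplished, so only Step~3 remains to be analysed.

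The main computation is the jump relation for $m^{(3)}$ on each piece of $\Xi\cup[q,q^{-1}]$. On $[q,y]$ I would invoke the Schur factorisation $v^{(2)}=G_-^F\,(\I\sigma_1)\,(G_+^F)^{-1}$ recorded just above the theorem, which yields at once $m_+^{(3)}=m_+^{(2)}G_+^F=m_-^{(2)}v^{(2)}G_+^F=m_-^{(3)}\,\I\sigma_1$. On the newly introduced contour $\mathcal C$ the inner value is $m^{(2)}G^F$ while the outer value is $m^{(2)}[F]^{-\sigma_3}$, so the jump equals $(G^F)^{-1}[F]^{\sigma_3}=\big(\begin{smallmatrix}1 & \Pi^2F^2X^{-1}\E^{-2tg}\\ 0 & 1\end{smallmatrix}\big)$, which is exactly the matrix in \eqref{vi3}; for this one checks that $G^F$ is holomorphic and invertible in $\Omega\setminus[q,y]$, which is guaranteed because $\Omega$ was placed inside the domain of existence of $X$, disjoint from $\si_d$ and the circles $\T_j$, because $X$ is non-vanishing on the open arc $(q,y)$, and because at the possibly resonant edge $q$ the ratio $F^2/X$ remains bounded by Lemma~\ref{lemma:F}(f)--(g) and \eqref{chinon}--\eqref{chirez}. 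On $[y,\mathfrak r]$, $[\mathfrak r,-1]$ and $\Gamma$ the transformation is the scalar conjugation $[F]^{-\sigma_3}v^{(2)}[F]^{\sigma_3}$ (with boundary values on the first two pieces), and feeding in $v^{(2)}$ from RH problem~\ref{4} together with Lemma~\ref{lemma:F}(a)--(c), the relations $g_+=-g_-$ on $[q,y]$, $g_+-g_-=2\I B$ on $[y,\mathfrak r]\subset[y,y^{-1}]$ of Lemma~\ref{propg}(d),(f), and $\chi=\I|\chi|$, reproduces \eqref{vi30}--\eqref{vi3} after a short exponent bookkeeping. The starred pieces $\mathcal C^*$, $[-1,q^{-1}]$ and $\Gamma^*$ are then forced by the Symmetry condition, using $F(z^{-1})=F^{-1}(z)$, $\mathcal S(z^{-1})=\mathcal S(z)$, $p(z^{-1})=-p(z)$ and the orientation conventions ($\mathcal C^*\uparrow\uparrow\mathcal C$, $[q_1^{-1},q^{-1}]\downarrow\uparrow[q,q_1]$), which turn each $z$-jump into its $\sigma_1(\cdot)\sigma_1$ resp.\ $\sigma_1(\cdot)^{-1}\sigma_1$ partner.

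The symmetry $m^{(3)}(z^{-1})=m^{(3)}(z)\sigma_1$ follows from the corresponding property of $m^{(2)}$ together with $[F(z^{-1})]^{-\sigma_3}=\sigma_1[F(z)]^{-\sigma_3}\sigma_1$ and the fact that \eqref{m3} matches $m^{(2)}G^F$ on $\Omega$ with its $\sigma_1$-conjugate on $\Omega^*$. For the normalisation I would use $F(0)>0$, $\Pi(0)>0$ and $p(z)=O(z)$ as $z\to0$ by \eqref{pi3} (so that the off-diagonal entry of $G^F$ vanishes at $0$ and does not affect the value there), giving $m_1^{(3)}(0)m_2^{(3)}(0)=m_1^{(2)}(0)m_2^{(2)}(0)=1$ and $m_1^{(3)}(0)>0$. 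The connection formula \eqref{connection6} comes out by composing the substitutions away from all the small contours: for large $z$, Step~1 contributes only $[\Pi(z)]^{-\sigma_3}$ (the annulus factors being the identity there), Step~2 contributes $\E^{-t(\Phi(z)-g(z))\sigma_3}$, and Step~3 contributes $[F(z)]^{-\sigma_3}$, and since these diagonal factors commute one obtains $m^{(3)}(z)=m(z)\,[\Pi(z)F(z)\E^{t(\Phi(z)-g(z))}]^{-\sigma_3}$.

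The genuinely delicate point I expect is the singularity statement \eqref{sing4}, where the resonance data enter. At $y,y^{-1}$ one verifies that the simple zero of $\mathcal S$ cancels the inverse-square-root pole of $p$, so $F$ is bounded and non-vanishing there, and likewise $F$ is holomorphic and non-zero at $\mathfrak r,\mathfrak r^{-1}$; as $m^{(2)}$ is bounded at all four points (RH problem~\ref{4}), so is $m^{(3)}$. At $q$ and $q^{-1}$ one must combine, case by case, the $(z-q)^{\pm 1/4}$ behaviour of $F$ from Lemma~\ref{lemma:F}(f)--(g), the behaviour $X=O((z-q)^{1/2})$ in the non-resonant and $X=O((z-q)^{-1/2})$ in the resonant case from \eqref{chinon}--\eqref{chirez}, and the two admissible local forms of $m^{(2)}$ from \eqref{rezon}, to confirm that each component of $m^{(3)}$ is $O((z-q)^{-1/4})$: the key observation is that in the resonant case the inverse-square-root blow-up of $m_1^{(2)}$ is tempered exactly to $(z-q)^{-1/4}$ by the $F^{-1}$ factor, while the off-diagonal entry $-\Pi^2F^2X^{-1}\E^{-2tg}$ of $G^F$ contributes only $O((z-q)^{1/4})$, so that no stronger singularity appears in the second component. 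Keeping the starred-contour orientation bookkeeping consistent throughout is the other place where care is required.
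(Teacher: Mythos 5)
Your proposal follows essentially the same route as the paper: the jump, symmetry, normalization and connection formulas are read off from the explicit conjugations of Steps 1--3 together with Lemmas~\ref{propg} and~\ref{lemma:F} (the paper compresses this into ``immediate'', recording only the key identity \eqref{imp33}), and the one point the paper does treat in detail --- the singularity statement \eqref{sing4} at $q,q^{-1}$ --- you handle exactly as the paper does, by writing $m^{(3)}=m^{(2)}G^F$ locally and combining Lemma~\ref{lemma:F}(f)--(g), the behaviour \eqref{chinon}--\eqref{chirez} of $X$, and the admissible local forms of $m^{(2)}$ in the resonant and non-resonant cases. Only harmless slips occur: the jump on $\mathcal C$ is $(G^F)^{-1}[F]^{-\sigma_3}$ rather than $(G^F)^{-1}[F]^{\sigma_3}$ (your final matrix is nevertheless the correct one), the off-diagonal entry of $G^F$ carries a single factor $F$ (the $F^2$ appears only in the resulting jump matrix), and $F$ is bounded and non-vanishing but not holomorphic at $\mathfrak r,\mathfrak r^{-1}$, since these points lie on the jump segment $[y,y^{-1}]$ --- none of which affects the argument.
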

\begin{proof}The jump condition is immediate from Lemmas~\ref{lemma:F},~\ref{propg}, which imply
\be\label{imp33}
\frac{F_-(z)}{F_+(z)} \E^{t(g_+(z) - g_-(z))} =  \E^{2 \I t B - \I \Delta}, \quad z \in [y, y^{-1}].
\ee
The  claim to be discussed in more detail is \eqref{sing4}. Transformation \eqref{m3} implies that in a 
vicinity of $q$,
\[
m^{(3)}(z)=\begin{pmatrix}
F^{-1}(z) m_1^{(2)}(z), & - \frac{\Pi^2(z)F(z)}{X(z)}m_1^{(2)}(z)\E^{-2tg(z)} + F(z) m_2^{(2)}(z)\end{pmatrix}. 
\] 
In the non-resonant case \eqref{chinon} we have three possibilities for $m$, and therefore for $m^{(2)}$, which 
include possible zeros of the Jost solutions, 
\begin{enumerate} 
\item $m^{(2)}( q)= (C_1, C_2)$
\item $m^{(2)}(z)= (C_1(z- q)^{1/2}, C_2)(1 +o(1))$ 
\item $m^{(2)}(z)=(C_1, C_2 (z - q))(1 +o(1))$, where $ C_1C_2\neq 0$. 
\end{enumerate}
The symmetry condition implies the respective behavior at $q^{-1}$.  In the resonant case \eqref{chirez} we have \eqref{rezon}. 
By use of (f) and (g) of Lemma \ref{lemma:F} we obtain \eqref{sing4}.
\end{proof}

In summary, we have transformed the initial RH problem $[m^{\mathrm{ini}}(z,n,t); \mbox{RHP~\ref{RH2}}]$ by {\it Steps 1-3}
to an equivalent RH problem $[m^{(3)}(z,n,t); \mbox {Theorem \ref{thm:v3}}]$
with jump matrix $v^{(3)}$ of the form $v^{(3)}= v^{mod} + v^{err}$, where
\begin{equation}\label{vmodDef}
	v^{mod}=
	\begin{cases}
		\I \si_1, & \quad z \in [q,y],\\
		-\I \si_1, & \quad z \in [y^{-1}, q^{-1}],\\
		\begin{pmatrix}
			\E^{2\I tB - \I \Delta} & 0 \\
			0 & \E^{-2\I t B + \I \Delta}
			\end{pmatrix}, & \quad z \in [y, y^{-1}],\\
			\id, & \quad z\in\Xi.
			\end{cases}
	\end{equation}	
 The matrix $v^{mod}$ on 
$[q,q^{-1}]$ is the jump matrix of an explicitly solvable RHP and its solution will yield the principal term of the
long-time asymptotic expansion of the solution for the initial value problem \eqref{tl}--\eqref{main}, \eqref{decay}. We will solve this model RHP in the next section.
Note that the jump matrix $v^{(3)}$ on the contour $\Xi\cup [\mathfrak r, \mathfrak r^{-1}]$ is exponentially close to the identity matrix  as $t \rightarrow \infty$
except for small neighborhoods of the critical (parametrix) points $y, y^{-1}$. To estimate the error term one has to rescale the equivalent RHP in neighborhoods of the parametrix points and solve the respective local problems, which can be analyzed and controlled individually. This will be done in Section \ref{s:par}.

\section{Solution of the vector model RH problem}\label{sec:4}

We have to solve the following jump problem

\begin{modelRHP} \label{mRHP}
Find a holomorphic vector function in $\C \setminus [q^{-1},q]$ satisfying
	\begin{itemize}
	\item the jump condition $m^{mod}_+(z)=m^{mod}_-(z)v^{mod}(z)$ with $v^{mod}(z)$ given by \eqref{vmodDef};
	\item $m^{mod}(z^{-1}) = m^{mod}(z) \si_1$;
	\item $ m^{mod}_1(0) \cdot m^{mod}_2 (0) = 1$,  $m^{mod}_1(0) > 0$;
	\item The vector $m^{mod}(z)$ has continuous limits as $z$ approaches the jump contour except for $q$ and $q^{-1}$ and the points of discontinuity of the jump matrix, $y$, $y^{-1}$, where the forth-root singularities are admissible. 
\end{itemize}
\end{modelRHP}
Uniqueness of the solution to this problem is proved in \cite{emt14}.

Consider the two-sheeted Riemann surface $\mathbb X$ associated with the function
\[
\mathcal R(z) = \sqrt{(z-q)(z-y)(z-y^{-1})(z-q^{-1})}
\]
such that $\mathcal R(1) \in \R_+$ and $\mathcal R(-1) \in \R_-$. The sheets of $\mathbb X$ are glued along the cuts $[q^{-1},y^{-1}]$ and $[y,q]$.
Points on $\mathbb X$ are denoted by $(z, \pm)$.
We first choose a canonical homology basis of cycles $\{\mathfrak a, \mathfrak b\}$ on $\mathbb X$, see Fig.~\ref{fig:Hom}.
The $\mathfrak b$ cycle surrounds the interval $[y, q]$ counterclockwise on the upper sheet and the $\mathfrak a$ cycle passes
from $y$ to $y^{-1}$ on the upper sheet and back from $y^{-1}$ to $y$ on the lower sheet.

\begin{figure}[ht] 
\begin{tikzpicture}[scale=0.9]

\draw[thick] (2,0) -- (4,0);
\draw[thick] (-3.5,0) -- (-0.5,0);

\filldraw (-3.5,0) circle (1pt) node[below]{$q^{-1}$};
\filldraw(-0.5,0) circle (1pt) node[below right]{$y^{-1}$};
\filldraw(2,0) circle (1pt) node[below]{$y$};
\filldraw(4,0) circle (1pt) node[below]{$q$};
\node at (1, -0.7) {$\mathfrak a$};
\node at (3.3, -0.55) {$\mathfrak b$};

\draw[out=90, in=90, dotted] (-0.7,0) to (2.3,0);
\draw[out=270, in=270] (-0.7,0) to (2.3,0);
\draw[out=90, in=90] (1.7,0) to (4.3,0);
\draw[out=270, in=270] (1.7,0) to (4.3,0);

\draw [->] (3.01, 0.76) -- (3,0.76);
\draw [->] (3, -0.76) -- (3.1,-0.76);
\draw [->] (0.75, -0.88) -- (0.74,-0.88);
\draw [->, black!50] (0.75, 0.88) -- (0.76,0.88);

\end{tikzpicture}
\caption{Homology basis on $\mathbb X$. Solid curves lie on upper sheet, dotted curve lies on lower sheet.} \label{fig:Hom}
\end{figure}
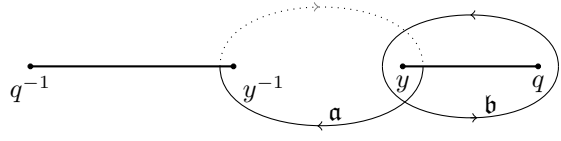

Consider the normalized holomorphic Abel differential
\[ 
\zeta = \frac{dz}{\Gamma \mathcal R(z)}, \qquad  \Gamma = \int_{\mathfrak a} \frac{dz}{\mathcal R(z)} = 2 \int_y^{y^{-1}} \frac{dz}{\mathcal R(z)} > 0,
\]
then  $\int_{\mathfrak a} \zeta = 1$ and $\tau = \tau (\xi) = \int_{\mathfrak b}\zeta \in \I \R_+$.
From here on we work on the upper sheet of $\mathbb X$ and identify it with the domain $\C\setminus([q^{-1}, y^{-1}]\cup [y,q]).$ On  $\C\setminus [q^{-1}, q]$
 introduce the Abel map $A(z)=\int_q^z \zeta$. Its properties are
determined by those of $\mathcal R(z)$, that is, we will take into account that
\[
\frac{dz}{\mathcal R(z)}=-\frac{d(z^{-1})}{\mathcal R(z^{-1})},\quad z\in \C\setminus([q^{-1}, y^{-1}]\cup [y,q]);
\]
\be\label{propW2} 
\frac{ds}{\mathcal R_-(s)}=\frac{d(s^{-1})}{\mathcal R_-(s^{-1})},\quad s\in [ y^{-1},q^{-1}]\cup [q,y],\ \ \mathcal R_-(z)=\mathcal R(z+\I 0).
\ee
\noprint{\begin{lemma} The function \eqref{root}, augmented with definition \eqref{W_-}, exhibits the following symmetry
\[
\int_a^b \frac{dz}{\mathcal R_-(z)} =  \left\{ \begin{array}{ll}
- \int_{a^{-1}}^{b^{-1}} \frac{dz}{W_-(z)},  & z \in [-\infty, q^{-1}]\cup [y^{-1},y] \cup [q, \infty] \\
\int_{a^{-1}}^{b^{-1}} \frac{dz}{W_-(z)}, & z \in [q^{-1}, y^{-1}] \cup [y, q].
\end{array}\right.
\]
\end{lemma}
\begin{proof}
By applying the substitution $z=s^{-1}$.
\end{proof}}

\begin{lemma}\label{propabel} The Abel map $A(z)$ satisfies
\begin{align}\label{propA} & A(z^{-1}) = - A(z) + \frac 12, \quad A(q^{-1}) =  \frac 12, \quad A(y) =  \frac {\tau}{2}\ (\Mod \tau),\\ \label{propA0}
& A_\pm(-1) = \frac 14 \mp \frac {\tau}{2}, \quad A(1) =  \frac 14, \quad A(0) = \frac 12 - A(\infty),\\ \label{propA1}
& A_+(z) = - A_-(z),  \quad z \in [q,y], \\ \label{propA2}
&  A_+(z) = A_-(z) - \tau, \quad z \in [y,y^{-1}], \\ \label{propA3}
&  A_+(z) = - A_-(z) + 1,  \quad z \in [y^{-1}, q^{-1}].
\end{align}
\end{lemma}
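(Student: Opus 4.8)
The plan is to derive all the identities in \eqref{propA}--\eqref{propA3} from three facts: the behaviour of the normalized differential $\zeta=\frac{dz}{\Gamma\mathcal R(z)}$ under the involution $z\mapsto z^{-1}$; the period normalizations $\int_{\mathfrak a}\zeta=1$ and $\int_{\mathfrak b}\zeta=\tau\in\I\R_+$, together with the already recorded $\Gamma=2\int_y^{y^{-1}}\frac{dz}{\mathcal R(z)}$; and the branch conventions $\mathcal R(1)\in\R_+$, $\mathcal R(-1)\in\R_-$ with $\mathcal R_+=-\mathcal R_-$ across $[q,y]\cup[y^{-1},q^{-1}]$. Throughout I use that $\zeta$ is holomorphic at $z=0$ and $z=\infty$ (so the stated values there are finite) and that $A$ is single valued on $\C\setminus[q^{-1},q]$, since $\mathcal R(z)\sim z^2$ forces the residue of $\zeta$ at $\infty$ to vanish.

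\textbf{Symmetry.} From $\mathcal R(z^{-1})^2=z^{-4}\mathcal R(z)^2$ and $\mathcal R(1)>0$ one gets $\mathcal R(z^{-1})=z^{-2}\mathcal R(z)$ off the cuts, which is exactly the transformation rule for $\frac{dz}{\mathcal R(z)}$ displayed just before \eqref{propW2}; in particular $\zeta$ pulls back to $-\zeta$ under $z\mapsto z^{-1}$. Substituting $w=s^{-1}$ in $A(z^{-1})=\int_q^{z^{-1}}\zeta$ then gives
\[
A(z^{-1})=-\int_{q^{-1}}^{z}\zeta=-A(z)+A(q^{-1}).
\]
Evaluating this at $z=\infty$ gives $A(0)=-A(\infty)+A(q^{-1})$, and at the fixed point $z=1$ gives $2A(1)=A(q^{-1})$; once $A(q^{-1})=\tfrac12$ is established these become the first line of \eqref{propA} and the last two entries of \eqref{propA0}.

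\textbf{The value $A(q^{-1})=\tfrac12$ and $A(y)$.} Hug the upper side of the cut $[q^{-1},q]$ of $A$ from $q$ to $q^{-1}$ and split the path into the leg over $[y,q]$, the leg over the gap $(y^{-1},y)$, and the leg over $[y^{-1},q^{-1}]$. The outer two legs are $A_+(y)$ and $A_-(y)$ in some order, and they cancel because $\mathcal R_+=-\mathcal R_-$ on $[y,q]$ gives $A_+(y)=-A_-(y)$; the substitution $s\mapsto s^{-1}$ together with \eqref{propW2} is what identifies the $[y^{-1},q^{-1}]$-leg with the $[y,q]$-leg on the opposite side. The middle leg is $\int_y^{y^{-1}}\frac{ds}{\Gamma\mathcal R(s)}=\tfrac12$ by $\Gamma=2\int_y^{y^{-1}}\frac{ds}{\mathcal R(s)}$. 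Hence $A(q^{-1})=\tfrac12$. Separately, wrapping the $\mathfrak b$-cycle around $[y,q]$ and using $\mathcal R_+=-\mathcal R_-$ gives $\int_{\mathfrak b}\zeta=\pm2A(y)$, so $A(y)=\pm\tfrac\tau2\equiv\tfrac\tau2\ (\Mod\tau)$, the last entry of \eqref{propA}; and $A(1)=\tfrac14$, $A(0)=\tfrac12-A(\infty)$ follow from the previous paragraph.

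\textbf{The jump relations and $A_\pm(-1)$.} On $[q,y]$ the path from $q$ lies on the cut itself, so $\mathcal R_+=-\mathcal R_-$ gives $A_+=-A_-$, which is \eqref{propA1}; applying the involution (which interchanges the two sides of a cut) to this identity together with $A(z^{-1})=-A(z)+\tfrac12$ yields $A_++A_-=1$ on $[y^{-1},q^{-1}]$, i.e.\ \eqref{propA3}. On the gap $[y,y^{-1}]$ the differential $\zeta$ is analytic, but $A$ still jumps, because two paths from $q$ reaching opposite sides of the gap differ by a loop encircling $[y,q]$, i.e.\ by $\int_{\mathfrak b}\zeta$; the orientation of $\{\mathfrak a,\mathfrak b\}$ in Fig.~\ref{fig:Hom} together with $\tau\in\I\R_+$ fixes the sign to $A_+=A_--\tau$, which is \eqref{propA2}. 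Since $-1\in(y^{-1},y)$, \eqref{propA2} gives $A_+(-1)-A_-(-1)=-\tau$ while the involution at the fixed point $-1$ gives $A_+(-1)+A_-(-1)=\tfrac12$, whence $A_\pm(-1)=\tfrac14\mp\tfrac\tau2$, completing \eqref{propA0}.

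\textbf{Main obstacle.} Each individual computation is routine; the only genuine work is the orientational bookkeeping — which sheet and which side of a cut an integration path lies on as it passes the branch points $y$ and $y^{-1}$ — so as to pin every ``modulo periods'' ambiguity down to the precise representative stated (the sign of $A(y)$ and of the gap jump, the constant $1$ rather than $0$ in \eqref{propA3}, and so on). Once the branch of $\mathcal R$ is fixed through $\mathcal R(\pm1)$ and $\{\mathfrak a,\mathfrak b\}$ are oriented as in Fig.~\ref{fig:Hom}, all the displayed values are forced. This is the $z$-variable transcription of the computation already carried out in \cite{emt14}.
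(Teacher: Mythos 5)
Your proposal is correct and takes the route the paper itself implicitly relies on: Lemma \ref{propabel} is stated there without proof, being regarded as the routine $z$-variable transcription from the inversion symmetry of $dz/\mathcal R(z)$ (the relations displayed just before \eqref{propW2}), the normalizations $\int_{\mathfrak a}\zeta=1$, $\tau=\int_{\mathfrak b}\zeta$, and collapsing paths onto the cuts — exactly the ingredients you use, and all your computations check out. One bookkeeping clarification: the sign in \eqref{propA2} (and hence in $A_\pm(-1)=\tfrac14\mp\tfrac\tau2$) is not fixed by $\tau\in\I\R_+$, which holds automatically, but by the paper's convention that contours such as $[y,y^{-1}]$ are oriented right-to-left so that the $+$ side is the lower half-plane; with that convention your loop argument (difference of the two boundary values equals a counterclockwise $\mathfrak b$-period) gives precisely $A_+=A_--\tau$.
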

Associated with $\mathbb X$ is the Riemann theta function
\be\label{Thetta}
\theta(z)=\theta(z\mid \tau)=\sum_{k\in\Z} \exp\big(\pi\I k^2\tau + 2\pi\I k z\big).
\ee
It satisfies $\theta(-z)=\theta(z)$ and 
$\theta(z+ l + k \tau)=\exp(-2\pi \I k z - \pi \I k^2 \tau)  \theta(z)$ for $l, k \in \Z$.
\noprint{To add the missing jump factor $\pm \I$ along $[q,y] \cup [y^{-1},q^{-1}]$ we utilize the function
\[
\gamma(z)= \sqrt[4]{\frac{(q+1)(q^{-1}+1)}{(y+1)(y^{-1}+1)}}
\sqrt[4]{\frac{(y-z)(y^{-1}-z)}{(q-z)(q^{-1}-z)}}.
\]
It satisfies $\gamma(-1)=1$, $\gamma(z^{-1}) = \gamma(z)$,  $(\gamma - \gamma^{-1})(-1)=0$, and solves
the jump problem
\[
\gamma_+(z) =
\begin{cases}
		\I \gamma_-(z)  & \quad z \in [q,y], \\
		-\I \gamma_-(z) , & \quad z \in [y^{-1},q^{-1}].
			\end{cases}
\]}
\begin{lemma}\label{lemvec3}[Vector solution of the model RHP]
On $\C \setminus [q, q^{-1}]$ define
\begin{align}\label{deli}
\delta(z) &= \frac{\theta\big(A(z) -\frac 12+ \frac{tB}{2\pi}-\frac{\Delta}{4\pi}\big)\theta\big(A(z) +\frac{tB}{2\pi} -\frac{\Delta}{4\pi}\big)}
{\theta\big(A(z) -\frac 12\big)\theta\big(A(z)\big)}
\end{align}
and
\be\label{defash}
H(z) = \sqrt[4]{\frac{(y-z)(y^{-1}-z)}{(q-z)(q^{-1}-z)}}.
\ee
Then 
 the vector function
\be\label{uzhe}
m^{mod}(z) = \begin{pmatrix}
\delta(z), & \delta(z^{-1})
\end{pmatrix} \frac{H(z)}{\sqrt{\delta(0)\delta(\infty)}}
\ee
is the unique solution of   the model RH problem. Moreover,
\be\label{defdelta1}
\delta(z) = \frac{\theta\big(2A(z) -\frac 12+ \frac{tB}{\pi}-\frac{\Delta}{2\pi}\mid 2 \tau \big)}
{\theta\big(2A(z) -\frac 12\mid 2\tau\big)}.
\ee
\end{lemma}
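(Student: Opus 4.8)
The plan is to verify directly that the explicit vector in \eqref{uzhe} meets each requirement of the model RH problem, and then to read off \eqref{defdelta1} from a classical theta identity relating modulus $\tau$ to modulus $2\tau$; since uniqueness is already established in \cite{emt14}, only existence has to be checked. Throughout I would exploit the quasi-periodicity $\theta(w+l+k\tau)=\E^{-2\pi\I k w-\pi\I k^2\tau}\theta(w)$ and the evenness of $\theta$, together with the values of the Abel map and the boundary identities collected in Lemma~\ref{propabel}.

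First I would analyse $\delta(z)$ from \eqref{deli}: its numerator and denominator are products of factors $\theta\circ A$, so the jump behaviour across the cuts is governed by Lemma~\ref{propabel}. Across the $\mathfrak a$-cut $[y,y^{-1}]$ one has $A_+(z)=A_-(z)-\tau$, so every theta acquires an exponential factor, but the additive constants $-\tfrac12$ and $0$ occur identically in numerator and denominator; all $z$-dependent exponentials cancel and only the $z$-independent factor $\E^{4\pi\I(tB/(2\pi)-\Delta/(4\pi))}=\E^{2\I tB-\I\Delta}$ survives, which is exactly the diagonal entry of $v^{mod}$ there, while the companion component $\delta(z^{-1})$ contributes the reciprocal. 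Across $[q,y]$ one uses $A_+(z)=-A_-(z)$ together with the parity and $1$-periodicity of $\theta$ to identify $\delta_\pm(z)$ with the opposite boundary value of $\delta(z^{-1})$; combined with the jump $H_+=\I H_-$ of $H$ on $[q,y]$ — here $H$ plays the role of $\gamma$ in \cite{emt14} — this produces the jump $\I\si_1$. The jump $-\I\si_1$ on $[y^{-1},q^{-1}]$ is then forced by the symmetry condition and need not be checked separately, and on $\Xi$ there is no jump. Finally $A(z^{-1})=-A(z)+\tfrac12$ together with $H(z^{-1})=H(z)$ interchanges the two components, which gives $m^{mod}(z^{-1})=m^{mod}(z)\si_1$.

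Next I would pin down the local behaviour and the normalization. In $\C\setminus[q^{-1},q]$ the denominator factors $\theta(A(z)-\tfrac12)$ and $\theta(A(z))$ vanish only at $z=y$ and $z=y^{-1}$ respectively, because $A(y)=\tfrac\tau2$ and $A(y^{-1})=\tfrac12-\tfrac\tau2\equiv\tfrac{1+\tau}{2}$ are precisely the zeros of the corresponding theta and the Abel map is injective on $\mathbb X$; for generic $t$ the numerator does not vanish at these points, so $\delta$ has no spurious poles. Near $y$ one has $A(z)-\tfrac\tau2=\int_y^z\zeta\sim C(z-y)^{1/2}$ since $\mathcal R$ has a branch point at $y$, hence $\delta(z)\sim C'(z-y)^{-1/2}$ whereas $H(z)\sim C''(z-y)^{1/4}$, so the product is $O((z-y)^{-1/4})$, and likewise at $y^{-1}$, in agreement with \eqref{sing4}; at $q,q^{-1}$ the thetas are finite and nonzero and $H$ alone contributes the admissible quarter-power singularity. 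For the normalization, $H(0)=H(\infty)=1$ gives $m^{mod}_1(0)m^{mod}_2(0)=\delta(0)\delta(\infty)H(0)H(\infty)/(\delta(0)\delta(\infty))=1$, and $\delta(0),\delta(\infty)\in\R$ with a common sign — because $A(0),A(\infty)\in\R$, $A(0)=\tfrac12-A(\infty)$, and $\theta$ is even and real on $\R$ — so $m^{mod}_1(0)=\delta(0)/\sqrt{\delta(0)\delta(\infty)}>0$. Together with the uniqueness result of \cite{emt14} this identifies \eqref{uzhe} as the model solution.

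For the alternative formula \eqref{defdelta1} I would use the identity $\theta(u-\tfrac12\mid\tau)\,\theta(u\mid\tau)=C(\tau)\,\theta(2u-\tfrac12\mid 2\tau)$ with $C(\tau)$ independent of $u$; it follows from a Liouville argument, both sides being entire in $u$, invariant under $u\mapsto u+1$, transforming by the same factor $-\E^{-4\pi\I u-2\pi\I\tau}$ under $u\mapsto u+\tau$, and having the same simple zeros $u\equiv\tfrac{1+\tau}{2}$ and $u\equiv\tfrac\tau2$ modulo $\Z+\tau\Z$, so their ratio is a constant. Applying it with $u\mapsto A(z)+\tfrac{tB}{2\pi}-\tfrac{\Delta}{4\pi}$ in the numerator and $u\mapsto A(z)$ in the denominator of \eqref{deli}, the constant $C(\tau)$ cancels and \eqref{defdelta1} follows, the shift in the argument doubling to $\tfrac{tB}{\pi}-\tfrac{\Delta}{2\pi}$. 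I expect the main obstacle to lie not here but in the second step: keeping track of orientations and of the branch choices of $H$, $\mathcal R$ and the individual theta factors so that the jumps come out as exactly $\I\si_1$ on $[q,y]$ and $\mathrm{diag}(\E^{2\I tB-\I\Delta},\E^{-2\I tB+\I\Delta})$ on $[y,y^{-1}]$, and — in tandem — determining the precise fractional order of the singularities at $y$ and $y^{-1}$, where the square-root vanishing produced by the branch point of $\mathcal R$ inside the Abel integral must be offset against the quarter-power factor of $H$.
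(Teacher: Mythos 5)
Your proposal is correct and follows essentially the same route as the paper: the paper likewise verifies the jump, symmetry and normalization conditions of the model problem directly from Lemma~\ref{propabel} and the listed properties of $H$ (citing \cite{emt14} both for uniqueness and for the fact that \eqref{defash}--\eqref{uzhe} are the $z$-analog of the solution computed there), and obtains \eqref{defdelta1} from the modulus-doubling identity $\theta(v\mid\tau)\theta(v-\tfrac12\mid\tau)=\theta(2v-\tfrac12\mid2\tau)\theta(\tfrac12\mid2\tau)$, which you reprove by a Liouville argument rather than quoting \cite{Dubr}. The extra details you supply (the boundary-value bookkeeping, the quarter-root singularities at $y,y^{\pm1},q^{\pm1}$, positivity at $z=0$) are exactly the computations the paper declares straightforward, so there is no substantive difference; only note the trivial slip that $m_1^{mod}(0)m_2^{mod}(0)=H(0)^2$, not $H(0)H(\infty)$, though both equal $1$.
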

\begin{proof} Using the formula 
$\theta (v \mid \tau) \theta(v-\frac12 \mid \tau) = \theta (2v - \frac 12 \mid 2\tau) \theta(\frac12 \mid 2\tau)$  (cf.\ \cite{Dubr})
we can rewrite $\delta(z)$ as a quotient of two theta functions with double period $2\tau$ as it is written in \eqref{defdelta1}. Applying  Lemma \ref{propabel}  to \eqref{deli} and using that
\[ 
\aligned &H(z^{-1})=H(z),\ z\in \C\setminus [q, y]\cup [y^{-1}, q^{-1}]; \qquad H(0)=1;\quad  \\
&H_+(z)=\I H_-(z),\ z\in [q, y];\qquad  H_+(z)=-\I H_-(z),\ z\in [y^{-1}, q^{-1}];
\endaligned\]
it is straightforward to check that $m^{mod}$ \eqref{uzhe} satisfies the jump \eqref{vmodDef} as far as the symmetry and normalization conditions. In fact, \eqref{defash}--\eqref{uzhe} are the $z$-analog of \cite[Equ.\ (5.22)]{emt14}, where the vector model problem solution \eqref{uzhe} was computed on the Riemann surface
$\mathbb M(\xi)$ of the function
\be \label{Mxi}
R^{1/2}(\la)=-\sqrt{(\la^2 -1)(\la-\la_q)(\la - \la_y)}
\ee
with
$\la_q=b-2a=\frac{1}{2}(q+q^{-1})$ and $\la_y=\frac{1}{2}(y+y^{-1})$.
\end{proof}
Recall that $B=B(\xi)$ depends on $n$ and $t$. By \cite[Lem.~5.3]{emt14}, 
\[
2\I t B=-n\Lambda - tU, 
\]
where $\Lambda$ and $U$ are the $\mathfrak b$-periods of the normalized Abel differentials $\Omega_0$ and $\omega_{\infty_+, \infty_-}$ of the second and third kind on $\mathbb M(\xi)$ (cf.\ \cite[Ch.\ 9]{tjac}).
They do not correspond to the respective Abel differentials\footnote{In fact, $\Omega_0=\frac{\pa}{\pa\la}\Omega(\la, \xi) d\la$ and $\omega_{\infty_+, \infty_-}=\frac{\pa}{\pa\la}\omega(\la, \xi) d\la$ from \eqref{gfunc}.} on $\X$, but due to \eqref{propW2} the constants $\Lambda$ and $U$ can easily be expressed in terms of the variable $z$. In particular, 
\[
\Lambda=2\int_y^q \frac{(s-h)(s-h^{-1})}{W_-(s)}\frac{ds}{s},
\]
where $\la_h=\frac{h + h^{-1}}{2}$ is the zero of $\omega_{\infty_+,\infty_-}$.
Note that $\Lambda$ is connected with the Abel map $A(z)$ by (cf.\ \cite{tjac})
\[ 
A(\infty)=A(0) -\frac{\Lambda}{4\pi\I}.
\]

\begin{remark} Observe that by \eqref{propA0},
\[
\delta_{\pm}(-1) = \frac{\theta\big(-\frac 14 \mp \frac \tau 2 + \frac{tB}{2\pi}-\frac{\Delta}{4\pi}\big)\theta\big(\frac14  \mp \frac \tau 2  +\frac{tB}{2\pi} -\frac{\Delta}{4\pi}\big)}
{\theta\big(-\frac 14 \mp \frac \tau 2 \big)\theta\big(\frac 14 \mp \frac \tau 2 \big)}.
\]
This implies that $m^{mod}_{\pm}(-1)= (0,0)$ if $2 t B - \Delta=
 \pi + 2\pi k$, $k\in \Z$. As it is shown in \cite{EPT}, for those pairs of $n$ and $t$ which satisfy
\[
n\Lambda + tU=\I(2k+1)\pi,\quad k\in\Z,
\]
a bounded and invertible  matrix solution of the model jump problem \eqref{vmodDef} with integrable isolated  singularities on the jump contour $[y, y^{-1}]$ does not exist.
\end{remark}

In the next section we propose a matrix model solution $M^{mod}(z)$ which is invertible for all $n$ and $t$, but has poles at the edges of the right background (at $z=1$ and $z=-1$).
We will establish that the determinant of this matrix does not have singularities at these points, and therefore is a nonzero constant. Moreover, $m^{(3)}(z)[M^{mod}(z)]^{-1}$ does not have singularities at these points too, and hence is a suitable vector for the 
conclusive asymptotic analysis.

\section{The matrix model RH problem}\label{sec:5}
Let $\omega(p)=\int_{b-2a}^p\omega_{\infty_+\infty_-}$ be the Abel integral of the third kind on the Riemann surface $\M(\xi)$ of \eqref{Mxi}, as introduced in \cite{emt14}. Let $I(\xi)$ be the closed contour on $\M(\xi)$ with projection on the interval $[\la_y, -1]$, which starts at $\la_y$, passes to 
$-1$ on the upper sheet and returns on the lower sheet. Then 
\be\label{Ome}\omega_+(p)-\omega_-(p)=-\Lambda, \quad p\in I(\xi).
\ee 
We associate $z\in \mathcal Q(\xi):=\{z: |z|<1\}\setminus [q,y]$ with $p=(\la, +)$ on the upper sheet of $\M(\xi)$, and $z^{-1}$, $z\in\mathcal Q(\xi)$, with $p^*=(\la,-)$ on the lower sheet. Calculating the $z$-analog of \eqref{Ome} and taking into account the symmetry property $\omega(p)=-\omega(p^*)$, we obtain that $\E^{ \omega(p)}=:G(z)$, defined on  $z\in \C\setminus [q^{-1},q] $ if and only if $p\in \M(\xi)\setminus I(\xi)$,
admits the representation
\[
G(z)=\exp\left(\int_q^z \frac {(s-h)(s-h^{-1})}{\mathcal R(s)}\frac{ds}{s}\right),\quad z\in \C\setminus [q^{-1},q],
\]
and has the following properties.
\begin{itemize}
\item The function $G(z)$  is holomorphic on $\mathcal E\setminus [y^{-1}, y]$ and satisfies $G(z^{-1})= G^{-1}(z)$.
\item Its jumps are given by
\[\aligned &G_+(z) = G_-(z)\E^{-\Lambda},\qquad z\in [y, y^{-1}],\\ 
& G_\pm(z) = [G_\mp(z^{-1})]^{-1},  \qquad z\in[q,y]\cup[y^{-1}, q^{-1}].
\endaligned\]
\item The following asymptotic expansion is valid,
\be\label{fromG}
G(z)=-\frac{\ti a}{2 z}\Big(1 + 2\ti b z+ O(z^2)\Big), \quad
G(z^{-1})=-\frac{2z}{\ti a}\Big(1 -2\ti b z+O(z^2)\Big).
\ee
Here $\ti a$ and $\ti b$ are the coefficients of the asymptotic expansion for $\omega(p)$ as $p\to\infty_\pm$ 
(cf.\ \cite[Equ.\ (9.44)]{tjac}),
\[
\E^{ \omega(p)}=-\left(\frac{\ti  a}{\la}\right)^{\pm 1}\left(1 +\frac{\ti b}{\la} +O(\la^{-2})\right).
\]
\end{itemize}
Note that in all our considerations the values $y$, $\tau$, $h$ etc.\ depend on $n$ via $\xi$. 
To emphasize the dependence of \eqref{defdelta1} on $n$, we abbreviate
\be\label{defalpha}
\aligned
& \alpha_n(z):=\delta(z)\frac{H(z)}{\sqrt{\delta(0)\delta(\infty)}}=\alpha_n H(z)\frac{\theta\big(2A(z) -\frac 12 - \frac{n\Lambda}{2\pi\I}   - \frac{tU}{2\pi\I}-\frac{\Delta}{2\pi}\mid 2 \tau \big)}
{\theta\big(2A(z) -\frac 12\mid 2 \tau \big)},\\
& \alpha_n:=\frac{\theta \big(2A(\infty) -\frac 12\mid 2 \tau \big) }{\sqrt{\theta\big(2A(\infty) -\frac 12 - \frac{n\Lambda}{2\pi\I}   - \frac{tU}{2\pi\I}-\frac{\Delta}{2\pi}\mid 2 \tau \big)\theta\big(2A(0) -\frac 12 - \frac{n\Lambda}{2\pi\I}   - \frac{tU}{2\pi\I}-\frac{\Delta}{2\pi}\mid 2 \tau\big)}}.
\endaligned\ee
We fix $y$, $\tau$ and $h$ in \eqref{defalpha} and consider this expression shifted to $n+1$,  
\[
\alpha_{n+1}(z)=\alpha_{n+1} H(z)\frac{\theta\big(2A(z) -\frac 12 - \frac{(n+1)\Lambda}{2\pi\I}   - \frac{tU}{2\pi\I}-\frac{\Delta}{2\pi} \mid  2 \tau \big)}
{\theta\big(2A(z) -\frac 12 \mid 2\tau\big)}.
\]
\begin{lemma}\label{lemmat}
The vector function
\be\label{nem}
m^\#(z)=\begin{pmatrix}\beta_n(z), & \beta_n(z^{-1})\end{pmatrix}, \quad \mbox{where $\beta_n(z):=\alpha_{n+1}(z)G(z^{-1})$},
\ee
solves the jump problem of the model RH problem, that is,
\[m^\#_+(z,n,t)=m^\#_-(z,n,t)v^{mod}(z),\quad \mbox{ where}\]
	\begin{equation}\label{vmodDef1}
	v^{mod}(z)=
	\begin{cases}
		\I \si_1, & \quad z \in [q,y],\\
		-\I \si_1, & \quad z \in [y^{-1}, q^{-1}],\\
		\begin{pmatrix}
			\E^{-n\Lambda - t U - \I \Delta} & 0 \\
			0 & \E^{n\Lambda + t U + \I \Delta}
			\end{pmatrix}, & \quad z \in [y, y^{-1}].\\
			\end{cases}
	\end{equation}
It satisfies the symmetry condition	
 $m^\#(z^{-1}) = m^\#(z) \si_1$.
The normalization condition is not fulfilled, instead
	$\lim_{z\to 0} m^\#_1(z)m^\#_2(z)=1$. More precisely, 
	\[ 
	m^\#_1(z)=-\frac{2z}{\ti a}\alpha_{n+1}(0)\big(1+ O(z)\big),\quad 
	m^\#_2(z)=-\frac{\ti a}{2 z}\alpha_{n+1}(\infty)\big(1 + O(z)\big), 
	\ \mbox{as $z\to 0$}.
	\]
\end{lemma}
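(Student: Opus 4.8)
The plan is to recognize $m^{\#}$ as a diagonal conjugation of the model vector solution of Lemma~\ref{lemvec3} with the discrete variable shifted from $n$ to $n+1$, and then to read off every claim from Lemma~\ref{lem:conjug} together with the listed properties of $G$.

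First I would observe that the construction in Lemma~\ref{lemvec3} uses the integer $n$ only through the combination $2\I tB=-n\Lambda-tU$ inside the theta functions. Hence $\big(\alpha_{n+1}(z),\alpha_{n+1}(z^{-1})\big)$, which is that construction with $n$ replaced by $n+1$, again solves the model jump problem, now with $(1,1)$-entry $\E^{-(n+1)\Lambda-tU-\I\Delta}$ of the jump on $[y,y^{-1}]$; moreover it retains the symmetry $m(z^{-1})=m(z)\si_1$, the at-most-fourth-root behaviour at $q,q^{-1},y,y^{-1}$, and --- since $H(0)=H(\infty)=1$ and $\alpha_{n+1}(0)\alpha_{n+1}(\infty)=1$ by \eqref{defalpha} --- the limit $\lim_{z\to0}\alpha_{n+1}(z)\alpha_{n+1}(z^{-1})=1$. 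Using $G(z^{-1})=G^{-1}(z)$ I would then write
\[
m^{\#}(z)=\big(\alpha_{n+1}(z)\,G(z^{-1}),\ \alpha_{n+1}(z^{-1})\,G(z)\big)=\big(\alpha_{n+1}(z),\alpha_{n+1}(z^{-1})\big)\,[G(z)]^{-\si_3},
\]
so that $m^{\#}$ is precisely the conjugate of the $(n+1)$-model solution by $d(z)=G(z)$. Since $G(z^{-1})=G(z)^{-1}$, Lemma~\ref{lem:conjug} applies and in particular yields $m^{\#}(z^{-1})=m^{\#}(z)\si_1$.

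Next I would compute the conjugated jump contour by contour via Lemma~\ref{lem:conjug} with $d=G$. On $[y,y^{-1}]$ the $(n+1)$-jump is diagonal and $G_{+}=\E^{-\Lambda}G_{-}$, so the $(1,1)$-entry becomes $\frac{G_{-}}{G_{+}}\E^{-(n+1)\Lambda-tU-\I\Delta}=\E^{-n\Lambda-tU-\I\Delta}$, and symmetrically the $(2,2)$-entry; this is exactly \eqref{vmodDef1}. On the bands $[q,y]$ and $[y^{-1},q^{-1}]$, where the $(n+1)$-jump is $\pm\I\si_1$, the conjugated off-diagonal entries are $\I\,G_{+}G_{-}$ and $\I\,(G_{+}G_{-})^{-1}$, so they are unchanged provided $G_{+}(z)G_{-}(z)=1$ there; this last identity follows from the sheet interchange of $\mathbb X$ across the band cuts together with the symmetry $\omega(p)=-\omega(p^{*})$ (whence $G_{-}=\E^{\omega(p^{*})}=\E^{-\omega(p)}=G_{+}^{-1}$). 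Thus $m^{\#}$ solves the full model jump problem \eqref{vmodDef1}, and the admissible singularities are inherited: $G$ is holomorphic and non-vanishing near each of $q,q^{-1},y,y^{-1}$ (with $G(q)=G(q^{-1})=1$ and at worst a harmless square-root-type branch), so it neither adds nor removes poles or zeros there.

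Finally, the behaviour at $z=0$ comes from inserting the expansions \eqref{fromG}: with $\alpha_{n+1}(z)=\alpha_{n+1}(0)(1+O(z))$ and $\alpha_{n+1}(z^{-1})=\alpha_{n+1}(\infty)(1+O(z))$ one obtains $m^{\#}_1(z)=-\frac{2z}{\ti a}\,\alpha_{n+1}(0)\,(1+O(z))$ and $m^{\#}_2(z)=-\frac{\ti a}{2z}\,\alpha_{n+1}(\infty)\,(1+O(z))$; multiplying these the powers of $z$ cancel and $\alpha_{n+1}(0)\alpha_{n+1}(\infty)=1$ gives $\lim_{z\to0}m^{\#}_1(z)m^{\#}_2(z)=1$, so the normalization is replaced by this limiting identity. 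As the lemma is essentially the verification that a well-chosen conjugation does its job, there is no single serious obstacle; the step requiring the most care is the invariance of the $\pm\I\si_1$ jumps under conjugation by $[G(z)]^{-\si_3}$, that is, the identity $G_{+}G_{-}=1$ on the bands, for which one must track the orientation and side conventions inherited from $\mathbb X$ and $\M(\xi)$.
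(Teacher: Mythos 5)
Your proposal is correct and coincides with the argument the paper intends: the paper states this lemma without a written proof, treating it as an immediate consequence of Lemma \ref{lemvec3} (with $n$ shifted to $n+1$), the conjugation mechanism of Lemma \ref{lem:conjug} with $d=G$, and the listed jump and asymptotic properties of $G$, which is exactly what you verify. Your one nontrivial addition — checking $G_+(z)G_-(z)=1$ on $[q,y]\cup[y^{-1},q^{-1}]$ via $\omega(p^*)=-\omega(p)$ (equivalently, from $\mathcal R_+=-\mathcal R_-$ in the integral representation of $G$ together with $G(z^{-1})=G(z)^{-1}$) — is sound and fills in precisely the step the paper leaves implicit.
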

Introduce two functions defined on $\C\setminus([q^{-1},q]\cup\{1, -1\})$:
\be\label{defPsi}
\aligned
\Psi_1(z)&=\frac 12 m_1^{mod}(z) +\rho(z)m^\#_1(z)=\frac 12 \alpha_n(z) +\rho(z)\beta_n(z),\\
\Psi_2(z)&=\frac 12 m_2^{mod}(z)+\rho(z)m^\#_2(z)=\frac 12 \alpha_n(z^{-1}) +\rho(z)\beta_n(z^{-1}),
\endaligned
\ee
where
\be \label{nui}\rho(z)=-\rho(z^{-1})=\frac{2 K_n}{\ti a\,(z^{-1} - z)}, 
\qquad K_n^{-1}=\alpha_n(0)\alpha_{n+1}(\infty).\ee
\begin{lemma}\label{aboutmatrix} 
\begin{enumerate}
\item The matrix
\be\label{defmmatr}
M^{mod}(z)=\begin{pmatrix} \Psi_1(z) & \Psi_2(z)\\ \Psi_2(z^{-1}) & \Psi_1(z^{-1})\end{pmatrix}, \quad z\in \C\setminus \left([q^{-1}, q]\cup\{1,-1\}\right),
\ee
 is a meromorphic matrix solution for the model jump problem 
 \be\label{mateq} M^{mod}_+(z)= M^{mod}_-(z)v^{mod}(z),\quad z\in[q, q^{-1}],
 \ee with $v^{mod}(z)$ given by \eqref{vmodDef} or by the equivalent matrix \eqref{vmodDef1}. 
 It has simple poles at $z=\pm 1$.
\item  $M^{mod}(z)$ satisfies the symmetry
\be\label{symmat}
M^{mod}(z^{-1})=\sigma_1 M^{mod}(z)\sigma_1.
\ee
\item  The vector function 
$(1,\, 1)M^{mod}(z)$ has removable singularities at $1, -1$ and  integrable singularities at 
$\{q, q^{-1}, y, y^{-1} \}$ of order $O((z- \kappa)^{- \frac14})$ as $z\to\kappa \in \{q, q^{-1}, y, y^{-1} \}$.
\item  The determinant of $M^{mod}(z)$ is constant,
\be\label{determ} \det M^{mod}(z)=1,\quad z\in \C.
\ee
\end{enumerate}
\end{lemma}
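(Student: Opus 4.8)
The plan is to deduce all four assertions from the single structural observation that the two rows of $M^{mod}(z)$ are the vector functions $\tfrac12 m^{mod}(z)\pm\rho(z)m^\#(z)$. Indeed, from \eqref{defmmatr}, \eqref{defPsi} and $m^{mod}(z)=(\alpha_n(z),\alpha_n(z^{-1}))$, $m^\#(z)=(\beta_n(z),\beta_n(z^{-1}))$, the first row is $(\Psi_1(z),\Psi_2(z))=\tfrac12 m^{mod}(z)+\rho(z)m^\#(z)$, while the second row $(\Psi_2(z^{-1}),\Psi_1(z^{-1}))$ equals $\tfrac12 m^{mod}(z)-\rho(z)m^\#(z)$ because $\rho(z^{-1})=-\rho(z)$. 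Since $\rho$ is holomorphic on $\C\setminus\{1,-1\}$, it has no jump across $[q^{-1},q]$, so by Lemmas~\ref{lemvec3} and~\ref{lemmat} (which give $m^{mod}_+=m^{mod}_-v^{mod}$ and $m^\#_+=m^\#_-v^{mod}$) each row of $M^{mod}$ satisfies $(\,\cdot\,)_+=(\,\cdot\,)_-v^{mod}$; this is exactly \eqref{mateq}. Away from $[q^{-1},q]\cup\{1,-1\}$ the entries are holomorphic (recall $\beta_n(z)=\alpha_{n+1}(z)G(z^{-1})$ is holomorphic off $[q^{-1},q]$), and at $z=\pm1$ the factor $\rho$ produces a simple pole (with nonzero residue in the generic case, since $\beta_n(\pm1)\neq0$ there; in any event at most a simple pole). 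This proves (1). For (2): replacing $z$ by $z^{-1}$ in \eqref{defmmatr} interchanges $\Psi_i(z)\leftrightarrow\Psi_i(z^{-1})$ in every entry, which is precisely the effect of conjugating by $\si_1$, so \eqref{symmat} holds by inspection.

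For (3) I would compute $(1,1)M^{mod}(z)=\big(\Psi_1(z)+\Psi_2(z^{-1}),\,\Psi_2(z)+\Psi_1(z^{-1})\big)$. Using $\Psi_2(z^{-1})=\tfrac12\alpha_n(z)-\rho(z)\beta_n(z)$ and $\Psi_1(z^{-1})=\tfrac12\alpha_n(z^{-1})-\rho(z)\beta_n(z^{-1})$, the $\rho\beta_n$ terms cancel and one gets
\[
(1,1)M^{mod}(z)=\big(\alpha_n(z),\ \alpha_n(z^{-1})\big)=m^{mod}(z).
\]
Thus this particular linear combination of the rows is literally the vector solution of the model RHP. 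In particular it is bounded at $z=\pm1$, so the singularities of $M^{mod}$ at $\pm1$ are absent in it, and at $\kappa\in\{q,q^{-1},y,y^{-1}\}$ its behaviour is the admissible $O((z-\kappa)^{-1/4})$ by Lemma~\ref{lemvec3}. This is (3).

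For (4), a direct expansion of $\det M^{mod}(z)=\Psi_1(z)\Psi_1(z^{-1})-\Psi_2(z)\Psi_2(z^{-1})$ shows that the $\tfrac14\alpha_n\alpha_n$ and $\rho^2\beta_n\beta_n$ contributions cancel, leaving
\[
\det M^{mod}(z)=-\rho(z)\,W(z),\qquad W(z):=\alpha_n(z)\beta_n(z^{-1})-\alpha_n(z^{-1})\beta_n(z).
\]
Here $W$ is the determinant of the matrix with rows $m^{mod}(z)$ and $m^\#(z)$; since both have jump $v^{mod}$ with $\det v^{mod}=1$, $W$ has no jump across $[q^{-1},q]$ and so extends holomorphically across the smooth part of that interval. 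At each $\kappa\in\{q,q^{-1},y,y^{-1}\}$, $W$ is holomorphic in a punctured neighbourhood and, being a product of two $O((z-\kappa)^{-1/4})$ factors, is $O((z-\kappa)^{-1/2})$, so the singularity is removable. Hence $W$ extends to a rational function whose only possible poles are $z=0$ and $z=\infty$. From \eqref{fromG}, near $z=0$ one has $\beta_n(z)=O(z)$ while $\beta_n(z^{-1})=\alpha_{n+1}(z^{-1})G(z)=-\tfrac{\ti a}{2z}\alpha_{n+1}(\infty)\bigl(1+O(z)\bigr)$, so by \eqref{nui},
\[
W(z)=-\frac{\ti a}{2z}\,\alpha_n(0)\alpha_{n+1}(\infty)+O(1)=-\frac{\ti a}{2K_n z}+O(1),\qquad z\to0.
\]
Since $W(z^{-1})=-W(z)$, the remaining pole at $\infty$ is simple, and oddness forces $W(z)=-\tfrac{\ti a}{2K_n}(z^{-1}-z)$. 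Substituting this together with $\rho(z)=\tfrac{2K_n}{\ti a\,(z^{-1}-z)}$ into $\det M^{mod}=-\rho W$ yields $\det M^{mod}(z)\equiv1$, which is \eqref{determ}.

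The step I expect to cost the most care is (4): one must check that in $W$ the would‑be fourth‑root singularities of $\alpha_n$ and $\beta_n$ at $q,q^{-1},y,y^{-1}$ really combine into removable singularities — the clean route being $\det v^{mod}=1$ (which sews up the cut) together with the admissible‑singularity clause of the model RHP — and then correctly pin down the constant from the $z\to0$ expansion via \eqref{fromG} and the normalization $K_n^{-1}=\alpha_n(0)\alpha_{n+1}(\infty)$. Everything else is bookkeeping with the symmetry $z\mapsto z^{-1}$.
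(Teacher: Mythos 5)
Your proposal is correct, and for items (i)--(iii) it follows essentially the paper's own argument: the paper likewise rests on the row decomposition $\tfrac12 m^{mod}(z)\pm\rho(z)m^{\#}(z)$ and on the resulting identity $(1,\,1)M^{mod}(z)=m^{mod}(z)$, i.e.\ \eqref{svyaz}, from which (iii) is read off via Lemma~\ref{lemvec3} (incidentally, your version of \eqref{defPsiti} with the factors $\tfrac12$ is the consistent one; the paper's display omits them). Where you genuinely diverge is (iv). The paper applies a Liouville argument directly to $\det M^{mod}$: it uses \eqref{if}, \eqref{fromG}, \eqref{nui} to get $\det M^{mod}(z)\to 1$ as $z\to 0$, notes the absence of jumps and the evenness \eqref{ivot}, removes the branch-point singularities as being of at most square-root order, excludes a pole at $z=1$ from the equalities $m_1^{mod}(1)=m_2^{mod}(1)$, $m_1^{\#}(1)=m_2^{\#}(1)$ (coming from $A(1)=\tfrac14$, $G(1)=1$, so the bracket in \eqref{if} vanishes there), and finally rules out the remaining possible simple pole at $z=-1$ by the symmetry $\det M^{mod}(z^{-1})=\det M^{mod}(z)$. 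You instead determine the Wronskian $W(z)=\alpha_n(z)\beta_n(z^{-1})-\alpha_n(z^{-1})\beta_n(z)$ completely: no jump since $\det v^{mod}=1$, removable singularities at $q^{\pm1},y^{\pm1}$ from the $O((z-\kappa)^{-1/2})$ bound, simple poles only at $0$ and $\infty$, oddness under $z\mapsto z^{-1}$, and the residue at $0$ fixed by \eqref{fromG} together with $K_n^{-1}=\alpha_n(0)\alpha_{n+1}(\infty)$, giving $W(z)=-\tfrac{\ti a}{2K_n}(z^{-1}-z)$ and hence $\det M^{mod}=-\rho W\equiv 1$ identically. Your route produces the explicit Wronskian identity as a by-product and makes the cancellation of the poles of $\rho$ at $\pm1$ automatic, whereas the paper's route avoids computing $W$ but must treat the points $\pm1$ separately; both arguments rely on the same square-root control at the four branch points, and both are sound.
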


\begin{proof}Items (i) and (ii) follow from Lemmas \ref{lemvec3}, \ref{lemmat}. To prove (iii) observe that
\be\label{defPsiti}
\Psi_2(z^{-1})=m_1^{mod}(z) -\rho(z)m^\#_1(z), \quad
\Psi_1(z^{-1})=m_2^{mod}(z)-\rho(z)m^\#_2(z).
\ee
Thus,
\be\label{svyaz}m^{mod}(z)=\begin{pmatrix}m_1^{mod}(z), &m_2^{mod}(z)\end{pmatrix}= (1,\, 1)M^{mod}(z).\ee
The vector function $m^{mod}(z)$ given by \eqref{deli}, \eqref{defash} and \eqref{uzhe}, does not have singularities at $z=\pm 1$, and it has fourth root singularities at $\{q, q^{-1}, y, y^{-1} \}$ which proves (iii). We emphasize that \eqref{svyaz} provides a connection between the unique solution of the vector model RHP and the matrix model problem solution.

(iv) Evaluating $\det M^{mod}(z)$ as $z\to 0$ by use of \eqref{fromG} and \eqref{nui} we get
\be\det M^{mod}(z)=\rho(z)\left(m^\#_1(z)m_2^{mod}(z) - m^\#_2(z) m_1^{mod}(z)\right)\label{if},\ee
that is,
\begin{align*}\det M^{mod}(z)&=-2\rho(z)\alpha_{n+1}(z^{-1})G(z)\alpha_n(z)+ O(z^2),\\
& =\frac{2K_n z}{\ti a}\left(-\frac{\ti a}{2z}\right)\alpha_n(0)\alpha_{n+1}(\infty)(1 + O(z))\\
&= 1 + O(z), \quad z\to 0.
\end{align*}
By \eqref{mateq}, $\det M^{mod}(z)$ does not have jumps in $\C$ and by \eqref{symmat}, it is an even function,
\be\label{ivot}
\det M^{mod}(z^{-1})=\det M^{mod}(z).
\ee 
Therefore, $\det M^{mod}(\infty)=1$. This function is even and bounded outside of small vicinities of $1$ and $-1$ and may have simple poles at $\pm 1$. The singularities at the points $\{q, q^{-1}, y, y^{-1} \}$ are at most of square root order, and therefore  removable.
Since the Abel map $A(z)$ and $G(z)$ are single valued functions in a vicinity of $1$ with $A(1)=1/4$ and $G(1)=1$, we have
\[m^{mod}_1(1)=m^{mod}_2(1),\quad m^\#_1(1)=m^\#_2(1).
\]
This implies together with \eqref{if} the absence of a singularity at $1$. Hence we have a function 
holomorphic in $\C\setminus \{-1\}$ and bounded at infinity, which has at most a simple pole at the only point $z=-1$ and the additional symmetry \eqref{ivot}. The only function which satisfies these properties is a constant. 
\end{proof}
\begin{corollary} The following equality holds,
\be\label{imp22}\lim_{z\to 0}\rho(z)m_2^\#(z)=-\frac{1}{m_1^{\textrm{mod}}(0)}.\ee
\end{corollary}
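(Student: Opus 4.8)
The statement is an immediate consequence of the $z\to 0$ asymptotics already collected in Lemma~\ref{lemmat} together with the definitions \eqref{nui} and \eqref{defalpha}, so the plan is essentially bookkeeping. First I would expand $\rho(z)$ near the origin: since $\frac{1}{z^{-1}-z}=z+O(z^3)$, formula \eqref{nui} gives
\[
\rho(z)=\frac{2K_n}{\ti a}\,z\,(1+O(z^2)),\qquad z\to 0,\qquad K_n^{-1}=\alpha_n(0)\alpha_{n+1}(\infty).
\]
Second, Lemma~\ref{lemmat} records the expansion
\[
m^\#_2(z)=-\frac{\ti a}{2z}\,\alpha_{n+1}(\infty)\,(1+O(z)),\qquad z\to 0.
\]
Multiplying the two expansions, the factor $\ti a$ cancels, the factor $z$ absorbs the pole $z^{-1}$, and one factor $\alpha_{n+1}(\infty)$ cancels against the one hidden in $K_n^{-1}$, leaving
\[
\rho(z)m^\#_2(z)=-K_n\,\alpha_{n+1}(\infty)\,(1+O(z))=-\frac{1}{\alpha_n(0)}\,(1+O(z)).
\]

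Third, I would identify $\alpha_n(0)$ with $m_1^{mod}(0)$: comparing the model solution \eqref{uzhe} with the abbreviation \eqref{defalpha} shows $m_1^{mod}(z)=\alpha_n(z)$, hence $m_1^{mod}(0)=\alpha_n(0)$, which is positive by the normalization built into the model RH problem, and in particular nonzero. Letting $z\to 0$ in the last display then produces exactly \eqref{imp22}.

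There is essentially no obstacle here; the only point deserving a word is that $\alpha_n(0)$ and $\alpha_{n+1}(\infty)$ are nonzero, so that $K_n$ in \eqref{nui}, and hence $\rho(z)$, is well defined. This is precisely what makes $M^{mod}(z)$ in \eqref{defmmatr} a legitimate matrix solution, and it holds because the theta quotients in \eqref{defalpha} defining $\alpha_n(0)$ and $\alpha_{n+1}(\infty)$ do not vanish for the arguments in question (cf.\ the remark following Lemma~\ref{lemvec3}).
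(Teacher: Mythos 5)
Your proposal is correct, but it follows a different route than the paper. The paper proves the corollary in one line from the determinant identity: by \eqref{if} and \eqref{determ} one has $\rho(z)\bigl(m_1^\#(z)m_2^{mod}(z)-m_2^\#(z)m_1^{mod}(z)\bigr)\equiv 1$, and since $m_1^\#(z)\to 0$ (while $\rho$ and $m_2^{mod}$ stay bounded) the first product vanishes at $z=0$, leaving exactly \eqref{imp22}. You instead compute the limit directly by multiplying the expansions $\rho(z)=\tfrac{2K_n}{\ti a}z(1+O(z^2))$ and $m_2^\#(z)=-\tfrac{\ti a}{2z}\alpha_{n+1}(\infty)(1+O(z))$ from \eqref{nui} and Lemma~\ref{lemmat}, cancelling $\alpha_{n+1}(\infty)$ against $K_n^{-1}=\alpha_n(0)\alpha_{n+1}(\infty)$, and identifying $\alpha_n(0)=m_1^{mod}(0)$ via \eqref{defalpha} and \eqref{uzhe}. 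This is sound, and in fact it reproduces the local $z\to 0$ evaluation that the paper carries out inside the proof of Lemma~\ref{aboutmatrix}(iv) when showing $\det M^{mod}\to 1$; your argument is therefore more self-contained (it never invokes \eqref{determ} or the Liouville-type argument behind it), whereas the paper's version makes transparent that the corollary is nothing but the determinant identity read off at the origin. One small inaccuracy: the nonvanishing of $\alpha_n(0)$ and $\alpha_{n+1}(\infty)$ is better attributed to the normalization and symmetry of the model solution ($\alpha_n(0)=m_1^{mod}(0)>0$, and analogously for the index $n+1$ at $\infty$) than to the remark following Lemma~\ref{lemvec3}, which concerns the vanishing of $m^{mod}_{\pm}(-1)$ and the nonexistence of a bounded invertible matrix solution, not the theta quotients in \eqref{defalpha}.
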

\begin{proof} By \eqref{if} and \eqref{determ} 
\[
\rho(z)\left(m^\#_1(z)m_2^{mod}(z) - m^\#_2(z) m_1^{mod}(z)\right)\equiv 1.\]
But $m^\#_1(z)\to 0$ as $z\to 0$. This proves \eqref{imp22}.\end{proof}
 Let $\mathcal B$ and $\mathcal B^*$ be small symmetric vicinities of the points $y$ and $y^{-1}$. The precise shape of their boundaries will be chosen in the next section. In $\C\setminus (\ol{\mathcal B}\cup \ol{\mathcal B^*})$ introduce the vector function $\nu(z)=m^{(3)}(z) [M^{mod}(z)]^{-1}$. This function satisfies the symmetry condition $\nu(z^{-1})=\nu (z)\si_1$ but the normalization condition is not identified yet. Instead, due  to symmetry of $\nu$ and the properties of $[M^{mod}(z)]^{-1}$ we have
 \begin{lemma}\label{lem0}
 \be\label{00}\nu_2(0)=\nu_1(\infty)=\frac 12\left(\frac{m^{(3)}_1(0)}{m^{\text{mod}}_1(0)} +\frac{m^{\text{mod}}_1(0)}{m^{(3)}_1(0)} \right):=\tau>0.\ee
 \end{lemma}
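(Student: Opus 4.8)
The plan is to evaluate $\nu(z)=m^{(3)}(z)[M^{mod}(z)]^{-1}$ directly at $z=0$. The first identity $\nu_2(0)=\nu_1(\infty)$ is immediate from the symmetry $\nu(z^{-1})=\nu(z)\si_1$, which reads componentwise $\nu_1(z^{-1})=\nu_2(z)$; putting $z=0$ gives $\nu_1(\infty)=\nu_2(0)$. So everything reduces to computing $\nu_2(0)$. Note that $\nu$ is well defined near $z=0$ and $z=\infty$, since $\mathcal B$, $\mathcal B^*$ are small neighbourhoods of $y$, $y^{-1}$; that $m^{(3)}$ is holomorphic at $0$ with $m_1^{(3)}(0)m_2^{(3)}(0)=1$, $m_1^{(3)}(0)>0$ (Theorem~\ref{thm:v3}); and that $M^{mod}$, having its poles only at $\pm1$ and $\det M^{mod}\equiv1$ by \eqref{determ}, is holomorphic and invertible near $0$, hence (via \eqref{symmat}) also near $\infty$.

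Next I would spell out $[M^{mod}(0)]^{-1}$. Writing \eqref{defmmatr} at $z=0$, so that the entries $\Psi_j(z^{-1})$ become $\Psi_j(\infty)$, and inverting the unimodular matrix $\begin{pmatrix}\Psi_1(0)&\Psi_2(0)\\\Psi_2(\infty)&\Psi_1(\infty)\end{pmatrix}$, one gets
\[
[M^{mod}(0)]^{-1}=\begin{pmatrix}\Psi_1(\infty)&-\Psi_2(0)\\-\Psi_2(\infty)&\Psi_1(0)\end{pmatrix}.
\]
Then $\nu(0)=(m_1^{(3)}(0),m_2^{(3)}(0))[M^{mod}(0)]^{-1}$, and reading off the second component,
\[
\nu_2(0)=-m_1^{(3)}(0)\,\Psi_2(0)+m_2^{(3)}(0)\,\Psi_1(0).
\]

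The final step is to insert the values $\Psi_1(0)$ and $\Psi_2(0)$ from \eqref{defPsi}, and here lies the only delicate point: the two summands $\rho(z)m_j^\#(z)$ behave differently at the origin. By \eqref{nui} one has $\rho(z)=O(z)$ as $z\to0$; by Lemma~\ref{lemmat}, $m_1^\#(z)=O(z)$ while $m_2^\#(z)=O(z^{-1})$. Hence $\rho(z)m_1^\#(z)\to0$, giving $\Psi_1(0)=\tfrac12 m_1^{mod}(0)$, whereas $\rho(z)m_2^\#(z)$ tends to the finite nonzero value $-1/m_1^{mod}(0)$ supplied by the Corollary \eqref{imp22}, giving $\Psi_2(0)=\tfrac12 m_2^{mod}(0)-1/m_1^{mod}(0)$. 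Substituting and using the normalizations $m_1^{mod}(0)m_2^{mod}(0)=1$ (model RH problem) and $m_1^{(3)}(0)m_2^{(3)}(0)=1$ to eliminate $m_2^{mod}(0)$ and $m_2^{(3)}(0)$, a short calculation collapses the four terms to
\[
\nu_2(0)=\frac12\Big(\frac{m_1^{(3)}(0)}{m_1^{mod}(0)}+\frac{m_1^{mod}(0)}{m_1^{(3)}(0)}\Big),
\]
which is the claimed formula. Positivity follows at once: since $m_1^{(3)}(0)>0$ and $m_1^{mod}(0)>0$, the right-hand side has the form $\tfrac12(x+x^{-1})$ with $x>0$, hence is $\ge1>0$. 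I do not expect a genuine obstacle; the one thing to get right is the asymmetric treatment of the two limits $\rho(z)m_1^\#(z)$ and $\rho(z)m_2^\#(z)$ at $0$ — it is precisely this asymmetry, combined with $\det M^{mod}\equiv1$, that produces the $x+x^{-1}$ structure rather than a bare constant.
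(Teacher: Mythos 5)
Your proposal is correct and follows essentially the same route as the paper: it evaluates $\nu(0)=m^{(3)}(0)[M^{mod}(0)]^{-1}$ using $\det M^{mod}\equiv 1$, computes $\Psi_1(0)=\tfrac12 m_1^{mod}(0)$ and $\Psi_2(0)=-\tfrac{1}{2m_1^{mod}(0)}$ from \eqref{defPsi}, \eqref{imp22} and the normalizations, and gets the first identity from the symmetry $\nu(z^{-1})=\nu(z)\si_1$, exactly as in the paper's argument. The only difference is that you spell out the asymmetric limits $\rho(z)m_1^\#(z)\to 0$ versus $\rho(z)m_2^\#(z)\to -1/m_1^{mod}(0)$ in more detail than the paper does, which is harmless.
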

 \begin{proof} From \eqref{defPsi}, \eqref{imp22} and the normalization $m_2^{\text{mod}}(0)=[m_1^{\text{mod}}(0)]^{-1}>0$  we observe that 
 \[\Psi_2(0)=\frac 12 m_2^{\text{mod}}(0) -\frac{1}{m_1^{\text{mod}}(0)}=-\frac{1}{2\, m_1^{\text{mod}}(0)},\qquad
 \Psi_1(0) =\frac 12 m_1^{\text{mod}}(0).
 \]
Then \eqref{00} follows from the normalization $m^{(3)}_2(0)=[m^{(3)}_1(0)]^{-1}>0$ and 
\[\nu_2(0)=-\Psi_2(0)m^{(3)}_1(0) + \Psi_1(0)[m^{(3)}_1(0)]^{-1}.
\]
 \end{proof}
\begin{lemma}\label{properr} The  function $\nu(z)$
 does not have singularities in vicinities of the points $q, q^{-1}, 1, -1$.  
\end{lemma}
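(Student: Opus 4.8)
The plan is to handle $q,q^{-1},1,-1$ by the same two‑step scheme: near the point in question I first show that $\nu(z)=m^{(3)}(z)[M^{mod}(z)]^{-1}$ carries \emph{no jump}, so that it is single‑valued analytic on a punctured disk about the point, and then I bound its growth well enough to force the isolated singularity to be removable. Throughout I use $\det M^{mod}(z)\equiv1$ (Lemma~\ref{aboutmatrix}(iv)), so that $[M^{mod}]^{-1}=\mathrm{adj}\,M^{mod}$ has entries which are, up to sign, those of $M^{mod}$; in particular $[M^{mod}]^{-1}$ has no worse singularities than $M^{mod}$, and at a simple pole of $M^{mod}$ with residue $R$ the residue of $[M^{mod}]^{-1}$ is $(\mathrm{tr}\,R)\,\id-R$.

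\emph{The points $q$ and $q^{-1}$.} For the contours fixed in Section~\ref{sec:tomod}, a sufficiently small disk about $q$ meets the jump contour of the RH problem in Theorem~\ref{thm:v3} only along $[q,y]$, where by \eqref{vi30} and \eqref{vmodDef} both $m^{(3)}$ and $M^{mod}$ carry the \emph{same} jump $\I\si_1$; hence $\nu_+=(m^{(3)}_-\I\si_1)(M^{mod}_-\I\si_1)^{-1}=m^{(3)}_-[M^{mod}_-]^{-1}=\nu_-$, and $\nu$ is single‑valued analytic near $q$. By \eqref{sing4}, $m^{(3)}(z)=O\big((z-q)^{-1/4}\big)$; and since the only branch point near $q$ of the factors $H$ entering $\Psi_1,\Psi_2$ (cf.\ \eqref{defash}, \eqref{defalpha}) is $q$ itself, the remaining factors ($G$, $\rho$, the theta quotients) being bounded there, every entry of $M^{mod}$ — hence of $[M^{mod}]^{-1}$ — is $O\big((z-q)^{-1/4}\big)$ too. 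Therefore $\nu(z)=O\big((z-q)^{-1/2}\big)=o\big((z-q)^{-1}\big)$, so $\nu$ has a removable singularity at $q$; by the symmetry $\nu(z^{-1})=\nu(z)\si_1$ the same follows at $q^{-1}$.

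\emph{The points $1$ and $-1$.} Here I expect the real work. Near $1$ no contour of Theorem~\ref{thm:v3} passes (for our choice of contours), so $m^{(3)}$ is analytic there; near $-1$ the only local jump of $m^{(3)}$ is the constant diagonal $D:=\mathrm{diag}(\E^{2\I tB-\I\Delta},\E^{-2\I tB+\I\Delta})$ across the arc of $[\mathfrak r,q^{-1}]$ through $-1$ — which one checks from \eqref{vi30} using $\si_1D^{-1}\si_1=D$ — while $M^{mod}$ carries exactly the same jump $D$ on $[y,y^{-1}]$ and $m^{(3)}$ has finite boundary values at $-1$ by Theorem~\ref{thm:v3}. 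In both cases $\nu$ is then single‑valued analytic in a punctured disk about $\kappa\in\{1,-1\}$ with \emph{at most a simple pole} (bounded $m^{(3)}$ times the simple pole of $[M^{mod}]^{-1}$). To kill the residue I would combine two facts: (i) applying $\nu(z^{-1})=\nu(z)\si_1$ to the Laurent expansion at the fixed point $\kappa$ of $z\mapsto z^{-1}$ (where $z^{-1}-\kappa=-(z-\kappa)+O((z-\kappa)^2)$) gives $\res_{z=\kappa}\nu\cdot(\id+\si_1)=0$, so $\res_{z=\kappa}\nu$ is a multiple of $(1,-1)$; (ii) the residue $R$ of $M^{mod}$ at $\kappa$ is nonzero of rank one ($\det M^{mod}\equiv1$) and satisfies $(1,1)R=0$, since $(1,1)M^{mod}=m^{mod}$ is regular at $\kappa$ by \eqref{svyaz} and Lemma~\ref{aboutmatrix}(iii), whence $(1,-1)R\neq0$. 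Now writing $m^{(3)}=\nu\,M^{mod}$ (on a fixed side of the contour near $-1$): if $\res_{z=\kappa}\nu=\lambda(1,-1)$, the coefficient of $(z-\kappa)^{-2}$ in $\nu\,M^{mod}$ is $\lambda(1,-1)R\neq0$ unless $\lambda=0$; since $m^{(3)}$ is bounded at $\kappa$, necessarily $\lambda=0$, and $\nu$ has no pole at $\pm1$.

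The main obstacle is the behaviour at $z=\pm1$: unlike at $q,q^{-1}$, the inverse $[M^{mod}]^{-1}$ genuinely has a simple pole there, inherited from the poles of the singular matrix model solution at the edges of the right background, so a mere exponent count will not do — one must extract the rank‑one structure of $\res\,M^{mod}$ from $\det M^{mod}\equiv1$ together with the regularity of $m^{mod}=(1,1)M^{mod}$, play it against the symmetry of $\nu$, and bring in the boundedness of $m^{(3)}=\nu M^{mod}$, keeping careful track of the constant diagonal jump $D$ on $[y,y^{-1}]$ when $\kappa=-1$.
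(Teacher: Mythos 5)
Your handling of $q,q^{-1}$ is the same as the paper's (identical jumps cancel, then the $O((z-q^{\pm1})^{-1/4})$ bounds on $m^{(3)}$ and on the entries of $M^{mod}$, hence of $\mathrm{adj}\,M^{mod}$, give $\nu=O((z-q^{\pm1})^{-1/2})$, which is removable). At $z=\pm1$ you genuinely depart from the paper: the paper computes the finite limit of $\nu$ at $z=1$ by exhibiting the cancellation $\rho(z)\big(m_1^{\#}(z)-m_1^{\#}(z^{-1})\big)=O(1)$, and at $z=-1$ it shows that the product $\nu_1\nu_2$, whose only dangerous contribution carries $\rho^2$, has vanishing accompanying factor $\ti f_\pm(-1)=0$ by the boundary--value symmetries, so a double pole is excluded and hence no pole at all. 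Your route -- residue of $\nu$ proportional to $(1,-1)$ from the symmetry at the fixed points of $z\mapsto z^{-1}$, rank--one (one--sided) residue $R$ of $M^{mod}$ with left kernel spanned by $(1,1)$ because $(1,1)M^{mod}=m^{mod}$ is regular, and boundedness of $m^{(3)}=\nu M^{mod}$ forcing $\lambda(1,-1)R=0$ -- is a clean linear--algebra alternative, and you correctly insist on working on a fixed side of the contour at $-1$.

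There is, however, one concrete unjustified step: the assertion that $R$ is \emph{nonzero}. The parenthetical ``$\det M^{mod}\equiv1$'' only gives $\det R=0$, i.e.\ rank $\le 1$, not $R\neq0$. From \eqref{defPsi}--\eqref{defmmatr} the one--sided residue at $\kappa=\pm1$ equals $\res_\kappa\rho$ times the matrix whose first row is $\big(m_1^{\#}(\kappa),m_2^{\#}(\kappa)\big)$ (boundary values at $-1$) and whose second row is its negative; this common vector \emph{can} vanish: at $z=-1$ both boundary values of $m^{\#}$ vanish simultaneously exactly when the theta value entering $\alpha_{n+1}$ vanishes, which happens for exceptional pairs $(n,t)$ -- this is the $n\to n+1$ analogue of the remark after Lemma~\ref{lemvec3}, where $m^{mod}_\pm(-1)=(0,0)$. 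For such $(n,t)$ your claim $(1,-1)R\neq0$ fails and the argument as written stalls. The repair is cheap and should be stated: if $R=0$, then all entries of $M^{mod}$, hence of $[M^{mod}]^{-1}=\mathrm{adj}\,M^{mod}$, are bounded near $\kappa$ and $\nu$ is bounded outright; if $R\neq0$, your rank--one argument applies verbatim. At $z=1$ one can in fact show $R\neq0$ always, since $m_1^{\#}(1)=\alpha_{n+1}(1)G(1)=\alpha_{n+1}(1)\neq0$ because $\theta(\,\cdot\mid 2\tau)$ with purely imaginary period has no real zeros and $2A(1)-\tfrac12=0$ -- but this, too, needs to be said rather than assumed. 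With that dichotomy added, your proof is complete and is an acceptable alternative to the paper's argument.
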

\begin{proof} By \eqref{vmodDef} and \eqref{m3} the vector $\nu(z)$ does not have jumps in small vicinities of 
$q, q^{-1}$ and $1$. By \eqref{sing4}, \eqref{defash}, \eqref{defalpha}, \eqref{nem}, \eqref{defPsi}, \eqref{defmmatr} and \eqref{determ} we conclude that $\nu(z)=O(z- q^{\pm 1})^{-1/2}$, and therefore it has no singularities at $q$ and $q^{-1}$.

At $z=1$, both $m^{(3)}(z)$ and $M^{mod}(z)$ have no jumps. The same is true for $m^{mod}(z)$,  $m^\#(z)$ and $G(z)$, which means that the equalities 
\begin{align*}
G(z^{-1})&=G^{-1}(z),\quad m^{(3)}_{1}(z)=m^{(3)}_{2}(z^{-1}),\\ 
m^{mod}_{1}(z)&=m^{mod}_{2}(z^{-1}),\quad m^\#_{1}(z)=m^\#_{2}(z^{-1}),
\end{align*}
 can be applied in  a vicinity of $z=1$. The differences
 \[ 
 G(z^{-1})-G(z),\  m^{(3)}_{1}(z)-m^{(3)}_{1}(z^{-1}),\ 
m^{mod}_{1}(z)-m^{mod}_{1}(z^{-1}),\ 
m^\#_{1}(z)-m^\#_{1}(z^{-1}),\] are all of order $O(z-1)$ as $z\to 1$.
Thus, from \eqref{defPsi} and \eqref{defPsiti} it follows that
\[ \aligned 
\Psi_2(z) - \Psi_1(z)&=O(z-1) +\rho(z)(m^\#_1(z) - m^\#_1(z^{-1}))\to\Psi_0,\\
\Psi_1(z^{-1}) - \Psi_2(z^{-1})&= O(z-1) -\rho(z)(m^\#_1(z) - m^\#_1(z^{-1}))\to -\Psi_0,\endaligned\quad z\to 1,
\]
where
\[
\Psi_0=\lim_{z\to 1}\frac{K_n}{\ti a(z^{-1} - z)}\left(\beta_n(z) - \beta_n(z^{-1})\right).
\]
On the other hand, $m^{(3)}(z)$ does not have a jump in a vicinity of $z=1$. Therefore, by the symmetry property, $m_1^{(3)}(1)=m_2^{(3)}(1^{-1})=m_2^{(3)}(1)$.
Hence
\[
\aligned\nu(z)&=m^{(3)}(z)\begin{pmatrix} \Psi_1(z^{-1}) & -\Psi_2(z)\\ -\Psi_2(z^{-1}) & \Psi_1(z)\end{pmatrix}\\ &=\begin{pmatrix}m_1^{(3)}(z)\Psi_1(z^{-1}) - m_2^{(3)}(z)\Psi_2(z^{-1}), &  m_2^{(3)}(z)\Psi_1(z) -m_1^{(3)}(z)\Psi_2(z)
\end{pmatrix}\\
& \to \left(\Psi_0 m_1^{(3)}(1)\right) \begin{pmatrix}1, & 1\end{pmatrix}, \quad z\to 1.\endaligned
\]
It remains to investigate the behavior of $\nu(z)$ near $z=-1$. Since $m^{(3)}(z)$ and $M^{mod}(z)$ have the same constant jump $v^{3}(z)=v^{mod}(z)=\E^{(2\I t B - \I\Delta)\si_3}$ in a vicinity of this point, we conclude that $\nu(z)$ does not have jumps here, and therefore $z=-1$ is an isolated singularity, which is at most a simple pole.   From the symmetry condition it follows that both components $\nu_1(z)$ and $\nu_2(z)$ of $\nu(z)$ have the same behavior, either simple poles or removable singularities. To prove that $-1$ is in fact a removable singularity, it suffices to check that
\[\aligned f(z)&=\nu_1(z)\nu_2(z)\\
&=\left(m_1^{(3)}(z)\Psi_1(z^{-1}) - m_2^{(3)}(z)\Psi_2(z^{-1})\right)\left(  m_2^{(3)}(z)\Psi_1(z) -m_1^{(3)}(z)\Psi_2(z)\right)
\endaligned\]
increases not faster than $o((z+1)^{-2})$ from some direction. The behavior of $f(z)$ is determined by the summand which contains $\rho^2(z)$ (cf.\ \eqref{defPsi} and \eqref{defPsiti}). Computing this term we get
\[\aligned 
f(z) & \sim \rho^2(z)\Big([m^\#_2(z)]^2[m_1^{(3)}(z)]^2 + [m^\#_1(z)]^2[m_2^{(3)}(z)]^2\\
& - 2 m^\#_1(z)m^\#_2(z)m_1^{(3)}(z)m_2^{(3)}(z)\Big)=\rho^2(z)\ti f(z).
\endaligned\]
The function $\ti f(z)$ has finite limiting values on the sides of the contour $[\mathfrak r, \mathfrak r^{-1}]$, and in particular at $z=-1$.
Using the symmetry condition we get 
$m^\#_{1,\pm}(-1)=m^\#_{2,\mp}(-1)$, $m_{1,\pm}^{(3)}(-1)= m_{2,\mp}^{(3)}(-1)$, therefore
\[m^\#_{1, \pm}(-1)=m^\#_{2,\pm}\E^{\mp(2\I t B-\I\Delta)}, \quad  m_{1, \pm}^{(3)}(-1)= m_{2,\pm}^{(3)}\E^{\mp(2\I t B-\I\Delta)},
\]
that is, $m^\#_{2,\pm}(-1)m_{1,\pm}^{(3)}(-1)= m^\#_{1,\pm}(-1)m_{2,\pm}^{(3)}(-1)$.
Thus
\[\aligned\ti f_\pm(-1)&=[m^\#_{2,\pm}(-1)]^2[m_{1,\pm}^{(3)}(-1)]^2 + [m^\#_{1,\pm}(-1)]^2[m_{2,\pm}^{(3)}(-1)]^2\\
& - 2m^\#_{1,\pm}(-1)m^\#_{2,\pm}(-1)m_{1,\pm}^{(3)}(-1)m_{2,\pm}^{(3)}(-1)=0.
\endaligned\]
\end{proof}
\begin{remark}\label{remim} 
The jump matrix $v^{(3)}(z)$ given by \eqref{vi30}, \eqref{vi3} satisfies the symmetry 
\be\label{mainv3} v^{(3)}(z)=\si_1 v^{(3)}(z^{-1})\sigma_1\ee on the contour $\Xi$ (cf.\ \eqref{defXi}), while on  $[q, q^{-1}]$ it satisfies   $[v^{(3)}(z)]^{-1}=\si_1 v^{(3)}(z^{-1})\sigma_1$. 
Therefore we reverse the orientation on the part $[-1, q^{-1}]$ such that the property  \eqref{mainv3} is satisfied on the whole jump contour of RHP~\ref{3}.
\end{remark}

\section{Solution of the parametrix RH problems}\label{s:par}
 In this section we solve local RHPs in vicinities of the points $y, y^{-1}$, where the error jump matrix (introduced at the end of Section \ref{sec:tomod}) is not small as $t\to\infty$. Recall that on the contours with $y^{-1}$ as a nodal point we have (with the new orientation on $[q^{-1}, -1]$ as introduced in Remark \ref{remim})
\[ 
v^{(3)}(z)=
\begin{cases}
\I \si_1, & \quad z \in [q^{-1}, y^{-1}],\\
		\E^{(-2 \I tB + \I\Delta)\sigma_3}& 
		\quad z \in [ \mathfrak r^{-1}, -1],\\
	\begin{pmatrix}
		\frac{F_+(z)}{F_-(z)}\E^{t(g_+(z) - g_-(z))} & U(z)\E^{-2t \re g(z)} \\
		0 &  \frac{F_-(z)}{F_+(z)}\E^{t(g_-(z) - g_+(z))}
		\end{pmatrix}, & \quad z \in [  y^{-1}, \mathfrak r^{-1}],\\[3mm]
					\begin{pmatrix}
			1 & 0 \\
			\frac{e^{2tg(z)}}{U_1(z)}  & 1
			\end{pmatrix}, & \quad z \in \mathcal C^*,
				\end{cases}
\]
where we denoted
\be\label{UU1} U(z)=\I |\chi(z)|\Pi^2(z)F_+(z)F_-(z),\qquad U_1(z)=\Pi^2(z)F^2(z)X(z),\ee
and used \eqref{imp33}.
Respectively,
\[v^{err}(z)=v^{(3)}(z) -  v^{mod}(z)=
\begin{cases}
	\begin{pmatrix}
		0 & U(z)\E^{-2t \re g(z)} \\
		0  & 0
		\end{pmatrix}, & \quad z \in [  y^{-1}, \mathfrak r^{-1}],\\[2mm]
					\begin{pmatrix}
			0 & 0 \\
			\frac{e^{2tg(z)}}{U_1(z)}  & 0
			\end{pmatrix}, & \quad z \in \mathcal C^*,
				\end{cases}\]
does not vanish as $t\to\infty$ since $\re g(y^{-1})=0$, $U(y^{-1})U_1(y^{-1})\neq 0$. 
The local (parametrix) RHPs are similar to those of the KdV shock wave 
analysis (see e.g.\ \cite[Sec.\ 7]{EPT}).  
Consider first the point $y^{-1}$. Let $\mathcal B^*=\mathcal B^*(\varepsilon)$ be a neighborhood of $y^{-1}$ such that its boundary contains $\mathfrak r^{-1}$ given by \eqref{est99}.
 To describe the boundary of $\mathcal B^*$, introduce a local change of variables
 \begin{equation}\label{wDef}
w^{3/2}(z) = \frac{3  t}{2}\left(g(z)-g_{\pm}(y^{-1})\right), \quad z \in \mathcal{B}^*,
\end{equation}
with the cut along the interval $J=[q^{-1},y^{-1}]\cap\ol{\mathcal B^*}.$
From \eqref{g} and item (b) of Lemma \ref{propg} we have
\[\aligned
\frac{3}{2}(g(z) - g_{\pm}(y^{-1})) &= \frac{3}{2}\int \limits_{y^{-1}\pm\I 0}^{z}\, P(s) \ti Q(s)\ \sqrt{y^{-1}-s}\,\frac{ds}{2s}\\
& = \frac{2}{ y}P(y)\ti Q(y)(z-y^{-1})^{3/2}\left( 1 + o(1) \right),\quad z\to y^{-1},\endaligned
\]
with $P(s)$ given by \eqref{defPe} and \[\ti Q(s):= \sqrt{\frac{s-y}{(s-q)(s-q^{-1})}}.\] Evidently, $\ti Q(y)>0$. For $\xi\in (\xi_{cr}^\prime, \xi_{cr})$ we have (see \cite{emt14}) $P(s)=s +s^{-1} - \zeta - \zeta^{-1}$, where $\zeta= \zeta(\xi)\in (-1, y)$.
Thus, $P(y^{-1})<0$ and $ \mathcal T:= 2 y^{-1}P(y)\ti Q(y)>0$. 
  Then 
  \be\label{double ve} 
  w(z)= \mathcal T^{2/3}t^{2/3}\,(z-y^{-1})(1 + o(1)),\quad \mbox{as}\ \ z\to y^{-1}, \quad (\mathcal T t)^{2/3}>0.
  \ee 
  Hence $w(z)$ is a holomorphic function in $\mathcal B^*$.
\begin{figure}[ht]
\begin{tikzpicture}[scale=0.8]

\draw (2.5,0) circle (1.5cm);
\draw (1,0) -- (4, 0);
\filldraw (2.5,0) circle (1pt) node[below right]{$y^{-1}$};
\draw[out=120,in=340] (2.5,0) to (1.39,1);
\draw[out=240,in=20] (2.5,0) to (1.39,-1);
\draw (4.2,-1) node {$\partial \mathcal B^*$};
\draw [->] (2.5, 1.5) -- (2.49,1.5);

\draw [->] (3.2, 0) -- (3.3, 0);
\draw [->] (1.6, 0) -- (1.7, 0);
\draw [->] (2.01, 0.65) -- (2, 0.659);
\draw [->] (2, -0.659)-- (2.01, -0.65);

\node at (3.3, 0.3) {$J^\prime$};
\node at (1.7, 0.3) {$J$};
\node at (2.2,0.9) {$\mathcal L_1$};
\node at (2.2,-0.9) {$\mathcal L_2$};

\draw[->] (5,0.5) arc (120:60:20mm);
\draw (6, 1) node {$w$};

\draw (9.5,0) circle (1.8cm);
\filldraw (9.5,0) circle (1pt) node[below right]{$0$};
\draw (7.7,0) -- (11.3,0);
\draw (9.5,0) -- (8.5, 1.5);
\draw (9.5,0) -- (8.5, -1.5);

\draw [->] (8.5, 0) -- (8.6, 0);
\draw [->] (10.3, 0) -- (10.4, 0);
\draw [->] (8.91, 0.889) -- (8.9, 0.9);
\draw [->](8.9, -0.9) -- (9.03, -0.7);
\draw [->] (9.5, 1.8) -- (9.49,1.8);

\draw (11.5,-1) node {$\partial \mathcal O$};

\end{tikzpicture}
\caption{The local change of variables $w(z)$.}
\label{fig:5}
\end{figure}
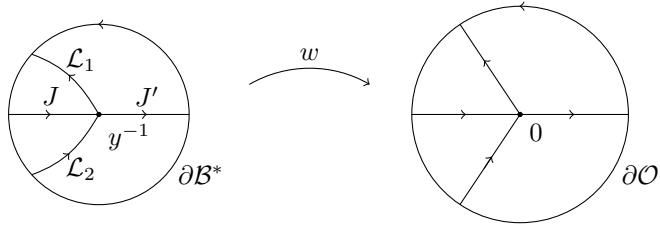

Untill now we did not specify the particular shape of the boundary of $\mathcal B^*$ and the shape of the contour $\mathcal C^*$ inside $\mathcal B^*$. Treating $w(z)$  as a conformal map, let us think of $\mathcal B^*$ as a pre-image of a disc $\mathcal O$ of radius $\mathcal T^{2/3}|y^{-1} - \mathfrak r^{-1}| t^{2/3}$ centered at the origin. Then $w(z)$ maps the interval $J=[q^{-1}, y^{-1}]\cap\ol {\mathcal B^*}$ to the negative half axis and $J^{\prime}=[y^{-1}, \mathfrak r^{-1}]$ to the positive half axis.  We also choose the contour $\mathcal C^*\cap\mathcal B^*$ to be contained in the pre-image of the rays 
$\arg w=\pm\frac{2\pi \I}{3}$ and divide it in two parts $\mathcal L_1$  and $\mathcal L_2$, with orientation as depicted in Fig. \ref{fig:5}. 
With the new orientation,
\be\label{vmodn} v^{mod}(z)=\begin{cases}
\I\sigma_1&\quad z\in J\\
\E^{(-2\I tB + \I \Delta)\sigma_3} &\quad z\in J^\prime\end{cases}.
\ee
In $\mathcal B^*$ we introduce the function
\be\label{defr}
r(z):=\frac{\E^{\mp\frac{\I\pi}{4}}\,\E^{\mp \I t B}}{\sqrt{X(z)}\Pi(z)F(z)}, \quad z\in\mathcal B^*\cap\{z: \pm \im z>0\}.\ee
 Since \be\label{chipm}X_\pm(z)=\mp\I |\chi(z)|,\quad z\in J\cup J^\prime, \ee 
 we see that 
  \[ r_\pm(z)=\frac{\E^{\mp \I t B}}{\sqrt{|\chi(z)|}\Pi (z) F_\pm(z)},\quad  z\in J\cup J^\prime.\]
By items (b) and (c) of Lemma \ref{lemma:F}, taking into account the change of direction for the contour in (c), we get 
\beq\label{imp45} r_+(z)r_-(z)=1, \quad z\in J; \qquad r_+(z)=r_-(z)\E^{\I \Delta -2\I t B},\quad z\in J^\prime.\eeq
Due to \eqref{chipm},  
$\sqrt{X_+(z)X_-(z)}=|\chi(z)|$ for $z\in J^\prime$. By use of \eqref{UU1} and \eqref{wDef}, we have for the off-diagonal 
elements of $v^{err}$ 
\be\label{toto}\aligned &U(z)\E^{-2t\re g(z)}=\I|\chi(z)|\Pi^2(z)F_+(z)F_-(z)\E^{-t(g_-(z) +g_+(z))}\\
&=\I\,\frac{\E^{-t(g_+(z) -g_+(y^{-1}))}\E^{-t(g_-(z) -g_-(y^{-1}))}}{r_+(z) r_-(z)}=\frac{\I\,\E^{-\frac 4 3 w^{3/2}(z)}}{r_+(z) r_-(z)},
\endaligned\ee
\be\label{toto1}\aligned \frac{\E^{2 t g(z)}}{U_1(z)}&= \pm \I\, r^2(z)\E^{\pm 2 \I t B + 2 t g(z)}=\pm \I\, r^2(z)\E^{2t g(z)-g_\pm(y^{-1})}
\\ &= \pm\I r^2(z) \E^{\frac 4 3 w^{3/2}(z)},\quad  \pm \im z>0.\endaligned\ee
We redefine $m^{(3)}(z)$, $m^{mod}(z)$ and the matrix $M^{mod}(z)$ inside $\mathcal B^*$ by 
\[
\widehat m^{(3)}(z)= m^{(3)}(z)r(z)^{-\si_3}, \quad \widehat m^{mod}(z)=m^{mod}(z)r(z)^{-\si_3},
\] 
\be\label{1} \widehat M^{mod}(z)=M^{mod}(z)r(z)^{-\si_3},\quad z\in\mathcal B^*.
\ee
Using \eqref{vmodn}, \eqref{imp45}, \eqref{toto} and \eqref{toto1}, we obtain that inside $\mathcal B^*$ 
\[ \widehat m^{(3)}_+(z)=\widehat m^{(3)}_-(z) \widehat v^{(3)}(z), \quad \widehat M_+^{mod}(z)=\widehat M_-^{mod}(z)\widehat v^{mod}(z),\]
where \[\widehat v^{mod}(z)=\I\sigma_1, \quad z\in J;\qquad \widehat v^{mod}(z)=\id, \quad z\in J^\prime;\]
\[
\widehat v^{(3)}(z)=\begin{cases} \I\si_1, &\quad z\in J,\\
\begin{pmatrix} 1 & \I \E^{-\frac 4 3 w^{3/2}(z)}\\0&1\end{pmatrix}, &\quad z\in J^\prime,\\
\begin{pmatrix} 1 & 0\\ -\I\E^{4/3 w^{3/2}(z)}& 1\end{pmatrix}, &\quad z\in \mathcal L_1,\\
\begin{pmatrix} 1&0\\ \I\E^{4/3 w^{3/2}(z)}& 1\end{pmatrix}, &\quad z\in \mathcal L_2.
\end{cases}
\]
By \eqref{double ve} we conclude that $w^{1/4}(z)$ has the following jump along $J$,
\[w_+^{1/4}(z)=w_-^{1/4}(z) \I,\quad z\in J.\]
Recall that $ \mathcal O=w(\mathcal B^*)$. It is now straightforward to check that  the matrix 
\[N(w)=\frac{1}{\sqrt{2}}\begin{pmatrix} w^{1/4}& w^{1/4}\\  - w^{-1/4} & w^{-1/4}\end{pmatrix}, \quad w\in {\mathcal O},\]
 solves the jump problem 
 \[N_+(w(z))=\I N_-(w(z))\sigma_1,\quad z\in J.\]
Therefore, in  $ \mathcal B^*$ we have $\widehat M^{mod}(z)=  \mathcal H(z) N(w(z))$, where $\mathcal H(z)$ is a holomorphic matrix function in $ \mathcal B^*$. 
Since $\det N(w)=\det [r(z)^{\sigma_3}]=1$, we have
 \beq\label{obr3}\det  \mathcal H(z)=\det M^{\text{mod}}(z)=\det \widehat M^{mod}(z)=1.
 \eeq
According to \eqref{1} we get
\[
M^{mod}(z)= \mathcal H(z)N(w(z))r(z)^{\sigma_3},\quad z\in \pa\mathcal B^*.
\]
By property $\rm{(c)}$ of Lemma \ref{propg},  $w_+(z)^{3/2} =-w_-(z)^{3/2}$ for $z \in J$, that is,
\[\widehat v^{(3)}(z)=d_-(z)^{-\sigma_3} \mathcal S\, d_+(z)^{\sigma_3},\quad z\in \mathcal B^*,
\] 
where
\[
d(z):= \tilde d(w(z)), \quad \tilde d(w)=\E^{ 2/3 w^{3/2}},
\]
and
\[ \mathcal S=\left\{\begin{array}{ll} \mathcal S_1, & \quad z\in\mathcal L_1,\\
\mathcal S_2, & \quad z\in J,\\
\mathcal S_3, & \quad z\in\mathcal L_2,\\
\mathcal S_4, & \quad z\in J^\prime.\end{array}\right.
\]
Here
\[\mathcal S_1=\begin{pmatrix} 1& 0\\ \I &1\end{pmatrix};\quad \mathcal S_2=\begin{pmatrix}0&\I\\ \I & 0\end{pmatrix};\quad \mathcal S_3=\begin{pmatrix} 1& 0\\ -\I &1\end{pmatrix};\quad \mathcal S_4=\begin{pmatrix}1&\I\\ 0& 1\end{pmatrix}.\]
Consider $\mathcal S$ as the jump matrix on the contour $w(J\cup J^\prime\cup \mathcal L_1\cup \mathcal L_2)$
in $\mathcal O$.  Let  $\mathcal A(w)$ be the matrix solution of the jump problem 
\[
\mathcal A_+(w)=\mathcal A_-(w) \mathcal S,\quad w\in w(J\cup J^\prime\cup \mathcal L_1\cup \mathcal L_2),
\]
satisfying the boundary condition
\be\label{constc}
\mathcal A(w)=  N(w)\Psi(w)\tilde d(w)^{-\sigma_3},\quad w\in\partial \mathcal O,\quad t\to\infty,
\ee
where \be\label{ppsi}\Psi(w)=\id +\frac{C}{w^{3/2}}\Big(1 + O\big(w^{-3/2}\big)\Big),\quad w\to \infty,\ee
is an invertible matrix, and
 $C$ is a constant matrix with respect to $w$, $t$ and $\xi$.
The solution $\mathcal A(w)$ can be expressed via  Airy functions and their derivatives in a standard manner (see, for example, \cite{dkmvz}, \cite[Ch.\ 3]{Bleher}, \cite{GGM} or \cite{AELT}). In particular,  in the sector between the contours $w(J^\prime)$ and $w(\mathcal L_1)$ in $\mathcal O$ we have
\[ \mathcal A(w)=\mathcal A_1(w)=\sqrt{2\pi}\begin{pmatrix} -y_1^\prime(w)&\I y_2^\prime(w)\\-y_1(w)& \I y_2(w)\end{pmatrix},\]
where
$y_1(w)=\mathrm{Ai}(w)$ and $y_2(w)=\E^{-\frac{2\pi \I}{3}} \mathrm{Ai}(\E^{-\frac{2\pi \I}{3}} w)$. In the sector between the lines $w(\mathcal L_1)$ and $w(J)$ we get 
\[\mathcal A(w)=\mathcal A_2(w)=\mathcal A_1(w) \mathcal S_1=\sqrt{2 \pi}\begin{pmatrix}y_3^\prime(w) & \I y_2^\prime(w)\\ y_3(w) & \I y_2(w)\end{pmatrix},\]
where 
$y_3(w)=\E^{\frac{2\pi \I}{3}} \mathrm{Ai}(\E^{\frac{2\pi \I}{3}} w)$. Here we used the standard equality
$y_1(w) + y_2(w) + y_3(w)=0$.
Changing orientation on $J$ and $\mathcal L_2$ we obtain between $w(J)$ and $w(\mathcal L_2)$ 
\[
\mathcal A(w)=\mathcal A_3(w)= -\I \mathcal A_2(w)\sigma_1,
\]
and between the lines $w(\mathcal L_2)$ and $w(J^\prime)$, correspondingly,
\[\mathcal A(w)=\mathcal A_4(w)=\mathcal A_3(w) \mathcal S_3^{-1}.\] The last conjugation with the  matrix $S_4$ will lead to matrix $\mathcal A_1(w)$ again, because
$\mathcal S_1\mathcal S_2^{-1}\mathcal S_3^{-1}\mathcal S_4=\id.$ Note that the constant matrix $C$ in \eqref{constc} is the same for all regions, 
\[ C=\frac{1}{48}\begin{pmatrix} -1& 6\\ -6& 1\end{pmatrix}.\]
The precise formulas for $\mathcal A_j(w)$  are in fact not important for us. 
The matrix  
\[M^{par}(z):=  \mathcal H(z)\mathcal A(w(z))d(z)^{\sigma_3},\quad z\in\mathcal B^*,
\] 
solves in $\mathcal B^*$ the jump problem
\[ M_+^{par}(z)= M_-^{par}(z)\widehat v^{(3)}(z),\quad z\in J\cup J^\prime\cup \mathcal L_1\cup\mathcal L_2,\]
and satisfies for sufficiently large $t$ the boundary condition 
\[\aligned M^{par}(z)&=\mathcal  H(z) N(w(z))\Psi(w(z))=\widehat M^{mod}(z)\Psi(w(z))\\ &
=M^{mod}(z)r(z)^{-\si_3} \Psi(w(z)),\quad z\in\pa\mathcal B^*.\endaligned\]
Note that \eqref{ppsi} and \eqref{wDef} yield
\[\det \Psi(w(z))=1 + O(t^{-1}), \quad z\in \mathcal B^*,\quad t\to\infty,
\]
uniformly with  respect to $\xi\in \mathcal I_\varepsilon$.  This implies with \eqref{obr3} invertibility of $M^{par}(z)$ in $\ol{\mathcal B^*}$.
In summary,  we constructed a  matrix with the following properties:
\begin{lemma}\label{lempar} 
The vector function 
\be\label{nusa}\nu(z)=\widehat  m^{(3)}(z)M^{par}(z)^{-1}=m^{(3)}(z)r(z)^{-\si_3}M^{par}(z)^{-1},\quad z\in \mathcal B^*,\ee
does not have jumps and isolated singularities in $\mathcal B^*$, it is holomorphic there. The function $\nu(z)$ has piecewise continuous limiting values as $z$ approaches $\pa\mathcal B^*$ from inside, given by 
\be\label{main67}
\nu(z)=m^{(3)}(z)r(z)^{-\si_3} \Psi(w(z))^{-1} r(z)^{\si_3} M^{mod}(z)^{-1},\quad z\in \pa\mathcal B^*.
\ee
\end{lemma}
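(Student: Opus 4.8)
The plan is to establish the three assertions — absence of jumps in $\mathcal B^*$, absence of an isolated singularity at $y^{-1}$, and the boundary formula \eqref{main67} — using only the properties of $m^{(3)}$, $M^{par}$ and $M^{mod}$ already at hand. For the jumps: the only jump contours meeting $\mathcal B^*$ are $J$, $J^\prime$, $\mathcal L_1$ and $\mathcal L_2$, and across each of them $\widehat m^{(3)}$ and $M^{par}=\mathcal H\,\mathcal A(w(z))\,d(z)^{\sigma_3}$ carry the same jump $\widehat v^{(3)}$ — for $\widehat m^{(3)}$ this was read off from \eqref{toto}--\eqref{toto1}, and for $M^{par}$ it was built into $\mathcal A$ via the jump matrix $\mathcal S$. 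Hence on each piece
\[
\nu_+=\widehat m^{(3)}_+(M^{par}_+)^{-1}=\widehat m^{(3)}_-\,\widehat v^{(3)}(\widehat v^{(3)})^{-1}(M^{par}_-)^{-1}=\nu_-,
\]
so $\nu$ extends across all of them; since $M^{par}$ is invertible on $\overline{\mathcal B^*}$, $\nu$ is a holomorphic row vector on $\mathcal B^*\setminus\{y^{-1}\}$.

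Next I would show $\nu$ has no singularity at $y^{-1}$, the only point of $\mathcal B^*$ where one could occur, by proving it stays bounded and invoking Riemann's removable singularity theorem together with the jump-freeness just established. The factor $\widehat m^{(3)}=m^{(3)}r^{-\sigma_3}$ is bounded at $y^{-1}$: by \eqref{sing4} one has $m^{(3)}(z)=O(1)$ there, and $r(z)^{-\sigma_3}$ is a bounded invertible diagonal matrix at $y^{-1}$ since $r(y^{-1})$ is finite and nonzero — which holds because $\Pi(y^{-1})$ and $X(y^{-1})$ are finite and nonzero and $F(y^{-1})$ is finite and nonzero, the last because $\mathcal S(z)p(z)$ stays bounded at $y^{-1}$, the square-root zero of $\mathcal S$ in \eqref{w(z)} cancelling the inverse-square-root singularity of the Cauchy integral $p$ in \eqref{p(z)}, so that $\mathcal F=\E^{\mathcal S p}$ is finite and nonzero. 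The factor $(M^{par})^{-1}=d(z)^{-\sigma_3}\mathcal A(w(z))^{-1}\mathcal H(z)^{-1}$ is also bounded at $y^{-1}$: $\mathcal H$ is holomorphic and invertible in $\mathcal B^*$, $w(z)$ is holomorphic at $y^{-1}$, $d(z)^{-\sigma_3}=\E^{\mp\frac23w^{3/2}(z)\sigma_3}\to\id$, each sectoral value of $\mathcal A$ is built from entire Airy functions and derivatives and so is bounded as $w\to0$, and $\det M^{par}$ is a nonzero constant, so $(M^{par})^{-1}$ stays bounded. Therefore $\nu=\widehat m^{(3)}(M^{par})^{-1}$ is bounded at $y^{-1}$, hence extends holomorphically, and $\nu$ is holomorphic throughout $\mathcal B^*$.

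Finally, on $\partial\mathcal B^*$ one already has $M^{par}(z)=M^{mod}(z)\,r(z)^{-\sigma_3}\,\Psi(w(z))$ with all three factors invertible, so $(M^{par}(z))^{-1}=\Psi(w(z))^{-1}r(z)^{\sigma_3}(M^{mod}(z))^{-1}$, and substituting into $\nu=m^{(3)}r^{-\sigma_3}(M^{par})^{-1}$ gives exactly \eqref{main67}; the limits from inside are piecewise continuous because $m^{(3)}$ is bounded with piecewise continuous limits on $\partial\mathcal B^*$ (which passes through $\mathfrak r^{-1}$, where $m^{(3)}=O(1)$ by \eqref{sing4}) and $r$, $\Psi\circ w$, $M^{mod}$ are as well. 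I expect the main obstacle to be the removable-singularity step at $y^{-1}$: it requires simultaneously that $m^{(3)}$ be merely $O(1)$ there (in contrast to the fourth-root growth at $q^{\pm1}$), that the conjugation by $r^{-\sigma_3}$ introduce no singularity — which hinges on the cancellation inside $F$ of the zero of $\mathcal S$ against the pole of $p$ — and that the piecewise-defined Airy matrix $\mathcal A(w(z))$ and its inverse stay bounded as $z\to y^{-1}$, so that precisely the jump contributions cancelled against $\widehat m^{(3)}$ remain. Everything else is bookkeeping within the framework already constructed.
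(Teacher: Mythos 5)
Your proposal is correct and follows essentially the same route as the paper, where Lemma \ref{lempar} is stated as a summary of the preceding construction: the jump of $\widehat m^{(3)}$ inside $\mathcal B^*$ is matched by construction of $M^{par}=\mathcal H\,\mathcal A(w)\,d^{\sigma_3}$, invertibility of $M^{par}$ on $\overline{\mathcal B^*}$ gives holomorphy off $y^{-1}$, and the boundary identity $M^{par}=M^{mod}r^{-\sigma_3}\Psi(w)$ yields \eqref{main67}. Your explicit removable-singularity argument at $y^{-1}$ (boundedness of $m^{(3)}$ by \eqref{sing4}, of $r^{\pm\sigma_3}$ via the cancellation of the zero of $\mathcal S$ against the singularity of $p$ in $F$, and of $\mathcal A(w)^{\pm1}$ as $w\to0$) only spells out what the paper leaves implicit.
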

Let   $\mathcal B:=\{z: z^{-1}\in \mathcal B^*\}$. Define $\nu(z)$ in $\mathcal B^*$ by the symmetry 
$\nu(z)=\nu(z^{-1})\sigma_1$, $z\in \mathcal B$. With  this extension,  $\nu(z)$ is holomorphic in $\mathcal B$.
Let us extend the definition of $\nu(z)$ to $\C \setminus (\ol{\mathcal B^*}\cup
\ol{\mathcal B})$ by 
\be\label{defnui} \nu(z)=m^{(3)}(z)M^{mod}(z)^{-1},\quad  z\in\C\setminus (\ol{\mathcal B^*}\cup
\ol{\mathcal B}).\ee

Theorem \ref{properr}  implies that this function does have jumps on the contour
$[q, -1]\cup [q^{-1},-1]$ outside of $\ol{\mathcal B^*}\cup
    \ol{\mathcal B}$. We label the parts of $\mathcal C$ and $\mathcal C^*$ outside $\mathcal B$ and $\mathcal B^*$ 
by $\mathcal C_{\mathcal B}$ and $\mathcal C_{\mathcal B}^*$, see Fig.~\ref{fig:Omega}. The jumps of  $\nu(z)$ on 
$\Gamma\cup\mathcal C_{\mathcal B}\cup\mathcal C_{\mathcal B}^*$ (cf.\ \eqref{defK})  are exponentially small with respect to $t\to \infty$. Let us compute the jump of this vector on the boundaries $\pa\mathcal B$ and $\pa\mathcal B^*$, which we treat as clockwise oriented contours. Since
neither $m^{(3)}(z)$, $r(z)=r^{-1}(z^{-1})$ nor $M^{mod}(z)$ have jumps on these contours, we obtain  from \eqref{main67} and \eqref{defnui}
\[m^{(3)}(z)=\nu_-(z) M^{mod}(z) r(z)^{-\si_3}\Psi(w(z)) r(z)^{\si_3}=\nu_+(z) M^{mod}(z), \quad z\in \pa\mathcal B^*.\]
Taking into account \eqref{ppsi} we find the jump
\[ \nu_+(z)=\nu_-(z) (\id + W(z)), \quad z\in \pa\mathcal B^*\cup\pa\mathcal B,\]  
where 
\be\label{defW}
\aligned  W(z)&= M^{mod}(z) r(z)^{-\si_3}\big(\Psi(w(z))-\id\big) r(z)^{\si_3}M^{mod}(z)^{-1}, \quad z\in \pa\mathcal B^*;\\
 W(z)&=\si_1 W(z^{-1})\si_1, \qquad z\in \pa\mathcal B.
 \endaligned
 \ee
 The jump contour 
\[
 \mathcal K:= \Gamma\cup\mathcal C_{\mathcal B}\cup\mathcal C_{\mathcal B}^*\cup\pa \mathcal B\cup\pa\mathcal B^*,
 \] 
 for the RHP associated with the error vector $\nu(z)$ is depicted in Fig.\ \ref{fig:finalcontour}.
\vskip 3 mm 

\begin{figure}[ht] 
\begin{tikzpicture}
\clip (-3.5,-1.7) rectangle (8.6,1.7);
\draw[black!30, densely dotted] (4.8,0) circle (3.5cm);
\draw (4.8,0) circle (3.35cm);
\draw (4.8,0) circle (3.67cm);
\draw [->, thin] (8.15, 0.02) -- (8.15,0.03);
\draw [->, thin] (8.47, 0) -- (8.47,-0.01);


\draw[out=60, in=90] (3.3,0) to (4.5,0);
\draw[out=300, in=270] (3.3,0) to (4.5,0);
\draw [->, thin] (4.01, 0.33) -- (4,0.33);

\draw[black!30, densely dotted] (2.1,0) -- (4.3,0);
\draw[black!30, densely dotted] (-2.7,0) -- (-0.1,0);

\draw[out=120, in=90] (2.9,0) to (1.9,0);
\draw[out=240, in=270] (2.9,0) to (1.9,0);
\draw [->, thin] (2.3, 0.275) -- (2.29,0.275);
\draw[out=60, in=90] (-1.1,0) to (0.2,0);
\draw[out=300, in=270] (-1.1,0) to (0.2,0);
\draw [->, thin] (-0.29, 0.35) -- (-0.3,0.35);

\draw[thin] (4.7,0) circle (0.10cm);
\filldraw (4.7,0) circle (0.5pt) node[above]{\small{${\T_j}$}};
\draw [->, thin] (4.68, 0.1) -- (4.67,0.1);
\draw[thin] (1.7,0) circle (0.10cm);
\filldraw (1.7,0) circle (0.5pt) node[above]{\small{${\T_k}$}};
\draw [->, thin] (1.68, 0.1) -- (1.67,0.1);
\draw (0.7,0) ellipse (0.12cm and 0.10cm); 
\filldraw (0.7,0) circle (0.5pt) node[above]{\small{${\T_k^*}$}};
\draw [->, thin] (0.68, 0.1) -- (0.67,0.1);
\draw (-3.2,0) ellipse (0.12cm and 0.10cm); 
\filldraw (-3.2,0) circle (0.5pt) node[above]{\small{${\T_j^*}$}};
\draw [->, thin] (-3.22, 0.1) -- (-3.23,0.1);

\draw (-1.6,0) circle (0.5cm);
\draw (3.3,0) ellipse (0.4cm and 0.5cm);
\draw [->, thin] (-1.58, 0.5) -- (-1.57,0.5);
\draw [->, thin] (3.32, 0.5) -- (3.33,0.5);

\filldraw (-2.7,0) circle (1pt); 
\node at (-2.5,-0.14) {\tiny{$q^{-1}$}};
\filldraw(-1.6,0) circle (1pt); 
\node at (-1.45,-0.14) {\tiny{$y^{-1}$}};
\filldraw (-1.1,0) circle (1pt); 
\node at (-0.98,-0.37) {\tiny{$\mathfrak r^{-1}$}};
\filldraw (-0.1,0) circle (1pt);
\node at (-0.07, -0.14) {\tiny{$q_1^{-1}$}};
\filldraw (1.3,0) circle (1pt) node[below]{\tiny{$-1$}};
\filldraw (2.1,0) circle (1pt); 
\node at (2.15,-0.14) {\tiny{$q_1$}};
\filldraw (2.9,0) circle (1pt); 
\node at (2.84,-0.2) {\tiny{$\mathfrak r$}};
\filldraw (3.3,0) circle (1pt) node[below]{\tiny{$y$}};
\filldraw (4.3,0) circle (1pt); 
\node at (4.3,-0.14) {\tiny{$q$}};
\filldraw (8.3,0) circle (1pt) node[below]{\tiny{${1}$}};
\node at (-2.4,0.6) {$\mathcal \mathcal C_{\mathcal B}^*$};
\draw [->, thin] (-2.31, 0.35) -- (-2.32,0.35);
\node at (-1.5,0.8) {$\pa \mathcal B^*$};
\node at (-0.5,0.6) {$\mathcal C_{\mathfrak r}^*$};
\node at (2.3,0.5) {$\mathcal C_{\mathfrak r}$};
\node at (3.3,0.8) {$\pa \mathcal B$};
\node at (4,0.6) {$\mathcal C_{\mathcal B}$};
\node at (1.9,-0.9) {$\mathcal C_\epsilon$};
\node at (1,-1) {$\mathcal C_\epsilon^*$};

\draw[out=120, in=90] (-1.6,0) to (-2.9,0); 
\draw[out=240, in=270] (-1.6,0) to (-2.9,0);
\path[clip, draw] (-1.6,0) circle (0.5cm) (3.3,0) ellipse (0.4cm and 0.5cm);
\path[fill=white] (-1.6, 0) circle (0.5cm) (3.3,0) ellipse (0.4cm and 0.5cm);
\draw[black!30, densely dotted] (-2.7,0) -- (-0.1,0); 
\draw[black!30, densely dotted] (2.1,0) -- (4.3,0);
\filldraw (-1.1,0) circle (1pt); 
\filldraw(-1.6,0) circle (1pt);
\filldraw (2.9,0) circle (1pt);
\node at (-1.45,-0.14) {\tiny{$y^{-1}$}};
\filldraw (3.3,0) circle (1pt) node[below]{\tiny{$y$}};

\path[clip, draw] (-1.6,0) circle (0.5cm) (3.3,0) ellipse (0.4cm and 0.5cm) ;
                  \draw[out=120, in=90, densely dotted] (-1.6,0) to (-2.9,0); 
		\draw[out=240, in=270, densely dotted] (-1.6,0) to (-2.9,0);
		\draw[out=60, in=90, densely dotted] (3.3,0) to (4.5,0);
		\draw[out=300, in=270, densely dotted] (3.3,0) to (4.5,0);
\draw [->, thin] (-1.58, 0.5) -- (-1.57,0.5);
\draw [->, thin] (3.32, 0.5) -- (3.33,0.5);		
\end{tikzpicture}
\caption{$\mathcal K = \bigcup_{j=1}^N(\T_j \cup \T_j^*) \cup \mathcal C_\epsilon \cup \mathcal C_\epsilon^* \cup \mathcal C_{\mathfrak r} \cup \mathcal C_{\mathfrak r}^* 
\cup \mathcal C_{\mathcal B}\cup\mathcal C_{\mathcal B}^*\cup \pa \mathcal B\cup\pa\mathcal B^*$.} \label{fig:finalcontour}
\end{figure}
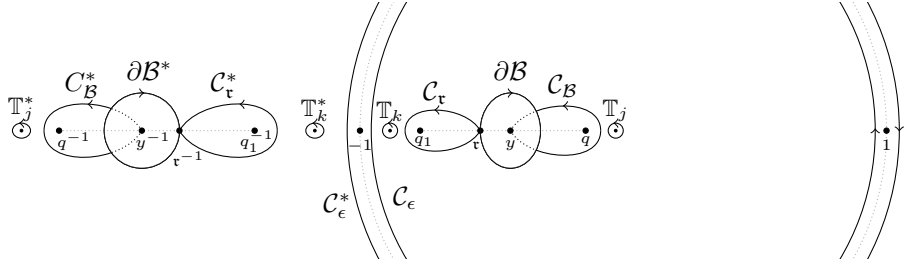

Merging the results of this  section with Lemmas \ref{properr} and \ref{lem0} we have proven 
 \begin{theorem}\label{lemnu}
 The vector $\nu(z)$ is a holomorphic  function in the domain  $\C\setminus \mathcal K$
 and bounded on the closure of this domain.  On the contour $\mathcal K$,  $\nu(z)$ has the jump 
 \be\label{merr}\nu_+(z)=\nu_-(z) (\id + W(z)),
 \ee 
 where $W(z)$ is given by \eqref{defW} on $\pa\mathcal B\cup\pa\mathcal B^*$ and 
 \be \label{we}
 W(z)=M^{mod}_-(z) \big(v^{(3)}(z)-\id\big)M^{mod}_+(z)^{-1},\quad z\in \Gamma\cup\mathcal C_{\mathcal B}\cup\mathcal C_{\mathcal B}^*=\mathcal K\setminus (\pa\mathcal B\cup\pa\mathcal B^*).
 \ee
 On $\mathcal K$, $W(z)$ has the symmetry \be\label{symW}W(z^{-1})=\si_1 W(z)\si_1,\quad z\in \mathcal K.\ee
 The vector $\nu(z)$ satisfies \be\label{symer}\nu(z^{-1})=\nu(z)\si_1.\ee Moreover, 
 \be\label{symnu}\nu_-(z^{-1})=\nu_-(z)\si_1, \quad z\in \mathcal K.\ee
 In addition,
 \[
 \nu_2(0)=\nu_1(\infty)=\tau>0.
 \]
 \end{theorem}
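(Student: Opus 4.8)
The plan is to assemble Theorem~\ref{lemnu} from the pieces already proved, treating each assertion in turn. First I would note that by construction $\nu(z)$ has been defined in three overlapping ways --- by \eqref{defnui} on $\C\setminus(\ol{\mathcal B^*}\cup\ol{\mathcal B})$, by Lemma~\ref{lempar} inside $\mathcal B^*$, and by the symmetry extension $\nu(z)=\nu(z^{-1})\si_1$ inside $\mathcal B$ --- so holomorphy away from $\mathcal K$ follows by patching: inside $\mathcal B^*$ Lemma~\ref{lempar} gives holomorphy; inside $\mathcal B$ it follows since $z\mapsto z^{-1}$ is biholomorphic there and $\si_1$ is constant; outside both neighbourhoods, $m^{(3)}(z)$ is holomorphic away from its jump contour and $M^{mod}(z)^{-1}$ exists as an honest matrix by \eqref{determ}, while Lemma~\ref{properr} removes the would-be singularities at $q,q^{-1},\pm1$. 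Boundedness on the closure then comes from the same source: the at-most-quarter-root singularities of $m^{(3)}$ at $\{q,q^{-1},y,y^{-1}\}$ (see \eqref{sing4}) cancel against the corresponding behaviour of $M^{mod}$ (item (iii) of Lemma~\ref{aboutmatrix}), and near $y^{\pm1}$ the parametrix replaces $M^{mod}$ so that $\nu$ is simply holomorphic.

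Next I would verify the jump \eqref{merr}. On $\Gamma\cup\mathcal C_{\mathcal B}\cup\mathcal C_{\mathcal B}^*$ one computes directly from \eqref{defnui}: since $M^{mod}$ has no jump there, $\nu_+=m^{(3)}_+(M^{mod})^{-1}=m^{(3)}_-v^{(3)}(M^{mod})^{-1}=\nu_-M^{mod}_-v^{(3)}(M^{mod}_+)^{-1}$, which gives \eqref{we} with $W=M^{mod}_-(v^{(3)}-\id)(M^{mod}_+)^{-1}$; here I use that on $\Xi$ the jump $v^{mod}=\id$ so $M^{mod}_-=M^{mod}_+$ and $W=M^{mod}(v^{(3)}-\id)(M^{mod})^{-1}$, consistent with the stated formula. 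On $\pa\mathcal B^*$ the jump was already identified in the derivation preceding the theorem: matching the inside value \eqref{main67} of $\nu$ with the outside value \eqref{defnui} and inserting \eqref{ppsi} yields $W(z)=M^{mod}(z)r(z)^{-\si_3}(\Psi(w(z))-\id)r(z)^{\si_3}M^{mod}(z)^{-1}$ as in \eqref{defW}; on $\pa\mathcal B$ it is defined by the symmetry $W(z)=\si_1W(z^{-1})\si_1$, so nothing further is needed there.

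The symmetry statements \eqref{symW}--\eqref{symnu} I would handle as follows. For $W$: on $\pa\mathcal B\cup\pa\mathcal B^*$ the relation \eqref{symW} is the definition. On $\Gamma\cup\mathcal C_{\mathcal B}\cup\mathcal C_{\mathcal B}^*$ it follows from \eqref{we} together with the symmetry \eqref{mainv3} of $v^{(3)}$ (with the reversed orientation on $[-1,q^{-1}]$ arranged in Remark~\ref{remim}) and the symmetry \eqref{symmat} of $M^{mod}$: conjugating $v^{(3)}(z^{-1})=\si_1v^{(3)}(z)\si_1$ by $M^{mod}$ and using $M^{mod}(z^{-1})=\si_1M^{mod}(z)\si_1$ produces exactly $W(z^{-1})=\si_1W(z)\si_1$. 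For $\nu$: the global identity \eqref{symer} holds because each of the three defining formulas respects it --- \eqref{defnui} inherits it from $m^{(3)}(z^{-1})=m^{(3)}(z)\si_1$ and \eqref{symmat}; inside $\mathcal B$ it is the definition; inside $\mathcal B^*$ it follows from the symmetric construction of $M^{par}$ and $r(z^{-1})=r(z)^{-1}$ after noting $\si_1(r^{-\si_3})\si_1=r^{\si_3}$ so the various $r$-conjugations transform correctly. The boundary-value version \eqref{symnu} follows from \eqref{symer} by taking limits, using that $\mathcal K$ is a symmetric contour and the $-$ side maps to the $-$ side under $z\mapsto z^{-1}$ with the chosen orientations. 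Finally, the normalization $\nu_2(0)=\nu_1(\infty)=\tau>0$ is precisely Lemma~\ref{lem0}, which applies since near $0$ and $\infty$ we are outside $\mathcal B\cup\mathcal B^*$ and $\nu$ is given by \eqref{defnui}.

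The main obstacle, and the one point where I would be careful rather than terse, is the bookkeeping at $z=\pm1$ and at the node points in combination with the orientation reversals: one must make sure that the removal of singularities proved in Lemma~\ref{properr} for the plain vector $m^{(3)}(M^{mod})^{-1}$ is not disturbed by the parametrix modification near $y^{-1}$ (it is not, since $\pm1\notin\ol{\mathcal B^*}\cup\ol{\mathcal B}$ by our choice of $\epsilon$ in \eqref{est88}--\eqref{est94}), and that the symmetry relation \eqref{symW} is stated with respect to the contour orientation fixed in Remark~\ref{remim}, not the original one. Everything else is a direct reassembly of Lemmas~\ref{aboutmatrix}, \ref{lem0}, \ref{properr}, \ref{lempar} and the computation carried out just before the theorem.
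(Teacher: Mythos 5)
Your proposal is correct and follows essentially the same route as the paper, which itself obtains Theorem \ref{lemnu} by merging the jump computation carried out just before its statement with Lemmas \ref{lempar}, \ref{properr} and \ref{lem0} and the symmetry properties \eqref{mainv3}, \eqref{symmat}. The extra care you flag about orientation (Remark \ref{remim}) and about $\pm1\notin\ol{\mathcal B}\cup\ol{\mathcal B^*}$ matches the paper's implicit bookkeeping.
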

We also observe the following estimate. By definition, 
\[\min\textrm{dist}\, \left(y^{-1}, \pa \mathcal B^*\right)> C(\varepsilon) >0, \ \mbox{uniformly with respect to}\ \xi\in \mathcal I_\varepsilon.\] 
Respectively,
\be\label{liu}\min_{z\in \pa \mathcal B*}\frac{1}{|g(z)-g_\pm(y^{-1})|}>C_1(\varepsilon)>0,\ \mbox{uniformly with respect to}\ \xi\in \mathcal I_\varepsilon.\ee
Moreover, the matrix functions $M^{mod}(z)$, $[M^{mod}(z)]^{-1}$, $r(z)$, $r^{-1}(z)$ are bounded uniformly with respect to $z\in\mathcal B\cup\mathcal B^*$ and $\xi\in\mathcal I_\varepsilon$. From \eqref{ppsi}, 
\eqref{defr} and \eqref{defW} we conclude that
\be\label{imp44}
\sup_{\xi\in\mathcal I_\varepsilon}\sup_{z\in \pa\mathcal B\cup\pa\mathcal B^*}\|W(z)\|\leq\frac{C_2(\varepsilon)}{t}.\ee
On the other hand, \eqref{liu},  \eqref{we}, \eqref{vi3} and \eqref{estforv2} imply
\be\label{imp55}
\sup_{\xi\in\mathcal I_\varepsilon}\sup_{z\in\mathcal K\setminus (\pa\mathcal B\cup\pa\mathcal B^*)}\|W(z)\|\leq O\Big(\E^{-C_3(\varepsilon)t)}\Big).
\ee

\section{Completion of the asymptotic analysis}\label{s:completion}
The aim of this section is to establish that the solution $m^{(3)}(z)$  is well approximated by   $m^{mod}(z)=\rI M^{\mathrm{mod}}(z)$ as $z\to 0$.
We follow the well-known approach via singular integral equations (see e.g., \cite{dz}, \cite{GT}, \cite[Ch.\ 4]{its}, \cite{len}). 
A peculiarity of this approach applied to the Toda equation is generated by the type of normalization condition of the vector RHP and the symmetry condition.  
In particular, if we want to preserve the symmetry condition \eqref{symer} in the Cauchy-type formula for 
$\nu(z)$, we should use a matrix Cauchy kernel (cf.\ \cite[Equ.\ (B.8)]{KTb},
\[
\hat \Omega(s,z)=\begin{pmatrix}\frac{1}{s-z} & 0\\ 0 & \frac{1}{s-z} -\frac{1}{s}\end{pmatrix}ds, 
\quad s\in \mathcal K, \quad z\notin \mathcal K.\]
Since (\cite[Equ.\ (B.9)]{KTb})
\[\hat\Omega(s, z^{-1})=\si_1\hat \Omega(s^{-1},z)\si_1,
\]
this implies with \eqref{symnu} and \eqref{symW} the symmetry
\[\int_{\mathcal K}\nu_-(s)W(s)\hat\Omega(s,z)= \int_{\mathcal K}\nu_-(s)W(s)\hat \Omega(s,z^{-1})\si_1.
\]
Note that the $1,1$-entry of the Cauchy kernel $\hat\Omega(s,z)$ has zero at $z=\infty$ while the $2,2$-entry has zero at $z=0$. From \eqref{merr} and \eqref{00} it follows that
\[\aligned 
\nu(z)&=\begin{pmatrix}\nu_1(\infty),& \nu_2(0)\end{pmatrix} + \frac{1}{2\pi\I}\int_{\mathcal K}\nu_-(s)W(s)\hat\Omega(s,z)\\ &=\tau \begin{pmatrix}1, & 1\end{pmatrix} + \frac{1}{2\pi\I}\int_{\mathcal K}\nu_-(s)W(s)\hat\Omega(s,z).\endaligned\]

Let $\mathfrak C$ denote the Cauchy operator associated with $\mathcal K$,
\[
(\mathfrak C h)(z)=\frac{1}{2\pi\I}\int_{\mathcal K}h(s)\hat\Omega(s,z), \qquad s\in\C\setminus\mathcal K,
\]
where $h= \begin{pmatrix} h_1, & h_2 \end{pmatrix}\in L^2(\mathcal K)$ and satisfies the symmetry $h(s)=h(s^{-1})\si_1$.  
Let  $(\mathfrak C_+ h)(z)$ and $(\mathfrak C_- h)(z)$ be the non-tangential limiting values of $ (\mathfrak C h)(z)$ from
 the left and right sides of $\mathcal K$, respectively.
As usual, we introduce the operator $\mathfrak C_{W}:L^2(\mathcal K)\cap L^\infty(\mathcal K)\to
L^2(\mathcal K)$ by  $\mathfrak C_{W} h=\mathfrak C_-(h W)$. 
By virtue of \eqref{imp44} and \eqref{imp55} we obtain
\[
\|\mathfrak C_{W}\|=\|\mathfrak C_{W}\|_{L^2(\mathcal K)\to L^2(\mathcal K)}\leq C\|W\|_{L^\infty(\mathcal K)}= O\Big(\frac{1}{t}\Big)
\] 
as well as
\[ 
\|(\id - \mathfrak C_{W})^{-1}\|=\|(\id - \mathfrak C_{W})^{-1}\|_{L^2(\mathcal K)\to L^2(\mathcal K)}\leq \frac{1}{1-O(t^{-1})}
\]
for sufficiently large $t$. Consequently, for $t\gg 1$, on $\mathcal K$ we define a vector function
\[
\mu(s) =(\tau, \ \tau) + (\id - \mathfrak C_{W})^{-1}\mathfrak C_{W}\big((\tau, \ \tau)\big)(s),
\]
with $\tau$ given  by \eqref{00}.
Then 
\begin{align}\nn
\|\mu(s) - (\tau, \ \tau)\|_{L^2(\mathcal K)} &\leq \|(\id - \mathfrak C_{W})^{-1}\| \|\mathfrak C_{-}\| \|W\|_{L^\infty(\mathcal K)}\\
&= O(t^{-1}).\label{estmu}
\end{align}
With the help of $\mu$, the vector function $\nu(z)$ can be represented as 
\[
\nu(z)=(\tau, \ \tau) +\frac{1}{2\pi\I}\int_{\mathcal K}\mu(s) W(s) \hat \Omega(s,z),
\]
and by virtue of \eqref{estmu}, \eqref{imp44} and \eqref{imp55}  we obtain as $z\to 0$
\be\label{nush}
\nu(z)=(\tau, \ \tau) + \frac{1}{2\pi\I } \int_{\mathcal K} (\tau, \ \tau) W(s)\begin{pmatrix} s^{-1}+zs^{-2}&0\\0&zs^{-2}\end{pmatrix}ds + E(z).
\ee
Here $E(z)$ is a holomorphic vector function in a vicinity of $z=0$ with 
\[
\|E(z)\|\leq \|W\|_{L^2(\mathcal K)}\|\mu (s)- (\tau, \ \tau)\|_{L^2(\mathcal K)}(1 +O(z))=O(t^{-2})(1 +O(z)),
\] 
and $O(z)$ is uniformly bounded for $\xi\in\mathcal I_\varepsilon$.
From \eqref{nush} and \eqref{svyaz} we get
\be\label{psun}
m^{(3)}(z) = \nu(z) M^{mod}(z)=
\tau m^{\text{mod}}(z) + \tau O(t^{-1}) E_1(z),\ee 
where $E_1(z)$ is a holomorphic vector function in a vicinity of $z=0$, uniformly bounded with respect to $\xi\in\mathcal I_\epsilon$.
The normalization conditions for $m^{(3)}$ and $m^{\text{mod}}$ imply that $\tau^2(1 + O(t^{-1}))=1$, 
that is, \[\tau=1 + O(t^{-1}).\] Together with \eqref{psun}, \eqref{defash}--\eqref{uzhe} and \eqref{connection6} this implies 

 \begin{theorem}\label{lemaprox} The following representation holds for $t\to\infty$ and $n\to\infty$ 
\be\label{finalapr}m_1(z)m_2(z)=H^2(z)\frac{\delta(z)\delta(z^{-1})}{\delta(0)\delta(\infty)}+ \beta_1(\xi, t) +\beta_2(\xi, t)z + \beta_2(\xi, t)O(z^2), \quad z\to 0,
 \ee
 where $|\beta_j(\xi, t)|\leq \frac{C(\varepsilon)}{t}$ uniformly with respect to $\xi\in\mathcal I_\varepsilon$.
\end{theorem}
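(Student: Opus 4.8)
The plan is to read off Theorem~\ref{lemaprox} from the approximation \eqref{psun} of $m^{(3)}$ by the model solution, combined with the relation \eqref{connection6} between $m$ and $m^{(3)}$ and the explicit formula for $m^{mod}$ from Lemma~\ref{lemvec3}.

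First I would note that the conjugating matrix $[\Pi(z)F(z)\E^{t(\Phi(z)-g(z))}]^{-\sigma_3}$ in \eqref{connection6} is diagonal, hence cancels in the product of the two components. By construction the origin lies outside $\mathcal K$ and away from the poles $z=\pm1$ of $M^{mod}$, so every function below is holomorphic in a fixed neighbourhood of $z=0$, and there
\[
m_1(z)m_2(z)=m^{(3)}_1(z)m^{(3)}_2(z).
\]
Next, \eqref{psun} gives $m^{(3)}(z)=\tau\,m^{mod}(z)+\tau\,O(t^{-1})E_1(z)$ with $E_1$ holomorphic near $0$ and bounded uniformly for $\xi\in\mathcal I_\varepsilon$; multiplying the components and using that $m^{mod}$ is holomorphic and bounded near $0$ yields
\[
m_1(z)m_2(z)=\tau^2\,m^{mod}_1(z)m^{mod}_2(z)+O(t^{-1})\,E_0(z),
\]
where $E_0$ is holomorphic near $0$ and uniformly bounded in $\xi$. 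The normalizations $m^{(3)}_1(0)m^{(3)}_2(0)=1=m^{mod}_1(0)m^{mod}_2(0)$ force $\tau^2(1+O(t^{-1}))=1$, i.e.\ $\tau^2=1+O(t^{-1})$, hence $\tau=1+O(t^{-1})$ since $\tau>0$; so $\tau^2$ may be replaced by $1$ at the price of one more $O(t^{-1})$ holomorphic term, absorbed into $E_0$.

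Then I would insert the explicit model solution of Lemma~\ref{lemvec3}: from $m^{mod}(z)=(\delta(z),\delta(z^{-1}))\,H(z)/\sqrt{\delta(0)\delta(\infty)}$ one gets
\[
m^{mod}_1(z)m^{mod}_2(z)=H^2(z)\,\frac{\delta(z)\delta(z^{-1})}{\delta(0)\delta(\infty)},
\]
which is precisely the leading term of \eqref{finalapr}. Finally, to obtain the stated form of the remainder I would Taylor-expand $E_0$ at $z=0$: on a fixed disk around the origin its sup-norm is $O(t^{-1})$ uniformly in $\xi\in\mathcal I_\varepsilon$, so by the Cauchy estimates all of its Taylor coefficients are $O(t^{-1})$ uniformly in $\xi$; calling the constant and linear coefficients $\beta_1(\xi,t)$ and $\beta_2(\xi,t)$ and keeping the quadratic tail in the form $\beta_2(\xi,t)O(z^2)$ (equivalently $O(t^{-1})O(z^2)$) gives \eqref{finalapr} with $|\beta_j(\xi,t)|\le C(\varepsilon)/t$.

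The argument is essentially bookkeeping; the real work—constructing $M^{par}$, controlling $\|W\|=O(t^{-1})$, and solving the singular integral equation for $\nu$—was carried out in Sections~\ref{s:par} and~\ref{s:completion}. The one point needing a little care is the asymmetry of the two components induced by the matrix Cauchy kernel $\hat\Omega$ in \eqref{nush}: because its $(2,2)$-entry vanishes at $z=0$, the second component of $\nu$ (hence of $m^{(3)}$) carries no constant-order $O(t^{-1})$ correction, and it is this structure, propagated through \eqref{nush} and \eqref{psun}, that makes the error in \eqref{finalapr} genuinely $O(t^{-1})$ rather than merely bounded and ties the quadratic tail to $\beta_2$ instead of producing an independent coefficient. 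Beyond this, one just checks—as above—that $z=0$ is safely away from $\mathcal K$ and from the poles of $M^{mod}$, so that every manipulation takes place where the functions involved are holomorphic.
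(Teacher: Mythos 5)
Your proposal is correct and follows essentially the same route as the paper: the paper's own proof is precisely the chain \eqref{psun} (obtained from \eqref{nush} and \eqref{svyaz}), the normalization argument giving $\tau=1+O(t^{-1})$, the explicit product $m_1^{mod}m_2^{mod}=H^2(z)\delta(z)\delta(z^{-1})/(\delta(0)\delta(\infty))$ from Lemma~\ref{lemvec3}, and the cancellation of the diagonal factor in \eqref{connection6}, with your Cauchy-estimate bookkeeping for the Taylor coefficients being the same uniform $O(t^{-1})$ control the paper invokes implicitly. Only your closing aside slightly overstates matters: the $O(t^{-1})$ size of the error comes from $\|W\|_{L^\infty(\mathcal K)}=O(t^{-1})$ in \eqref{imp44}--\eqref{imp55}, while the special Cauchy kernel $\hat\Omega$ serves to preserve the symmetry and to pin down $\nu_1(\infty)=\nu_2(0)=\tau$ so that the normalization argument closes.
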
 
Our next aim is to clarify the properties of the function \be\label{as77}\mathcal Y(z):=\frac{\delta(z)\delta(z^{-1})}{\delta(0)\delta(\infty)}.\ee To simplify notations denote $\frac{tB}{2\pi} - \frac{\Delta}{4\pi}=:x\in\R.$ 
Then by \eqref{deli}, 
\[\mathcal Y(z) = \frac{\theta\big(A(z) -\frac 12+ x\big)\,\theta\big(A(z) +x\big)\,
\theta\big(A(z^{-1}) -\frac 12+ x\big)\,\theta\big(A(z^{-1}) +x)}
{\delta(0)\delta(\infty)\,\theta\big(A(z) -\frac 12\big)\,\theta\big(A(z)\big)\,\theta\big(A(z^{-1}) -\frac 12\big)\,\theta\big(A(z^{-1})\big)}.\]
Since $x\in \R$, the properties of the Abel integral $A(z)$ listed in Lemma \ref{propabel} imply that $\theta\big(A(z) -\frac 12+ x\big)\theta\big(A(z) +x\big)$ has a simple zero  at $(\mu(x), \pm)$ on one of the sheets of the Riemann surface $\mathbb X$ with projection $\mu(x)\in[y^{-1}, y]$, and a zero $(\mu^{-1}(x), \mp)$ on the other sheet. The function $\theta\big(A(z^{-1}) -\frac 12+ x\big)\theta\big(A(z^{-1}) +x)$ has zeros at $(\mu(x), \mp), (\mu^{-1}(x), \pm)$. The denominator  of $\mathcal Y(z)$ has double zeros (as points on the Riemann surface) at $y$ and $y^{-1}$. Since $y$ and $y^{-1}$ are the branching points on $\mathbb X$,  then these zeros are simple in the variable $z$ on the complex plane. We observe that $\mathcal Y(z)$, being considered as a function in the domain $\C\setminus [q^{-1}, q]$ identified with the upper sheet of $\mathbb X$,  does not have jumps on $[q^{-1}, q]$ and tends to $1$ as $z\to\infty$.  Hence $\mathcal Y(z)$ is a rational function with simple poles at $y$ and $y^{-1}$ and simple zeros at $\mu(x)$ and $\mu^{-1}(x)$. Therefore,
\[
\mathcal Y(z)=\frac{(z-\mu(x))(z-\mu^{-1}(x))}{(z-y)(z - y^{-1})},
\] 
and
\be\label{as78}
\aligned &m_1^{mod}(z)m_2^{mod}(z)=H^2(z)\mathcal Y(z)=\frac{(z-\mu(x))(z-\mu^{-1}(x))}{\sqrt{(z-q)(z-q^{-1})(z-y)(z-y^{-1})}}\\
&= \frac{z+z^{-1} - \mu(x) -\mu^{-1}(x)}{\sqrt{(z + z^{-1} - q - q^{-1})(z + z^{-1} - y - y^{-1})}}=\frac{\la - \la (n,t)}{\sqrt{(\la - (b - 2 a))(\la - \la_y)}},
\endaligned
\ee
where $\la(n,t)=\frac{\mu(x) + \mu^{-1}(x)}{2}\in [\la_y, -1]$. We emphasize that $\mu(x)=\mu(x(n,t))$ depends on $n$ and $t$ via 
\[x= x(n,t)=-\frac{n\Lambda}{4\pi \I} - \frac{t U}{4\pi \I} -\frac{\Delta}{4\pi}.\]
In particular,
$\mu(x(0,0))=\mu(-\frac{\Delta}{4\pi})$.

Let $\Psi(n,t,p,\xi)$, $p\in \mathbb M(\xi)$, be the Baker--Akhiezer function of a finite gap Toda lattice solution $\{\hat a(n,t,\xi), \hat b(n,t,\xi)\}$ associated with the spectrum on the set $[b-2a, \la_y]\cup [-1, 1]$
and with initial divisor point $(\la(0,0), \pm)$, where we choose sign $+$ if $\mu(-\frac{\Delta}{4\pi})\in [-1, y]$ and sign $-$ if $\mu(-\frac{\Delta}{4\pi})\in [y^{-1}, -1]$. Here we took into account that the set $ \{z: |z|<1\}\setminus [y, q]$ is in one-to-one correspondence with the upper sheet of $\mathbb M(\xi)$ (cf.\ Section \ref{sec:5}). 
Then (cf.\ \cite{tjac})
\[\Psi(n, t, p,\xi)\Psi(n,t,p^*,\xi)=\frac{\la - \la(n,t)}{\la - \la(0,0)},\]
where $\la(n,t)$ is the   projection of the zero divisor for $\Psi$. Equation \eqref{defdelta1} and our considerations above justify this claim.
In particular, according to the trace formulas we have
\begin{align*}
& \hat b(n,t,\xi)=\frac{1}{2}\left(b-2a +\la_y(\xi) - 2\la(n,t)\right), \\
&\hat a(n,t,\xi)^2 + \hat a(n-1, t,\xi)^2=\\
&\qquad  \frac{1}{4} \left( 2+ (b-2a)^2 +\la_y(\xi)^2 - 2 \la(n,t)^2 - \frac{1}{2}\left(b-2a +\la_y(\xi) - 2\la(n,t)\right)^2 \right).
\end{align*}
The operator $\hat H(t)=\hat H(t,\xi)$ associated with these coefficients is reflectionless (since it is finite gap) and has the following Green's function (cf.\ \cite{tjac})
\[
\hat G(\la, n,n, t)=-\frac{\la -\la(n,t)}{\sqrt{(\la^2 -1)(\la - (b - 2 a))(\la - \la_y)}}=-\frac{m_1^{mod}(z)m_2^{mod}(z)}{\sqrt{\la^2 -1}}.
\]
Combining \eqref{gerald1}, \eqref{asympnew}, \eqref{imp98}, \eqref{finalapr}, \eqref{as77} and \eqref{as78} we arrive at 
Theorem~\ref{thor7}, where we switched back to the notation $\gamma(\xi)$ instead of $\lambda_y$.

 \section{Discussions} \label{sec:disc}
In this section we briefly discuss how to derive and justify the asymptotics in the left region $\mathcal I_{1, \varepsilon}:=[ 
 \xi_{cr,1}+\varepsilon, \xi_{cr,1}^\prime-\varepsilon]$. A justification of the asymptotics in the middle region $(\xi_{cr,1}^\prime, \xi_{cr}^\prime)$ which takes into account the presence of resonances and the discrete spectrum in the gap 
 $(b+2a, -1)$ is given in \cite{EM}.
 
First of all, if left and right background spectra are of equal length, that is, in  case $a=1$, there is no need for an independent extensive  study. Indeed, for arbitrary $a>0$ let us consider the Toda lattice associated with the functions
\[
\breve a(n, t)= \frac{1}{2a} a\big(-n-1, \tfrac{t}{2a}\big), \quad \breve b(n, t)= 
\frac{1}{2a} \Big(b-b\big(-n, \tfrac{t}{2a}\big)\Big),
\]
where $\{a(n,t), b(n,t)\}$ is the solution of \eqref{tl}--\eqref{main}, \eqref{decay}.
It is straightforward to check that $\{\breve a(n,t), \breve b(n,t)\}$ satisfy  \eqref{tl} associated with 
the initial profile
\begin{align*} \label{ini2}
\begin{split}	
& \breve a(n,0)\to \frac {1}{2a}, \quad \breve b(n,0) \to \frac{b}{2a}, \quad \mbox{as $n \to -\infty$}, \\
& \hat a(n,0)\to \frac{1}{2}, \quad \hat b(n,0) \to 0, \quad \mbox{as $n \to +\infty$}.
\end{split}
\end{align*}
If $a=1$, the region $(\xi_{cr}^\prime, \xi_{cr})$ for $\{\breve a (n,t), \breve b(n,t)\}$ coincides with $(\xi_{cr,1}^\prime, \xi_{cr,1})$ for $\{a(n,t), b(n,t)\}$, and therefore we can simply apply the results of Theorem \ref{thor7}. This approach is applicable for arbitrary $a$ when $b-2a>1$, that is, for rarefaction waves. Unfortunately, for the shock waves and $a\neq 1$ we  are not able to match these regions. Moreover, even if they would match, this approach would still require cumbersome computations. Indeed,  applying the asymptotics from Theorem~\ref{thor7} to 
\[
 a(m, \breve t) = 2a \breve a(n,t),\quad 
b(m,\breve t) =b - 2a\breve b(n+1,t),\quad  
\breve t = \frac{t}{2a},\quad m=-n-1,\]
 one has to take into account the new time  and space variables and recompute $\mathfrak b$ 
and $\mathfrak \tau$ periods in theta functions, as far as the initial Dirichlet eigenvalues. 

As we see from the above considerations, the form of the $g$-function in $\mathcal I_{1, \varepsilon}$ is dictated by the spectrum $[b-2a, b+2a]\cup [\gamma(\xi), 1]$, where $\gamma(\xi)\in [-1, 1]$. Therefore on the $z$-plane the images of this point will belong to  $\mathbb T\setminus \{-1, 1\}$. Denote them by $z_0$ and $\overline{z_0}$. It is clear that the piece-wise constant jump matrix for the respective model problem will appear on the union of the real interval  and the arc,
\[ [q, q^{-1}]\cup \{z\in \mathbb T: \re z<\re z_0\}.
\]
A construction of the vector and matrix model solutions in theta functions for such a contour in terms of $z$ is quite  bulky and not transparent for further analysis.
For this reason it is more convenient to study the asymptotics for $\xi\in\mathcal I_{1, \varepsilon}$  using the other vector RHP stated with respect to  the left scattering data $R_\ell(\zeta, t)$, $T_\ell(\zeta, t)$  on the $\zeta$-plane (cf.\ \eqref{spec5}).
The left phase function \eqref{phil} is used and replaced by a suitable $g$-function; the structure of the jump matrices and the further analysis is completely analogous to the one given in this paper. It allows us to conclude that the error term in this region is described in terms of Airy functions and is of order $O(t^{-1})$. The error term in the middle region is of order $O(\E^{-C(\varepsilon) t})$, \cite{EM}.

Our last remark concerns  condition \eqref{decay}. The value  of $\rho$ given by \eqref{nu} can be significantly reduced up to any $\rho>0$ if the point $q_1$ is non-resonant, because in the non-resonant case we do not need to apply the lens mechanism around the domains
$\Omega_{\mathfrak r}$ and $\Omega_{\mathfrak r}^*$. It was used to remove a possible singularity of $m$ at $q_1$. Moreover,  the condition $\rho>-\log|q_1|$ is sufficient to remove the singularity. Condition \eqref{nu} was chosen to 
achieve less cumbersome formulas for the jump matrices. Note that condition \eqref{est94} is only essential in the resonant case, and one can expect that the asymptotics in Theorem \ref{thor7} hold in the region \eqref{mathcalI} if $q_1$ is non-resonant.
\vskip 4mm
\noindent{\bf Acknowledgments.} We are grateful to Alexander Minakov for useful discussions. I.E. and A.P. are indebted to the Department of Mathematics at the University of Vienna for its hospitality and support during the autumn of 2019, when this work was done.

\end{document}